\newtheorem{theorem}{Theorem}{\bfseries}{\itshape}
{\bfseries}{\itshape}
\newtheorem{lemma}[theorem]{Lemma}{\bfseries}{\itshape}
\newtheorem{proposition}[theorem]{Proposition}{\bfseries}{\itshape}
{\bfseries}{\itshape}
\newtheorem{remark}[theorem]{Remark}{\bfseries}{\upshape}
\newtheorem{assumption}[theorem]{Assumption}{\bfseries}{\itshape}
\DeclareMathAlphabet{\pazocal}{OMS}{zplm}{m}{n}
\let\mathcal\pazocal
\numberwithin{figure}{section}
\renewcommand{\paragraph}[1]{{\smallskip\noindent\textbf{#1.}}}
\definecolor{myred}{rgb}{0.8,0,0}  
\definecolor{orange}{rgb}{1,0.53,0}
\definecolor{mygreen}{rgb}{0,0.5,0}
\definecolor{myviolet}{rgb}{0.58,0,0.83}
\noindent\textbf{Proof of {#1}:}}%
\def \R{\mathbb{R}}               
\def \N{\mathbb{N}}               
\def \P{\mathbb{P}}             
\def \1{{\bf 1}}                
\def \0{{\bf 0}}
\def\utility{\mathcal U}
\def\revlevel{\overline{\mu}}
\def\revspeed{\kappa}
\def\zustand{m}
\def\driftinitial{\overline{\zustand}_0}
\def\filterinitial{\zustand_0}
\def\covinitial{\overline{\variance}_0}
\def\condcovinitial{\variance_0}
\def\Sigmak{\Delta\Qpro_{t_k}^{\HC}}
\def\minusSigmak{{+}}
\def\plusSigmak{{-}}
\def\Abound{A^F}
\def\Bbound{B^F}
\def\Cbound{C^F}
\def\cpsi{\gamma}    
\def\variance{q}
\def\pointp{p}
\def\HF{F}
\def\HR{R}
\def\HD{J}
\def\contexp{J}
\def\HC{Z}
\def\nAktien{d}
\def\nExperten{n}
\def\nWienerRendite{d_1}
\def\nWienerDrift{d_2}
\def\nWienerExperten{d_3}
\def\varianceexp{\Gamma}
\def\volR{\sigma_R}
\def\voldrift{\sigma_{\mu}}
\def\volexp{\sigma_{\HD}}
\def\alphamm{\alpha}
\def\betam{\beta}
\def\drift{\mu}
\def\Radon{\Lambda}
\def\generator{\mathcal L}
\def\reward{D}
\def\valuefkt{V}
\def\valueorigin{\mathcal V}
\def\vterm{v}
\def\rewardorigin{\mathcal D}
\def\rewardconstant{C_{\valueorigin}}
\def\wealth{X}
\def\Jpi{\overline \rewardorigin_0}
\def\Mpro{M}
\def\Qpro{Q}
\def\Mstate{\Mpro}   
\newcommand\norm[1]{\left\lVert#1\right\rVert}
\def \trace{\operatorname{tr}}
\def \E{{\mathbb{E}}}
\def \Ebar{ \overline{\E}^H}  
\def \EbarZ{\overline{\E}^\HC}  
\def\Ashort{{\widetilde A}}
\def\Bshort{{\widetilde B}}
\def\Cshort{{\widetilde C}}
\def\diffop{D}
\def\Update{{\Psi}}
\begin{document}
	
	\title[Portfolio Optimization with Expert Opinions at  Fixed Arrival Times]{Power Utility Maximization with Expert Opinions at Fixed Arrival Times in a Market with Hidden Gaussian	Drift}


\author[1]{\fnm{Abdelali} \sur{Gabih}}\email{a.gabih@uca.ma}

\author[2]{\fnm{Hakam} \sur{Kondakji}}\email{kondakji@hsu-hh.de}

\author*[3]{\fnm{Ralf} \sur{Wunderlich}}\email{ralf.wunderlich@b-tu.de}

\affil[1]{ \orgname{Equipe de Modélisation et Contrôle des Systèmes Stochastiques et Déterministes, 
		Faculty of Sciences,  Chouaib Doukkali University}, \orgaddress{ \postcode{24000}, \state{El Jadida}, \country{Morocco}}}

\affil[2]{\orgdiv{Faculty of Economics und Social Sciences}, \orgname{Helmut Schmidt University}, \\\orgaddress{\street{P.O. Box 700822}, \city{Hamburg}, \postcode{22008},  \country{Germany}}}

\affil*[3]{\orgdiv{Institute of Mathematics}, \orgname{Brandenburg University of Technology Cottbus-Senftenberg}, \orgaddress{\street{P.O. Box 101344}, \city{Cottbus}, \postcode{03013}, \country{Germany}}}


\abstract{
	In this paper we study optimal trading strategies in a financial market in which   stock returns depend on a hidden
	Gaussian mean reverting drift process.  Investors obtain information on that drift  by observing stock returns. Moreover, expert opinions in the form
	of signals about the current state of the drift arriving at fixed and known dates are included in the analysis.		
	Drift estimates are based on Kalman filter techniques. They are used to transform a power utility maximization problem under partial information into an optimization problem under full
	information where the state variable is the filter of the drift. The dynamic programming equation for this problem is studied and closed-form solutions for the value function and the optimal trading strategy of an investor are derived. They allow to quantify the monetary value of information delivered by the expert opinions. We illustrate our theoretical findings by  results of extensive numerical experiments.}

\keywords{Power utility maximization, partial	information, stochastic optimal control, Kalman-Bucy filter, expert opinions, Black-Litterman model.}


\pacs[MSC Classification]{91G10  93E20 93E11 60G35}

\maketitle

\section{Introduction}
In dynamic portfolio selection problems optimal trading strategies depend crucially on
the drift of the underlying asset return processes. That drift describes the expected asset returns, varies over time and is driven by certain stochastic  factors such as dividend yields, the  firm’s return on
equity, interest rates and macroeconomic  indicators. This was already addressed in the seminal paper of Merton \cite[Sec.~9]{Merton (1971)}. We also refer to other  early articles  by  Bielecki and Pliska \cite{Bielecki_Pilska (1999)},  Brennan, Schwartz, Lagnado \cite{Brennan et al (1997)}  and Xia \cite{Xia (2001)}. The dependence of the drift process on these factors is usually  not perfectly known and some of the factors may be not directly observable. Therefore, it is reasonable to model the drift as an unobservable stochastic process for which only statistical estimates are available. Then  solving the associated portfolio problems has to be based on such estimates of the drift process.

For the one-period Markowitz model the surprisingly large  impact of statistical errors in the estimation of model parameters on mean-variance optimal portfolios is often reported in the literature, e.g.~by Broadie \cite{Broadie (1993)}.  Estimating the drift  with a reasonable degree of precision based only on historical asset prices is known to be almost impossible. This is nicely shown  in Rogers \cite[Chapter 4.2]{Rogers (2013)} for a model in which the drift is even constant.  For a reliable estimate extremely long time series of data are required which are usually not available. Further, assuming a constant drift over longer periods of time is quite unrealistic. Drifts tend to fluctuate 	randomly over time and drift effects are often overshadowed by volatility.

For	these reasons, portfolio
managers and traders try to diversify their  observation sources and also rely on external sources of information such as news, company reports, ratings and benchmark values. Further, they increasingly turn to  data outside of the traditional sources that companies and financial markets provide. Examples are social media posts, internet search, satellite imagery, sentiment indices, pandemic data,  product review trends and are often related to Big Data analytics.
Finally,  they use views of  financial analysts or just their own intuitive views on the future asset performance.

In the literature these outside sources of information are coined  \textit{expert opinions} or more generally \textit{alternative data}, see Chen and Wong \cite {Chen Wong (2022)},  Davis and Lleo \cite{Davis and Lleo (2022)}. In this paper we will use the first term. After an appropriate mathematical modeling as additional noisy observations they are included  in the drift estimation and the construction of optimal portfolio strategies.	That approach goes back to the celebrated Black-Litterman model which is an	extension of the classical one-period Markowitz model, see Black and
Litterman~\cite{Black_Litterman (1992)}. The idea is to improve return predictions using expert opinions by means of a  Bayesian updating of the drift estimate.

Instead of a static one-period model  we consider in this paper a continuous-time model for asset prices. Additional information in the form
of expert opinions arrive repeatedly over time. Davis and Lleo  \cite{Davis and Lleo (2013_1)} coined this approach ``Black--Litterman in Continuous Time'' (BLCT). More precisely,   we study a  hidden Gaussian model   where asset returns are driven by an unobservable mean-reverting Gaussian process.
Information on the drift is of mixed type. First investors observe stock prices or equivalently the return process. Moreover, investors may have access to expert opinions arriving at already known discrete time points in a form of unbiased drift estimates.  Since the investors' ability to construct good trading strategies depends on the quality of the hidden drift estimation we study a filtering problem. There the aim is to find the conditional distribution of the drift given the available information drawn from the return observations and expert opinions.

For investors who observe only the return process that filter known as the classical  Kalman--Bucy filter, see for example Liptser and Shiryaev~\cite{Liptser-Shiryaev}. Based on this one can derive the filter for investors who also observe expert opinions by a Bayesian update of the  current drift estimate at each information date. This constitutes  the above mentioned   continuous-time version of the static Black--Litterman approach.

\smallskip
Utility maximization problems for partially informed investors have been intensively studied in the last years. For models with Gaussian drift we refer to to Lakner \cite{Lakner (1998)} and  Brendle \cite{Brendle2006}. Results for models in which the drift is described by a continuous-time hidden Markov chain  are given in Rieder and Bäuerle \cite{Rieder_Baeuerle2005}, Sass and Haussmann \cite{Sass and Haussmann (2004)}  and more recently by  Chen and Wong \cite {Chen Wong (2022)}. A good overview with further references and generalization  can be found  in Björk et al.~\cite{Bjoerk et al (2010)}. 

For the literature on BLCT in which expert opinions are included we refer to a series of papers \cite{Gabih et al (2014),Gabih et al (2019) FullInfo,Sass et al (2017),Sass et al (2021),Sass et al (2022)} of the present authors and of Sass and Westphal. They investigate utility maximization problems for investors with logarithmic preferences in market models with a hidden Gaussian drift process  and discrete-time expert opinions. The case of continuous-time expert opinions and power utility maximization is treated in a series of papers by Davis and Lleo, see \cite{Davis and Lleo (2013_1),Davis and Lleo (2020),Davis and Lleo (2022)} and the references therein.  Power utility maximization problems for expert opinions arriving randomly at the jump times of a Poisson process  are treated in the recent work \cite{Gabih et al (2022) PowerRandom}. Similar portfolio problems for drift processes described by  continuous-time hidden Markov chains have been studied in  Frey et al.~\cite{Frey et al. (2012),Frey-Wunderlich-2014}. Finally,  we refer to our companion paper \cite{Gabih et al (2022) Nirvana} where we investigate the well posedness of power utility maximization problems which are addressed in the present paper.

\smallskip
 \paragraph{Our contribution}
	The new contribution to the literature in the present paper is the detailed analysis of the case of power utility maximization for market models with a hidden Gaussian drift and discrete-time expert opinions arriving at fixed and known time points. That case  was not yet treated in the literature and is only considered in the PhD thesis of Kondakji \cite{Kondkaji (2019)} on which this paper is based. Only recently, a few results of  \cite{Kondkaji (2019)} were briefly mentioned  in  Sass et al.~\cite{Sass et al (2022)} and applied in numerical experiments.   Note that the approach in  \cite{Gabih et al (2022) PowerRandom} for the case of randomly arriving expert opinions cannot be adopted to the case of fixed arrival times considered in this paper. In  \cite{Gabih et al (2022) PowerRandom} the dynamic programming equation appears as a partial integro-differential equation which requires a numerical solution. Instead, in this paper we can derive closed-form solutions to the derived dynamic programming equation. 

Our main results are presented in Theorems \ref{backward_recursion_I} and \ref{backward_recursion_II}. Here we present a backward recursion for the value function and the optimal strategy of the partially informed investors observing also discrete-time expert opinions,  and discuss verification issues.	Another contribution are results of extensive numerical experiments which we present in Sec.~\ref{numeric_result}. There we study in particular the asymptotic properties of filters, value functions and optimal strategies for high-frequency experts and the monetary value of expert opinions.   These studies on high-frequency experts and their limiting behavior provide a link to the results for continuous-time expert opinions which are available from the works of Davis and Lleo, see \cite{Davis and Lleo (2013_1),Davis and Lleo (2020),Davis and Lleo (2022)}.

\smallskip	
The paper is organized as follows. In Sec.~\ref{market_model} we
introduce the model for our financial market including the expert
opinions and define information regimes for investors with different
sources of information. Further,  we formulate the portfolio optimization problem. Sec.~\ref{Filtering} states for the different information regimes the filter equations for the corresponding conditional mean and conditional covariance process. Then it reviews properties of the filter, in particular the asymptotic filter behavior for high-frequency expert opinions. Sec.~\ref{Utility_Max}
is devoted to the solution of the power utility 	maximization problem. That problem   is reformulated as an equivalent stochastic optimal control problem which can be solved by dynamic programming techniques. Solutions are presented for the fully informed investor.  Sec.~\ref{UtilityMaxDiffusion}  presents the solution  for partially informed investors  combining return observations with diffusion type expert opinions and Sec.~\ref{UtilityMaxDiscreteExperts} studies the case of investors observing  discrete-time experts.
Sec.~\ref{numeric_result} illustrates the theoretical findings by numerical results.  

\medskip
\paragraph{Notation} Throughout this paper, we use the notation $I_d$ for the identity matrix in $\R^{d\times d}$,  $0_{d}$
denotes the null vector in $\R^d$, $0_{d\times m}$ the null matrix in $\R^{d\times m}$. For a symmetric and positive-semidefinite matrix $A\in\R^{d\times d}$ we call a symmetric and positive-semidefinite matrix $B\in\R^{d\times d}$ the \emph{square root} of $A$ if $B^2=A$. The square root is unique and will be denoted by $A^{1/2}$. For a vector $X$ we denote by $\norm{X}$ the Euclidean norm.  Unless stated otherwise, whenever $A$ is a matrix, $\lVert A\rVert$ denotes the spectral norm of $A$.

\section{Financial Market and Optimization Problem}
\label{market_model}
\subsection{Price Dynamics}
\label{PriceDynamics}  We model a financial market with one
risk-free and multiple risky assets. The  setting is based on Gabih
et al.~\cite{Gabih et al (2014),Gabih et al (2019) FullInfo} and
Sass et al.~\cite{Sass et al (2017),Sass et al (2022),Sass et
	al (2021)}.
For a  fixed date $T>0$ representing the investment horizon, we work
on a filtered probability space $(\Omega,\mathcal{G},\mathbb{G},\P)$,
with filtration $\mathbb{G}=(\mathcal {G}_t)_{t \in [0,T]}$
satisfying the usual conditions. All processes are assumed to be
$\mathbb{G}$-adapted.

 We consider a market model  for one risk-free asset 
	and $\nAktien$ risky securities.  We follow an  approach frequently used in the literature on optimal portfolio selection and consider  discounted asset prices with  the risk-free asset as numéraire. Then the  risk-free asset has a 	constant price $S^0_t=1$.
	The excess  log returns or risk premiums $R=(R^{1},\ldots,R^{\nAktien})$ of the risky securities  are described by stochastic processes defined by the SDE
	\begin{align}
		dR_t=\mu_t\, dt+\volR\, dW^{R}_t, \label{ReturnPro}
	\end{align}
	driven by  a $\nWienerRendite$-dimensional $\mathbb{G}$-adapted	Brownian motion $W^{\HR}$ with  $\nWienerRendite\geq\nAktien$. In the remainder of this paper we will call $R$ simply \textit{returns}.    $\mu=(\mu_t)_{t\in[0,T]}$ denotes the stochastic drift process which is described in detail below. The
volatility matrix $\volR\in\mathbb
R^{\nAktien\times\nWienerRendite}$ is  assumed to be constant over
time such that $\Sigma_{R}:=\volR\volR^{\top}$ is positive definite. 
In this setting the  discounted price process $S=(S^1,\ldots,S^{\nAktien})$ of
the risky securities reads as
\begin{align}
	dS_t&=diag(S_t)\, dR_t,~~ S_0=s_0, \label{stockmodel}
\end{align}
with some fixed initial value $s_0=(s_0^1,\ldots,s_0^d)$. Note
that for the solution to the above SDE it holds
\begin{align*}
	\log S_t^{i}-\log s_0^{i} &= \int\limits_0^t \drift_s^{i}ds
	+\sum\limits_{j=1}^{\nWienerRendite}\Big(
	\sigma_R^{ij}W_t^{R,j}-\frac{1}{2} (\sigma_R^{ij})^2 t\Big)
	=R_t^{i}-\frac{1}{2}\sum\limits_{j=1}^{\nWienerRendite}
	(\sigma_R^{ij})^2 t ,\quad i=1,\ldots,\nAktien.
\end{align*}
So we have the equality $\mathbb{G}^R = \mathbb{G}^{\log S} =
\mathbb{G}^S$, where  for a generic process $X$ we denote by
$\mathbb{G}^X$ the filtration generated by $X$. This is useful since
it allows to work with $R$ instead of $S$ in the filtering part. 

The dynamics of the drift process $\mu=(\mu_t)_{t\in[0,T]}$ in \eqref{ReturnPro}
are
given by the stochastic differential equation (SDE)
\begin{eqnarray}
	\label{drift} d\mu_t=\revspeed(\revlevel-\mu_t)\, dt+\voldrift\,
	dW^{\mu}_t,
\end{eqnarray}
where $\revspeed\in\mathbb R^{\nAktien\times\nAktien}$,
$\voldrift\in\mathbb R^{\nAktien\times\nWienerDrift}$ and
$\revlevel\in\mathbb R^{\nAktien} $ are constants such that 
all eigenvalues  $\revspeed$ have a positive real part (that is, $-\kappa$ is a stable matrix)  and $\Sigma_{\mu}:=\voldrift\voldrift^{\top}$
is positive definite. Further, $W^{\mu}$ is a
$\nWienerDrift$-dimensional Brownian motion such that
$\nWienerDrift\geq\nAktien$.   For the sake of simplification and
shorter notation we assume  that the Wiener processes $W^{R}$ and
$W^{\mu}$ driving the return and drift process, respectively, are
independent. We refer to Brendle \cite{Brendle2006},  Colaneri
et al. \cite{Colaneri et al (2021)} and  Fouque et al.~\cite{Fouque et al. (2015)} for
the general case. Here, $\revlevel$ is the mean-reversion level,
$\revspeed$ the mean-reversion speed and $\voldrift$ describes the
volatility of $\mu$. The initial value $\drift_0$ is assumed to be a
normally distributed random variable independent of $W^{\mu}$ and
$W^{R}$ with mean $\driftinitial\in \R^{\nAktien}$ and covariance
matrix $\covinitial\in\mathbb R^{\nAktien\times\nAktien}$ assumed to
be symmetric and  positive semi-definite.  It is well-known 
that the solution to  SDE \eqref{drift}  is known  as
Ornstein-Uhlenbeck
process which is  a  Gaussian process  given by  
\begin{align}
	\label{mu_explicit}
	\mu_t &= \revlevel +e^{-\revspeed t}\Big[(\mu_0 -\revlevel) +
	\int_0^t e^{\revspeed s} \voldrift dW^{\mu}_s\Big],\quad t\ge 0.
\end{align}

\subsection{Expert Opinions}
\label{Expert_Opinions}  We assume that investors observe the return
process $R$ but they neither observe the factor process $\mu$ nor
the Brownian motion $W^{R}$. They do however know the model
parameters such as $\volR,\revspeed, \revlevel, \voldrift $  and
the distribution $\mathcal{N}(\driftinitial,\covinitial)$ of  the
initial value $\drift_0$. Information about the drift $\mu$ can be
drawn from observing the returns $R$. A special feature of our model
is that investors may also have access to additional information
about the drift in form of \textit{expert opinions} such as news,
company reports, ratings or their own intuitive views on the future
asset performance. The expert opinions provide  noisy signals about
the current state of the drift arriving at known deterministic time points
$0=t_0<t_1<\ldots<t_{n-1}<T$. For the sake of convenience we also write $t_n =
T$ although no expert opinion arrives at time $t_n$.
The signals or ``the expert views'' at time $t_k$ are modelled by
$\R^\nAktien$-valued  Gaussian random vectors
$Z_k=(Z_k^1,\ldots,Z_k^{\nAktien})^{\top}$ with
\begin{align}
	\label{Expertenmeinungen_fest}
	Z_k=\drift_{t_k}+{\varianceexp}^{\frac{1}{2}}\varepsilon_k,\quad
	k=0,\ldots,n-1,
\end{align}
where the matrix  $\varianceexp\in\R^{\nAktien\times\nAktien}$ is
symmetric and positive definite.
Further,  $(\varepsilon_k)_{k=0,...,n-1}$ is a sequence of independent
standard normally distributed random vectors, i.e.,
$\varepsilon_k\sim~\mathcal{N}(0,I_d)$. It is  also independent of
both the Brownian motions $W^R, W^\mu$ and the initial value $\mu_0$
of the drift process. That means that, given $\mu_{t_k}$, the expert
opinion $Z_k$ is $\mathcal{N}(\mu_{t_k},\varianceexp)$-distributed.
So, $Z_k$ can be considered as an unbiased estimate of the unknown
state of the drift at time $t_k$. 

Modeling expert opinions as normally distributed random variables corresponds well  to a variety of additional information on average stock returns available in real-world markets.  We refer to  Davis and Lleo \cite{Davis and Lleo (2020)} for more details about an appropriate preprocessing, debiasing and approximation of such extra information by Gaussian models. Let us  briefly sketch the mathematical modeling of analyst views in terms of confidence intervals. Inspired by \cite{Davis and Lleo (2020)}  we consider the following example of a view at time $t=t_k$:
%
\textit{
	``My research leads me to believe that the average stock return lies within a	range of $6\%$ to $10\%$, and I'm $90\%$ confident about this''.}
%
This view can be treated as a $90\%$-confidence interval for the unknown mean $\mu_{t_k}$ of a Gaussian distribution centered at $0.08$,  which is the observed $Z_k$. The corresponding variance $\Gamma$ is chosen such that the boundaries of the interval are $0.06$ and $0.10$.
We also want to emphasize that the Gaussian expert opinions allow to work with  Kalman filtering techniques. For other distributions, in general no closed-form filters are available. 

 The matrix $\varianceexp$ is a
	measure of the expert's reliability.  The diagonal entries  $\varianceexp_{ii}$ are just the variances of the expert's
	estimates of the drift for the $i$-th asset at time $t_k$: the larger $\varianceexp_{ii}$ the less
	reliable is the expert's view about $\mu_{t_k}^i, i=1,\ldots,d.$  The off-diagonal entries describe the correlation between the experts’
	views.  An example for a  diagonal matrix $\varianceexp$, i.e., uncorrelated views $Z_k^i$ about the  $i$-th asset's drift, is obtained if the random vector  $Z_k$ contains  the views $Z_k^i$ of  $d$ independent analysts estimating the drift of a single asset only. That case is hard to justify in reality where one can observe so-called ``groupthink'' leading to positive correlations between the views of the analysts. 	However correlations between the views are hard to calibrate. We refer to Davis and Lleo \cite{Davis and Lleo (2020)} for more details. 

The above model of discrete-time expert opinions can be modified such that expert opinions arrive not at fixed and known dates but at random times $(T_k)_{k\in\N}$. That approach together with results for filtering  and maximization of log-utility was studied in detail in Sass et al.~\cite{Sass et al (2021)}. There the arrival times are modeled as the jump times of a Poisson process. 	The  maximization of power utility   is considered in \cite{Gabih et al (2022) PowerRandom}.

One may also allow for relative expert views where experts
give an estimate for the difference in the drift of two stocks
instead of absolute views. This extension  can be studied in
Sch\"ottle et al.~\cite{Schoettle et al. (2010)} where the authors
show how to switch between these two models for expert opinions by
means of a pick matrix.


In addition to expert opinions arriving  at discrete time points  we
will also consider expert opinions arriving continuously over time  as in
Davis and Lleo \cite{Davis and Lleo (2013_1),Davis and Lleo (2020)} who called this approach ``Black--Litterman in Continuous Time''. This is motivated by the results of Sass et al.~\cite{Sass et al (2021),Sass et al (2022)} who show that asymptotically as the arrival frequency tends to infinity
and the  expert's variance $\Gamma$ grows linearly in that frequency   the information drawn from these expert
opinions is  essentially the same as the information one gets from
observing yet another diffusion process. This diffusion process can
then be interpreted as an expert who gives a continuous-time
estimation about the current state of the drift.  Another interpretation is that the diffusion process models returns of assets which are not traded in the portfolio but depend on the same stochastic factors and are observable by the investor. Let these continuous-time expert opinions 
be given by the diffusion process
\begin{align}\label{continuous-expert}
	d\contexp_t = \mu_t dt +\volexp dW_t^{\HD}
\end{align}
where $W_t^{\HD}$ is a $\nWienerExperten$-dimensional Brownian
motion independent of $W_t^R$ and $W^{\mu}$ and such that with
$\nWienerExperten\geq\nAktien$. The volatility   matrix
$\volexp\in\mathbb R^{\nAktien\times\nWienerExperten}$ is assumed to
be constant over time such that the matrix
$\Sigma_{\HD}:=\volexp\volexp^{\top}$ is positive definite. In
Subsecs.~\ref{LogUtility} and \ref{Numerik_Konvergenz_Wertfunktion} we show that  
based on this model and on the diffusion approximations provided in
\cite{Sass et al (2022)}   one can find efficient approximative solutions to
utility maximization problems for partially informed investors
observing high-frequency discrete-time expert opinions.

\subsection{Investor Filtration}
\label{Investor_Filtration}  We consider various types of investors
with different levels of information. The information available to
an investor is described by the \textit{investor filtration}
$\mathbb{F}^H=(\mathcal{F}^H_t)_{t\in[0,T]}$. Here, $H$ denotes the
information regime for which we
consider the cases $H=\HR,\HC,\HD,\HF$  and the investor with filtration
$\mathbb{F}^H=(\mathcal{F}^H_t)_{t\in[0,T]}$ is called the $H$-investor.
The $\HR$-investor observes only the return process $R$, the $\HC$-investor combines
return observations with the discrete-time expert opinions
$Z_k$ while  the $\HD$-investor observes the return process together with the continuous-time expert $\HD$. Finally,  the $\HF$-investor has full information  and can observe the drift process $\mu$  directly and of course the return process.
For stochastic drift this case is not realistic, but we use it as a
benchmark.  It will serve as a limiting case for high-frequency
expert opinions with fixed covariance matrix $\varianceexp $,  see Subsec.~\ref{filter_high_frequency}, Thm.~\ref{theorem_Konv_Fn}.

The $\sigma$-algebras $\mathcal{F}^H_t$ representing the  $H$-investor's information  at time $t\in[0,T]$ are defined at initial time  $t=0$ by $\mathcal{F}^\HF_0=\sigma\{\drift_0\}$ for the fully informed investor and by  $\mathcal{F}^H_0=\mathcal{F}^I_0\subset \mathcal F_0^{\HF}$ for $H=\HR,\HC,\HD$, i.e., for the partially informed investors. Here, $\mathcal{F}_0^I$ denotes the $\sigma$-algebra representing  prior information about the initial drift $\mu_0$. More details on $\mathcal{F}_0^I$ are given below. For $t\in (0,T]$ we define 
\[\begin{array}{rl}
	\mathcal {F}_t^{\HR}&=\sigma(R_s,~ s\le t)  \vee \mathcal{F}_0^I, \\[0.5ex]
	\mathcal {F}_t^{\HC}&=\sigma(R_s, s\le t,\,~(t_k,Z_k),~ t_k\le t) \vee \mathcal{F}_0^I, \\[0.5ex]
	\mathcal {F}_t^{\HD}&=\sigma(R_s,\contexp_s,~ s\le t) \vee \mathcal{F}_0^I, \\[0.5ex]
	\mathcal {F}_t^{F}&=\sigma(R_s, \mu_s,~ s\le t).
\end{array}
\]
We assume that the above $\sigma$-algebras $\mathcal{F}_t^H$
are augmented by the null sets 
of $\P$.

Note that all partially informed investors ($H=\HR,\HD, \HC$) start  at $t=0$ with the same initial information given by
$\mathcal{F}_0^I$. The latter models
prior knowledge about the drift process at time $t=0$, e.g., from
observing  returns  or expert opinions in the past, before the
trading period $[0,T]$.  The expert opinion $Z_0$ arriving at time $t=0$ does not belong to this prior information and is therefore excluded from $\mathcal {F}_0^{\HC}$ and only contained in $\mathcal {F}_t^{\HC}$ for $t>0$. At first glance this may appear not consistent but it will facilitate below in Subsec.~\ref{MonetaryValue} the formal definition of the monetary value of the expert opinions. 

We assume that the conditional distribution
of the initial value drift $\mu_0$ given $\mathcal{F}_0^{ I}$ is the
normal distribution $\mathcal{N}(m_0,q_0)$ with mean
$\filterinitial\in \R^{\nAktien}$ and covariance matrix
$\condcovinitial\in\mathbb R^{\nAktien\times\nAktien}$ assumed to be
symmetric and  positive semi-definite.
In this setting typical examples are:
\begin{enumerate}
	\renewcommand{\labelenumi}{\alph{enumi})}
	\item  The investor has no  information about the initial value of the drift $\mu_0$. However, he knows  the model parameters,
	in particular the distribution
	$\mathcal{N}(\driftinitial,\covinitial)$ of $\mu_0$ with  given
	parameters $\driftinitial$ and $\covinitial$. This corresponds to
	$\mathcal{F}_0^I=
	\{\varnothing,\Omega\}$ and
	$\filterinitial=\driftinitial$, $\condcovinitial=\covinitial$.
	
	\item  The investor can fully observe the initial value of the drift $\mu_0$, which corresponds
	to $\mathcal{F}_0^I=\mathcal{F}_0^F =\sigma\{\mu_0\}$ and
	$\filterinitial=\drift_0(\omega)$ and $\condcovinitial=0$.
	
	\item  Between the above limiting cases we consider an investor who has some prior but no complete information about
	$\mu_0$ leading to $\{\varnothing,\Omega\}\subset \mathcal
	{F}_0^I\subset\mathcal F_0^F $.
\end{enumerate}

\subsection{Portfolio and Optimization Problem}
We describe the self-financing trading of an investor by the initial
capital $x_0>0$ and the  $\mathbb{F}^H$-adapted trading strategy
$\pi=(\pi_t)_{t\in[0,T]} $ where $\pi_t\in\R^{\nAktien}$. Here
$\pi_t^{i}$ represents the proportion of wealth invested in the
$i$-th stock at time $t$.  The assumption that $\pi$ is
$\mathbb{F}^H$-adapted models that investment decisions have to be
based on information available to the $H$-investor which he obtains
from from observing assets prices ($H=R$) combined with expert
opinions ($H=\HC,\HD$) or with the drift process ($H=F$). Following
the strategy $\pi$   the investor generates a wealth process
$(X_t^{\pi})_{t\in [0,T]}$ whose dynamics  reads as
\begin{align} \label{wealth_phys}
	\frac{dX_t^{\pi}}{X_t^{\pi}}= \pi_t^{\top}dR_t = 
	\pi_t^{\top}\mu_t\; dt+\pi_t^{\top}\volR\; dW_t^{R},\quad
	X_0^{\pi}=x_0.
\end{align}
We denote by 
\begin{align}
	\begin{aligned}
		\label{set_admiss_0} \mathcal{A}_0^H =\Big\{&\pi= (\pi_t)_{t}  \colon
		\pi_t\in\mathbb R^{\nAktien}, \text{ $\pi$ is $\mathbb{F}^H$-adapted
		}, X^\pi_t > 0, \E\Big[ \int\nolimits_0^T \|\pi_t\|^2\, dt
		\Big]<\infty, \\
		&  \text{the process } \Radon^H \text{ defined below in \eqref{Radon} satisfies } \E\big[\Radon^H_T\big]=1  \Big\}
	\end{aligned}
\end{align}
the  class of {\em admissible trading strategies}.  The last condition in the definition of $\mathcal{A}_0^H$ is  needed to apply  a change of measure technique for the solution of the optimization problem. More  details  are provided below in Subsec.~\ref{PowerUtility}. 

We assume that
the investor wants to maximize the expected utility of terminal
wealth for a given utility function $U : \R_+\rightarrow\R$
modelling the risk aversion of the investor. In our approach we will
use  the function
\begin{align}
	\label{util_def}
	\utility_{\theta}(x):=\frac{x^{\theta}}{\theta},\quad
	\theta\in(-\infty,0)\cup(0,1).
\end{align}
The limiting case for $\theta\rightarrow 0$ for the power utility
function leads to the logarithmic utility $\utility_0(x):=\ln x$,
since we have
$\utility_{\theta}(x)-\frac{1}{\theta}=\frac{x^{\theta}-1}{\theta}
\xrightarrow[~\theta\rightarrow 0~]{\text{}} \log x.$
The optimization problem thus reads as 
\begin{align} 
	\mathcal V_0^H(x_0):=\sup\limits_{\pi\in\mathcal{A}_0^{H}}
	\rewardorigin_0^H(x_0;\pi) \quad \text{where}\quad
	\rewardorigin_0^H(x_0;\pi) = \E\left[\utility_{\theta}(X_T^{\pi})\mid \mathcal{F}^H_0\right],~\pi\in\mathcal A^{H}_0,
	\label{opti_org}
\end{align}
where we call $\rewardorigin_0^H(x_0;\pi)$ \textit{reward function}
or \textit{performance criterion} of the strategy $\pi$ and $
\mathcal V_0^H(x_0)$ \textit{value function} to given initial
capital $x_0$.
This is  for $H\neq F$ a maximization problem under partial
information since we have required that the strategy $\pi$ is
adapted to the investor filtration $\mathbb F^{H}$.  However, the drift
coefficient of the wealth equation \eqref{wealth_phys} is not
$\mathbb F^{H}$-adapted, it depends on the non-observable drift
$\mu$. Note that for $x_0 > 0$ the solution of the SDE
\eqref{wealth_phys} is strictly positive. This guarantees that
$X_T^{\pi}$ is in the domain of logarithmic and power utility.

\subsection{Well Posedness of the Optimization Problem}
\label{Subsec:WellPosedness}
The analysis of   utility maximization problem
\eqref{opti_org} requires conditions under which the problem
is well-posed.   Problem \eqref{opti_org}  is said to be  \textit{well-posed} for the
$H$-investor, if there exists a constant
$\rewardconstant^{H}<\infty$  such that	$\valueorigin_0^{H}(x_0)\leq \rewardconstant^{H} $. Then the maximum expected
utility  of terminal wealth cannot explode in finite time as it is
the case for  so-called nirvana strategies described e.g. in Kim and
Omberg \cite{Kim and Omberg (1996)} and Angoshtari
\cite{Angoshtari2013}. Nirvana strategies generate in finite time  a
terminal wealth with a distribution leading to infinite expected
utility although the realizations of terminal wealth  may be finite.	

In general, well posedness will depend not only on the initial capital $x_0$ but on the complete set of model parameters which are $T,\theta,d,\volR,\sigma_{\mu},\revspeed,\revlevel,x_0,\driftinitial,\covinitial,
\filterinitial,\condcovinitial,\Gamma,n,\volexp$.

For power  utility  with parameter $\theta<0$ we have $\utility_\theta(x)<0$ for all $x>0$. Hence, in that case we can simply  choose
$\rewardconstant^{H}=0$ and the optimization problem is well-posed
for all model parameters with negative $\theta$. The logarithmic utility function ($\theta=0$) is no longer bounded
from above but we show below in Subsec.~\ref{LogUtility} that the value
function $\valueorigin_0^H(x_0)$ is bounded from above by some
positive constant $\rewardconstant^{H}$ for any selection of the
remaining model parameters.  More delicate  is the case of power
utility with positive parameter $\theta\in (0,1)$ which is also not
bounded from above. 	
 Here, well posedness is only guaranteed under certain restrictions on the choice of model parameters  such as the investment horizon and parameters controlling the variance of the asset price and drift processes.  For a market with a fully observed drift rate modeled by an Ornstein-Uhlenbeck process	this phenomenon was already described in Kim and
	Omberg \cite{Kim and Omberg (1996)}.    Further, it was also observed in Korn and Kraft \cite[Sec. 3]{Korn and Kraft (2004)} who coined it ``I-unstability'', in  Angoshtari \cite{Angoshtari2013,Angoshtari2016},  and Lee and Papanicolaou  \cite{Lee Papanicolaou (2016)}  who studied  power utility maximization problems and their well posedness for financial market models with  cointegrated  asset price processes, and in Battauz et al.~\cite{Battauz et al (2017)} for markets with defaultbale assets. 
	
	For detailed investigation also for models with not directly observable drift and expert opinions we refer to our  paper \cite{Gabih et al (2022) Nirvana} where we find  sufficient conditions on the model parameters ensuring well posedness.  They are given below in \eqref{wellposed_full} and \eqref{wellposed_partial}. Some results for markets with a single risky asset ($d=1$) are also contained in Colaneri et al.~\cite{Colaneri et al (2021)}.
	
One of the findings  is that depending on the chosen parameters well posedness can be guaranteed only if the trading horizon $T$ is smaller than some certain ``explosion time''.
In the following we always assume that   \eqref{opti_org} constitutes a well-posed optimization problem.

\subsection{Monetary Value of Information}
\label{MonetaryValue}  
 In this subsection we want to express the  value of information	available to the $H$-investors  in monetary terms. It is expected that the fully informed $F$-investor which can directly observe the drift has an advantage over the partially informed investors. In fact, an easy calculation as in Lee and Papanicolaou \cite[Subsec.~3.1]{Lee Papanicolaou (2016)} and \cite[Subsec.~3.3]{Gabih et al (2022) Nirvana} shows that for $H=\HR,\HC,\HD$ it holds  $\mathcal V_0^{H}(x_0)\le \E [ \mathcal V_0^{\HF}(x_0)\mid \mathcal F_0^{H}]$. 
	This inequality expresses the above advantage in mathematically.  
	The difference between the right and left hand side of the inequality is termed in Lee and Papanicolaou \cite[Subsec.~3.1]{Lee Papanicolaou (2016)} loss of utility and constitutes a first measure for the the value of information. However, utility functions and the derived value function to the
	utility maximization problems \eqref{opti_org}  do not carry a meaningful unit and therefore it is difficult to compare results for different utility functions. In order to derive  quantities with a clear economic interpretation which allow to  express the  value of information  in monetary terms  we follow a utility indifference approach	as in Brendle \cite{Brendle2006}, Lee and Papanicolaou \cite[Subsec.~3.1]{Lee Papanicolaou (2016)}, \cite[Sec.~6]{Gabih et al (2014)}.

First, we compare the fully informed
$\HF$-investor with the other partially informed $H$-investors,
$H=\HR,\HC,\HD$.  Recall that the fully informed
$F$-investor can observe the drift. The $R$-investor can only
observe stock returns while the $\HC$- and $\HD$- investors have access
to additional information and combine observations of stock return
with expert opinions. Now we consider for $H=\HR,\HC, \HD$
the initial capital $x_0^{H/\HF}$ which the $\HF$-investor needs to
obtain the same maximized expected utility at time $T$ as the
partially informed $H$-investor who started at time $0$ with
wealth $x_0^H>0$  which according to \eqref{opti_org} is given by
$\mathcal V_0^{H}(x_0)$.  Following this utility indifference
approach  $x_0^{H/ \HF}$ is obtained as solution of the following
equation   
\begin{align}
	\mathcal V_0^{H}(x_0^H)=\E \Big[ \mathcal
	V_0^{\HF}\big(\text{$x$}_0^{H/ \HF}\big)\mid \mathcal F_0^{H}\Big].
	\label{monetaer_Wert}
\end{align}
The difference $x_0^H-x_0^{H/ \HF}>0$ can be interpreted as \textit{loss of
	information} for the (non fully informed) $H$-investor measured in
monetary units, while the ratio
\begin{align}
	\varepsilon^{H}:=\frac{x_0^{H/ \HF}}{x_0^H} \in (0,1]
	\label{effekt_formel}
\end{align}
is a measure for the \textit{efficiency} of the $H$-investor
relative to the $\HF-$investor.

The above utility indifference approach can also be used to
quantify the monetary value of the additional information delivered by the experts.
We now compare  the maximum expected utility of an  $\HR$-investor who only observes returns
with that utility of the $H$-investor for $H=\HC,\HD$ who can combine
return observations with information from the experts.  Given that
the $\HR$-investor is equipped with initial capital
$x_0^R>0$ we determine the initial capital $x_0^{\HR/H}$
for the $H$-investor which leads to the same maximal expected
utility, i.e $x_0^{\HR/H}$ is the solution of the equation
\begin{align*}
	\mathcal V_0^{\HR}(x_0^R)=\E \Big[ \mathcal V_0^{H}(x_0^{\HR/H})\mid
	\mathcal F_0^{\HR}\Big].
\end{align*}
Since we assume  that at time $t=0$ all partially informed
investors have access to the same information about the drift, it
holds $\mathcal F_0^{\HR}=\mathcal F_0^{H}=\mathcal F_0^{I}$ (see
Subsec.~\ref{Investor_Filtration}) the above equation reads as
\begin{align}\label{initial_cap_H}
	\mathcal V_0^{\HR}(x_0^R)= \mathcal V_0^{H}(x_0^{\HR/H}).
\end{align}
From the initial capital $x_0^R$  the $R$-investor can
put aside the amount $P_{Exp}^{H}:=x_0^R-x_0^{\HR/H}$ to buy the
information from the expert. The remaining capital  $x_0^{\HR/H}$
can be invested in  an $H-$optimal portfolio and leads to the same
expected utility of terminal wealth as the $\HR$-optimal portfolio
with initial capital $x_0^R$. Hence,  $P_{Exp}^{H}$ describes the
\textit{monetary value of the expert opinions} for the $\HR$-investor.

\section{Partial Information and Filtering}
\label{Filtering} The trading decisions of investors are based on
their knowledge about the drift process $\mu$. While the
$F$-investor observes the drift directly, the $H$-investor for
$H=R,\HC,\HD$ has to estimate the drift. This leads us to a filtering
problem with hidden signal process $\mu$ and observations given by
the returns $R$ and the sequence of  expert opinions
$(t_k,Z_k)$. The \textit{filter} for the drift  $\mu_t$ is its
projection on the $\mathcal{F}_t^H$-measurable random variables
described by the conditional distribution of the drift given
$\mathcal{F}_t^H$. The mean-square optimal estimator for the drift
at time $t$, given the available information is  the
\textit{conditional mean}
$$\Mpro_t^{H}:=\E[\mu_t\mid\mathcal{F}_t^H].$$
The accuracy of that estimator  can be described by the
\textit{conditional covariance matrix}
\begin{align}\nonumber
	\Qpro_t^{H}:=\E[(\mu_t-\Mpro^{H}_t)(\mu_t-\Mpro^{H}_t)^{\top}\mid\mathcal{F}^{H}_t].
\end{align}
Since in our filtering problem  the signal   $\mu$, the observations
and the initial value of the filter  are jointly Gaussian also the
the  conditional distribution of the drift is Gaussian and completely characterized by the
conditional mean $\Mpro_t^{H}$ and the conditional covariance
$\Qpro_t^{H}$.
\subsection{Dynamics of the Filter}
\label{filter_dynamics}  We now  give the dynamics of the
filter  processes $\Mpro^{H}$ and    $\Qpro^{H}$  for $H=R,\HD,\HC$.

\paragraph{$R$-Investor}
The $R$-investor  only observes returns and has no access to
additional expert opinions,  the information is given by
$\mathbb{F}^R$. Then, we are in the classical case of the   Kalman-Bucy
filter, see e.g.~Liptser and Shiryaev \cite{Liptser-Shiryaev},
Theorem $10.3$, leading to the following  dynamics of $\Mpro^R$ and
$\Qpro^R$.

\begin{lemma}
	\label{Kalmann_Filter_R_lemma} 
	The conditional distribution of the drift given the $\HR$-investor's observations is Gaussian.
	The conditional mean $\Mpro^{R}$ follows the dynamics
	\begin{align}
		\label{Filter_R}
		d\Mpro_t^{R}&=\revspeed(\revlevel-\Mpro_t^{R})\;dt+\Qpro_t^{R}\,
		\Sigma_R^{-1/2}\;d\widetilde{W}_t^\HR.
	\end{align}
	The  innovation process
	~$\widetilde{W}^R=(\widetilde{W}_t^R)_{t\in[0,T]}$   given by
	$d\widetilde{W}_t^\HR=\Sigma_R^{-1/2}\left(dR_t-\Mpro_t^\HR
	dt\right)$, $\widetilde{W}_0^\HR=0$,  is a $\nAktien$-dimensional standard  Brownian motion adapted to $\mathbb{F}^R$.\\
	The dynamics of the conditional covariance  $\Qpro^{R}$ is given by
	the   Riccati differential equation
	\begin{align}
		\label{Riccati_R} d\Qpro_t^{R}&=(\Sigma_{\mu}-\revspeed
		\Qpro_t^{R}-\Qpro_t^{R} \revspeed^{\top}-\Qpro_t^{R} \Sigma_{R}^{-1}
		\Qpro_t^{R})\; dt.
	\end{align}
	The initial values are $\Mpro_0^{R}=\filterinitial $ and
	$\Qpro_0^{R}=\condcovinitial$.
\end{lemma}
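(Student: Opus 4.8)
The plan is to recognize that \eqref{drift} and \eqref{ReturnPro} constitute a linear--Gaussian signal/observation system with independent driving noises, so that the assertion is a specialization of the classical Kalman--Bucy theorem; I would nonetheless carry out the innovations-based derivation to pin down the coefficients. \textbf{First}, I would record joint Gaussianity: the signal $\mu$ solves the linear SDE \eqref{drift} driven by $W^\mu$, the observation $R$ solves \eqref{ReturnPro} with $\mu$-dependent drift and independent noise $W^R$, and $\mu_0$ is conditionally $\mathcal{N}(\filterinitial,\condcovinitial)$ given $\mathcal{F}_0^I$. Hence $(\mu_t,(R_s)_{s\le t})$ is jointly Gaussian for each $t$, so the conditional law of $\mu_t$ given $\mathcal{F}_t^R$ is Gaussian and completely described by $\Mpro_t^R$ and $\Qpro_t^R$; in particular $\Mpro_0^R=\filterinitial$ and $\Qpro_0^R=\condcovinitial$.

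\textbf{Second}, I would verify that the innovation $\widetilde{W}^R$ defined by $d\widetilde{W}_t^R=\Sigma_R^{-1/2}(dR_t-\Mpro_t^R\,dt)$ is an $\mathbb{F}^R$-Brownian motion via Lévy's characterization. The process $\nu_t:=R_t-\int_0^t \Mpro_s^R\,ds=\int_0^t(\mu_s-\Mpro_s^R)\,ds+\int_0^t\volR\,dW_s^R$ is a continuous $\mathbb{F}^R$-martingale, since $\mu_s-\Mpro_s^R$ has vanishing $\mathbb{F}^R$-conditional mean at each instant (by the tower property applied to $\Mpro_u^R=\E[\mu_u\mid\mathcal{F}_u^R]$), and it has quadratic variation $\langle\nu\rangle_t=\Sigma_R\,t$. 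Therefore $\widetilde{W}^R=\Sigma_R^{-1/2}\nu$ is continuous, $\mathbb{F}^R$-adapted, and has quadratic covariation $I_d\,t$, hence is a standard Brownian motion.

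\textbf{Third}, I would obtain \eqref{Filter_R} from the Fujisaki--Kallianpur--Kunita equation. The projected signal drift is $\revspeed(\revlevel-\Mpro_t^R)$, and since $W^\mu$ and $W^R$ are independent there is no noise-correlation correction; the innovation gain equals the conditional signal/observation covariance $\big(\E[\mu_t\mu_t^\top\mid\mathcal{F}_t^R]-\Mpro_t^R(\Mpro_t^R)^\top\big)\Sigma_R^{-1}=\Qpro_t^R\Sigma_R^{-1}$. Writing the correction against the innovation yields $\Qpro_t^R\Sigma_R^{-1}(dR_t-\Mpro_t^R\,dt)=\Qpro_t^R\Sigma_R^{-1/2}\,d\widetilde{W}_t^R$, which is exactly \eqref{Filter_R}.

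\textbf{Fourth}, for the Riccati equation \eqref{Riccati_R} I would analyze the error $e_t:=\mu_t-\Mpro_t^R$. Subtracting \eqref{Filter_R} from \eqref{drift} and then substituting the identity $d\widetilde{W}_t^R=\Sigma_R^{-1/2}(e_t\,dt+\volR\,dW_t^R)$ turns the error equation into the \emph{closed} linear SDE
\begin{align*}
de_t=-(\revspeed+\Qpro_t^R\Sigma_R^{-1})\,e_t\,dt+\voldrift\,dW_t^\mu-\Qpro_t^R\Sigma_R^{-1}\volR\,dW_t^R,
\end{align*}
driven by the independent Brownian motions $W^\mu,W^R$. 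Applying Itô's product rule to $e_te_t^\top$ and taking expectations, with $P_t:=\E[e_te_t^\top]$, the drift contributes $-\revspeed P_t-P_t\revspeed^\top-\Qpro_t^R\Sigma_R^{-1}P_t-P_t\Sigma_R^{-1}\Qpro_t^R$ while the diffusion contributes $\Sigma_\mu+\Qpro_t^R\Sigma_R^{-1}\Qpro_t^R$; it is precisely the $e_t\,dt$ term hidden inside the innovation that produces the two extra quadratic contributions and flips the sign of the last term, leaving $-\Qpro_t^R\Sigma_R^{-1}\Qpro_t^R$ after identification. The \textbf{main obstacle}, and the only genuinely nontrivial point, is to justify that the conditional covariance $\Qpro_t^R=\E[e_te_t^\top\mid\mathcal{F}_t^R]$ is deterministic and coincides with $P_t$: this rests on the joint Gaussianity of the first step, which makes the filtering error orthogonal to, hence independent of, $\mathcal{F}_t^R$, so conditioning may be dropped. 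Granting the identification $P_t=\Qpro_t^R$ collapses the above into \eqref{Riccati_R}, and the entire argument is subsumed by Liptser--Shiryaev, Theorem~$10.3$, once the model is checked to meet its linear--Gaussian hypotheses.
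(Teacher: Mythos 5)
Your derivation is correct, but it is considerably more than what the paper does: the paper offers no proof of this lemma at all, it simply observes that the pair \eqref{drift}, \eqref{ReturnPro} is a linear--Gaussian signal/observation system with independent noises and invokes the classical Kalman--Bucy theorem (Liptser--Shiryaev, Theorem 10.3). What you have written out is essentially the standard innovations-approach proof of that cited theorem, specialized to this model: Lévy's characterization for $\widetilde W^{\HR}$, the Fujisaki--Kallianpur--Kunita equation with identity observation function to get the gain $\Qpro_t^{\HR}\Sigma_R^{-1}$, and the error-covariance computation for the Riccati equation. The trade-off is clear: the paper's citation buys brevity and rigor by delegation, while your route makes explicit the two points that are genuinely model-specific and otherwise invisible --- why the gain is exactly the conditional covariance times $\Sigma_R^{-1}$ (no correlation correction, since $W^{\mu}\perp W^{R}$), and why $\Qpro^{\HR}$ is deterministic (joint Gaussianity makes the error $e_t$ independent of $\mathcal F_t^{\HR}$, so the conditional covariance equals the unconditional one, closing the Riccati equation). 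Two sketch-level points you should tighten if this were written in full: (i) the martingale property of $\nu_t=R_t-\int_0^t \Mpro_s^{\HR}ds$ needs, besides the tower-property argument for the drift part, the observation that $\int_0^{\cdot}\volR\, dW^R_s$ is an $\mathbb{F}^{\HR}$-martingale because it is a $\mathbb{G}$-martingale and $\mathcal F_s^{\HR}\subset\mathcal G_s$; (ii) since $\mathcal F_t^{\HR}$ contains the prior $\sigma$-algebra $\mathcal F_0^I$, all Gaussianity statements should be read conditionally on $\mathcal F_0^I$, with the initial law $\mathcal N(\filterinitial,\condcovinitial)$; neither point affects the validity of the argument.
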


Note that the conditional covariance matrix $\Qpro_t^{R}$ satisfies
an ordinary differential equation and is hence deterministic,
whereas the conditional mean $\Mpro_t^{R}$ is a stochastic process
defined by an SDE driven by the innovation process $\widetilde{W}^R$.

\paragraph{$\HD$-Investor}
The $\HD$-investor  observes a $2d$-dimensional diffusion process with components $R$ and $\HD$. That observation process is driven by a $(\nWienerRendite+\nWienerExperten)$-dimensional Brownian motion with components $W^\HR$ and $W^\HD$.
Again, we can apply classical  Kalman-Bucy filter theory as in Liptser and Shiryaev \cite{Liptser-Shiryaev} to deduce the dynamics of $\Mpro^\HD$ and $\Qpro^{\HD}$. We also refer to Lemma 2.2 in Sass et al.~\cite{Sass et al (2022)}.

\begin{lemma}
	\label{Kalmann_Filter_D_lemma}  The conditional distribution of the drift given the $\HD$-investor's observations is Gaussian.
	The conditional mean $\Mpro^{\HD}$ follows the dynamics
	\begin{align}
		\label{Filter_D}
		d\Mpro_t^{\HD}&=\revspeed(\revlevel-\Mpro_t^{\HD})\;dt+\Qpro_t^{\HD}\,
		(\Sigma_R^{-1/2}, \Sigma_\HD^{-1/2}) d\widetilde{W}_t^\HD.
	\end{align}
	The  innovation process
	$\widetilde{W}^\HD=(\widetilde{W}_t^\HD)_{t\in[0,T]}$   given by
	$$d\widetilde{W}_t^\HD=
	\begin{pmatrix}
		\Sigma_R^{-1/2}\left(dR_t-\Mpro_t^\HD   dt\right) \\
		\Sigma_\HD^{-1/2}\left(d\contexp_t-\Mpro_t^\HD   dt\right)
	\end{pmatrix},~~~ \widetilde{W}_0^\HD=0,$$  is a $2d$-dimensional  standard  Brownian motion adapted to $\mathbb{F}^\HD$.\\
	The dynamics of the conditional covariance  $\Qpro^{\HD}$ is given by
	the  Riccati differential equation
	\begin{align}
		\label{Riccati_D} d\Qpro_t^{\HD}&=(\Sigma_{\mu}-\revspeed
		\Qpro_t^{\HD}-\Qpro_t^{\HD} \revspeed^{\top}-\Qpro_t^{\HD}  (\Sigma_{R}^{-1} + \Sigma_{\HD}^{-1})
		\Qpro_t^{\HD})\; dt.
	\end{align}
	The initial values are $\Mpro_0^{\HD}=\filterinitial $ and  $\Qpro_0^{\HD}=\condcovinitial$.
\end{lemma}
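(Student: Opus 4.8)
The plan is to reduce the statement to a direct application of the classical Kalman--Bucy filtering theorem (Liptser and Shiryaev \cite{Liptser-Shiryaev}, Thm.~10.3; see also Lemma~2.2 in \cite{Sass et al (2022)}), in complete analogy with Lemma~\ref{Kalmann_Filter_R_lemma} for the $\HR$-investor, the only structural difference being that the $\HD$-investor observes a higher-dimensional signal. First I would stack the two observable diffusions into a single $2\nAktien$-dimensional observation process $Y=(R^\top,\contexp^\top)^\top$. By \eqref{ReturnPro}, \eqref{continuous-expert} and the independence of $W^\HR$ and $W^\HD$, this process obeys the linear state-space dynamics
\begin{align*}
	dY_t = H\,\mu_t\,dt + \widetilde\sigma\,dB_t,\qquad
	H=\begin{pmatrix} I_d\\ I_d\end{pmatrix},\quad
	\widetilde\sigma=\begin{pmatrix}\volR & 0\\ 0 & \volexp\end{pmatrix},
\end{align*}
driven by the $(\nWienerRendite+\nWienerExperten)$-dimensional Brownian motion $B=((W^\HR)^\top,(W^\HD)^\top)^\top$, while the hidden signal $\mu$ follows the linear Gaussian SDE \eqref{drift}. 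Together with the Gaussian initial law $\mathcal N(\filterinitial,\condcovinitial)$ of $\mu_0$ given $\mathcal F_0^I$, the pair $(\mu,Y)$ is a linear Gaussian system, so its conditional distribution given $\mathcal F_t^\HD=\sigma(Y_s,\ s\le t)\vee\mathcal F_0^I$ is Gaussian and the Kalman--Bucy equations apply.

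The independence of $W^\HR$ and $W^\HD$ makes the observation-noise covariance block diagonal, $\Sigma_Y:=\widetilde\sigma\widetilde\sigma^\top=\diag(\Sigma_R,\Sigma_\HD)$, which is positive definite since $\Sigma_R$ and $\Sigma_\HD$ are; this is exactly the nondegeneracy condition needed to invert $\Sigma_Y$. The filtering theorem then yields the conditional mean and covariance dynamics in the abstract form
\begin{align*}
	d\Mpro_t^\HD &= \revspeed(\revlevel-\Mpro_t^\HD)\,dt + \Qpro_t^\HD H^\top\Sigma_Y^{-1}\big(dY_t-H\Mpro_t^\HD\,dt\big),\\
	d\Qpro_t^\HD &= \big(\Sigma_{\mu}-\revspeed\Qpro_t^\HD-\Qpro_t^\HD\revspeed^\top-\Qpro_t^\HD H^\top\Sigma_Y^{-1}H\,\Qpro_t^\HD\big)\,dt,
\end{align*}
with initial data $\Mpro_0^\HD=\filterinitial$ and $\Qpro_0^\HD=\condcovinitial$ inherited from the conditional law of $\mu_0$.

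It then remains to substitute the explicit $H$ and $\Sigma_Y$. Because $\Sigma_Y$ is block diagonal one has $\Sigma_Y^{-1/2}=\diag(\Sigma_R^{-1/2},\Sigma_\HD^{-1/2})$, and a short computation gives $H^\top\Sigma_Y^{-1}H=\Sigma_R^{-1}+\Sigma_\HD^{-1}$, which turns the Riccati equation into \eqref{Riccati_D}, together with $H^\top\Sigma_Y^{-1/2}=(\Sigma_R^{-1/2},\Sigma_\HD^{-1/2})$. Writing the standardized innovation as $d\widetilde W_t^\HD=\Sigma_Y^{-1/2}(dY_t-H\Mpro_t^\HD\,dt)$ then recovers both the stated componentwise form of $\widetilde W^\HD$ and the diffusion term $\Qpro_t^\HD(\Sigma_R^{-1/2},\Sigma_\HD^{-1/2})\,d\widetilde W_t^\HD$ in \eqref{Filter_D}.

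Finally I would verify that $\widetilde W^\HD$ is genuinely a standard $2\nAktien$-dimensional $\mathbb F^\HD$-Brownian motion: it is a continuous $\mathbb F^\HD$-martingale, and since its martingale part stems from $\Sigma_Y^{-1/2}\widetilde\sigma\,dB_t$ with $\Sigma_Y^{-1/2}\widetilde\sigma\,(\Sigma_Y^{-1/2}\widetilde\sigma)^\top=\Sigma_Y^{-1/2}\Sigma_Y\Sigma_Y^{-1/2}=I_{2d}$, its quadratic variation is $t\,I_{2d}$, so Lévy's characterization applies. The argument is essentially routine; the only points requiring genuine care are the bookkeeping of the stacked observation system and confirming the block-diagonal, positive-definite structure of $\Sigma_Y$, which simultaneously guarantees invertibility and collapses $H^\top\Sigma_Y^{-1}H$ to the additive form $\Sigma_R^{-1}+\Sigma_\HD^{-1}$ — precisely the feature distinguishing the $\HD$-filter from the $\HR$-filter of Lemma~\ref{Kalmann_Filter_R_lemma}.
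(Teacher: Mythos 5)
Your proposal is correct and follows essentially the same route as the paper, which proves this lemma simply by viewing $(R,\contexp)$ as a stacked $2d$-dimensional observation process driven by a $(\nWienerRendite+\nWienerExperten)$-dimensional Brownian motion and citing the classical Kalman--Bucy theorem (Liptser and Shiryaev, and Lemma~2.2 of Sass et al.). You merely make explicit the bookkeeping the paper delegates to those references: the block-diagonal observation noise covariance, the collapse of $H^{\top}\Sigma_Y^{-1}H$ to $\Sigma_R^{-1}+\Sigma_\HD^{-1}$, and the L\'evy-characterization argument for the innovation process.
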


\paragraph{$\HC$-Investor}
The next lemma provides the filter for the $\HC$-investor who
combines continuous-time observations of  stock
returns and expert opinions received at discrete points in time. For a detailed proof we refer to  Lemma 2.3 in \cite{Sass et al
	(2017)} and  Lemma 2.3 in \cite{Sass et al (2022)}.
\begin{lemma}
	\label{filter_C}
	The conditional distribution of the drift given the $\HC$-investor's observations is Gaussian. The dynamics of the conditional mean and conditional covariance matrix are given as follows:
	\ \\[-2ex]
	\begin{enumerate}
		\item[(i)]
		Between two information dates $t_k$ and $t_{k+1}$, $ k=0,\ldots,n-1$,   the
		conditional mean  $\Mpro_t^\HC$  satisfies SDE \eqref{Filter_R}, i.e.,
		\begin{align*}
			d\Mpro_t^{\HC}&=\revspeed(\revlevel-\Mpro_t^{\HC})\;dt+\Qpro_t^{\HC}\,
			\Sigma_R^{-1/2}\;d\widetilde{W}_t^\HC \quad\text{for}~~ t\in
			[t_k,t_{k+1}).
		\end{align*}
		The  innovation process
		$\widetilde{W}^\HC=(\widetilde{W}_t^\HC)_{t\in[0,T]}$   given by
		$$d\widetilde{W}_t^\HC=\Sigma_R^{-1/2}\left(dR_t-\Mpro_t^\HC
		dt\right),~ \widetilde{W}_0^\HC=0,$$  is a $\nAktien$-dimensional standard  Brownian motion adapted to $\mathbb{F}^\HC$.\\
		The conditional covariance $\Qpro^{\HC}$ satisfies the ordinary Riccati
		differential equation \eqref{Riccati_R}, i.e.,
		\begin{align*}
			d\Qpro_t^{\HC}&=(\Sigma_{\mu}-\revspeed
			\Qpro_t^{\HC}-\Qpro_t^{\HC} \revspeed^{\top}-\Qpro_t^{\HC} \Sigma_{R}^{-1}
			\Qpro_t^{\HC})\; dt.
		\end{align*}
		The initial values are $\Mpro_{t_k}^\HC $ and $\Qpro^\HC_{t_k}$,
		respectively, with $\Mpro_{0}^{\HC}=\filterinitial$ and~
		$\Qpro_{0}^{\HC}=\condcovinitial$.
		\item[(ii)]
		At the information dates $t_k$,  $ k= 1,\ldots,n-1$, the conditional mean  and
		covariance $\Mpro_{t_k}^\HC$ and $\Qpro_{t_k}^\HC$ are obtained from the
		corresponding values at time $t_{k^-}$ (before the arrival of the
		view) using the update formulas
		\begin{align}
			\label{updateformel_Filter_CN}
			\Mpro_{t_k}^{\HC}&=\rho_k\Mpro_{t_k-}^{\HC}+(I_d-\rho_k)Z_k,\\
			\label{updateformel_Variance_CN} \Qpro_{t_k}^{\HC}&=\rho_k
			\Qpro_{t_k-}^{\HC},
		\end{align}
		with the update factor
		$\rho_k=\varianceexp(\Qpro_{t_k-}^{\HC}+\varianceexp)^{-1}$.  At initial time  $t=0$  the  above update formulas give $\Mpro_{0+}^{\HC}$ and $\Qpro_{0+}^{\HC}$ based on the initial values  $\Mpro_{0}^{\HC}=\filterinitial$ and  $\Qpro_{0}^{\HC}=\condcovinitial$.
	\end{enumerate}
\end{lemma}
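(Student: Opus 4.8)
The plan is to prove the two parts by induction over the information dates $t_0,t_1,\ldots,t_{n-1}$, exploiting that the conditional law of the drift stays Gaussian at every stage and is therefore completely captured by the pair $(\Mpro_t^{\HC},\Qpro_t^{\HC})$. The induction hypothesis at step $k$ is that $\mu_{t_k}$ given $\mathcal{F}_{t_k}^{\HC}$ is $\mathcal{N}(\Mpro_{t_k}^{\HC},\Qpro_{t_k}^{\HC})$. The base case $k=0$ is the update of the prior $\mathcal{N}(\filterinitial,\condcovinitial)$ by the first signal $Z_0$ and is identical to the inductive update in part (ii). The two assertions then correspond to the two channels through which information accrues: continuous return observations on the open intervals $(t_k,t_{k+1})$, and the discrete signals $Z_k$ at the dates $t_k$.

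For part (i), on $[t_k,t_{k+1})$ no expert opinion arrives, so the only new observation relative to $\mathcal{F}_{t_k}^{\HC}$ is the return increment. I would argue that, conditionally on $\mathcal{F}_{t_k}^{\HC}$, the pair $(\mu,R)$ on $[t_k,t_{k+1})$ is a conditionally Gaussian signal--observation system of exactly the form treated in Lemma~\ref{Kalmann_Filter_R_lemma}: the signal $\mu$ solves the linear Markovian SDE \eqref{drift}, and the return satisfies \eqref{ReturnPro} with noise $W^R$ independent of everything generated up to $t_k$. By the induction hypothesis the conditional initial law at $t_k$ is Gaussian, so the classical Kalman--Bucy theorem (Liptser and Shiryaev~\cite{Liptser-Shiryaev}, Thm.~10.3) applies verbatim with initial data $\Mpro_{t_k}^{\HC}$ and $\Qpro_{t_k}^{\HC}$. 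This gives that $\Mpro^{\HC}$ obeys \eqref{Filter_R} and $\Qpro^{\HC}$ obeys the Riccati equation \eqref{Riccati_R} on $[t_k,t_{k+1})$, with the stated innovation $\widetilde W^{\HC}$; patching these increments across the dates yields a process that is an $\mathbb{F}^{\HC}$-adapted Brownian motion on all of $[0,T]$.

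For part (ii), at the date $t_k$ the only new information is the Gaussian signal $Z_k=\mu_{t_k}+\varianceexp^{1/2}\varepsilon_k$ from \eqref{Expertenmeinungen_fest}. Taking the left-limit law $\mathcal{N}(\Mpro_{t_k-}^{\HC},\Qpro_{t_k-}^{\HC})$ from part (i) as prior and the likelihood $Z_k\mid\mu_{t_k}\sim\mathcal{N}(\mu_{t_k},\varianceexp)$ — valid because $\varepsilon_k$ is independent of $\mathcal{F}_{t_k-}^{\HC}$ — the standard Gaussian conjugate update gives posterior precision $(\Qpro_{t_k-}^{\HC})^{-1}+\varianceexp^{-1}$. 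Inverting this and using $((\Qpro_{t_k-}^{\HC})^{-1}+\varianceexp^{-1})^{-1}=\varianceexp(\Qpro_{t_k-}^{\HC}+\varianceexp)^{-1}\Qpro_{t_k-}^{\HC}=\rho_k\Qpro_{t_k-}^{\HC}$ yields the covariance update \eqref{updateformel_Variance_CN}. Feeding the same inverse into the posterior mean $((\Qpro_{t_k-}^{\HC})^{-1}+\varianceexp^{-1})^{-1}((\Qpro_{t_k-}^{\HC})^{-1}\Mpro_{t_k-}^{\HC}+\varianceexp^{-1}Z_k)$, using the complementary factorization of the inverse together with the identity $I_d-\rho_k=\Qpro_{t_k-}^{\HC}(\Qpro_{t_k-}^{\HC}+\varianceexp)^{-1}$, produces the mean update \eqref{updateformel_Filter_CN}.

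The main obstacle is the rigorous justification of the ``restart'' in part (i): one must show that conditioning on the enriched filtration $\mathcal{F}_{t_k}^{\HC}$, which already encodes all past expert opinions, leaves the signal--observation structure on $[t_k,t_{k+1})$ intact, so that the Kalman--Bucy filter may genuinely be reinitialized at $t_k$. This rests on the Markov property of $\mu$, the independence of the driving Brownian motion $W^R$ and of the future expert noise from $\mathcal{F}_{t_k-}^{\HC}$, and the preservation of joint Gaussianity of $(\mu_{t_k},\{R_s\}_{s\le t_k},\{Z_j\}_{j\le k})$, which is precisely what guarantees a Gaussian conditional law to serve as the new initial condition. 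Once this conditional-Gaussianity-and-independence package is established, both parts reduce to the cited classical results and a routine linear-algebra computation.
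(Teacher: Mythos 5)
Your proposal is correct and follows essentially the same route as the proof the paper relies on (it defers to Lemma 2.3 of Sass et al.~\cite{Sass et al (2017)} and Lemma 2.3 of \cite{Sass et al (2022)}): induction over the information dates, restarting the Kalman--Bucy filter of Lemma~\ref{Kalmann_Filter_R_lemma} on each interval $[t_k,t_{k+1})$ with the Gaussian conditional law at $t_k$ as initial condition, and a conjugate Gaussian Bayesian update at each $t_k$. One minor remark: your precision-form computation in part (ii) tacitly assumes that $\Qpro_{t_k-}^{\HC}$ is invertible, which can fail only at $t_0=0$ (e.g.\ when $\condcovinitial=0$); using instead the projection form of Gaussian conditioning,
$\E[\mu_{t_k}\mid \mathcal F_{t_k-}^{\HC}, Z_k]=\Mpro_{t_k-}^{\HC}+\Qpro_{t_k-}^{\HC}\big(\Qpro_{t_k-}^{\HC}+\varianceexp\big)^{-1}\big(Z_k-\Mpro_{t_k-}^{\HC}\big)$ with posterior covariance $\Qpro_{t_k-}^{\HC}-\Qpro_{t_k-}^{\HC}\big(\Qpro_{t_k-}^{\HC}+\varianceexp\big)^{-1}\Qpro_{t_k-}^{\HC}$, which only requires $\Qpro_{t_k-}^{\HC}+\varianceexp$ to be positive definite, removes this restriction and yields the same update formulas.
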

Note that the dynamics of $\Mpro^\HC$ and $\Qpro^\HC$ between
information dates are the same as for the $R$-investor, see Lemma
\ref{Kalmann_Filter_R_lemma}. The values at an information date
$t_k$ are obtained from a Bayesian update. Further, we recall that  for the $R$- and $\HD $-investor the conditional mean $\Mpro^H$ is
a diffusion process and the conditional covariance $\Qpro^H$ is a continuous and deterministic function.  Contrary to that,  for the  $\HC$-investor  the conditional mean $\Mpro^\HC$ is a diffusion process between the information dates but shows  jumps of random jump size at those dates. The conditional covariance $\Qpro^\HC$ is piecewise continuous with deterministic jumps at the arrival dates  $t_k$ of the expert  opinions.

\subsection{Properties of the Filter}
\label{propertiesfiltersection} 
In this and the next subsection we  collect some properties of the filter processes. We start with a proposition  stating in
mathematical terms  the intuitive property that additional
information from the expert opinions improves drift estimates. Since
the accuracy of the filter is measured by the conditional variance
it is expected that this quantity for the $\HC$- and $\HD$-investor
who combine observations of returns and  expert opinions  is
``smaller'' than for the $R$-investor who observes returns only.
Mathematically, this can be expressed by the   partial ordering of
symmetric matrices. For symmetric matrices $A,B\in\R^{d\times d}$ we
write $A \preceq B$ if $B-A$ is positive semi-definite. Note that $A
\preceq B$   implies that $\norm{A}\le \norm{B}$.

\begin{proposition}[Sass et al. \cite{Sass et al (2021)}, Lemma 2.4]~    	
	\label{properties_filter} 
	
	\noindent        For $H=\HC,\HD$ it holds $\Qpro^{H}_t \preceq \Qpro^R_t$
	and there exists a constant $C_{\Qpro}>0 $ such that
	$\norm{\Qpro^{H}_t} \le \norm{\Qpro^R_t}\le C_{\Qpro} $ for all
	$t\in[0,T]$.
\end{proposition}

At the arrival dates $t_k$ of the expert opinions the expert's view $Z_k$ lead to an update of the conditional mean $\Mpro^Z$ given by \eqref{updateformel_Filter_CN}. That update can be considered as a weighted mean of the filter estimate $\Mpro^Z_{t_k-}$ before the arrival and the expert opinion $Z_k$. The following proposition shows that the update improves  the accuracy both of the estimate  $\Mpro^Z_{t_k-}$ before the arrival as well as of the expert's estimate $Z_k$.
\begin{proposition}[Sass et al. \cite{Sass et al (2017)}, Proposition 2.2]~
	\label{properties_filter_2} 
	
	\noindent
	For $k=0,\ldots,n-1$ it holds ~$\Qpro^{Z}_{t_k} \preceq \varianceexp$ ~ and ~ $\Qpro^{Z}_{t_k} \preceq \Qpro^{Z}_{t_k-}$.
\end{proposition}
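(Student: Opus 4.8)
The plan is to work directly from the covariance update formula \eqref{updateformel_Variance_CN}, which reads $\Qpro^{Z}_{t_k} = \rho_k \Qpro^{Z}_{t_k-}$ with update factor $\rho_k = \varianceexp(\Qpro^{Z}_{t_k-}+\varianceexp)^{-1}$, so that
\begin{align*}
\Qpro^{Z}_{t_k} = \varianceexp(\Qpro^{Z}_{t_k-}+\varianceexp)^{-1}\Qpro^{Z}_{t_k-}.
\end{align*}
To lighten notation I set $A := \Qpro^{Z}_{t_k-}$ locally and keep $\varianceexp$ for the expert covariance. The first thing I would record is that $A+\varianceexp$ is invertible: $A$ is a conditional covariance, hence symmetric positive semi-definite, and $\varianceexp$ is symmetric positive definite by assumption, so $A+\varianceexp\succ 0$ and in particular $(A+\varianceexp)^{-1}\succ 0$. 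This is the only matrix I ever invert; crucially I never invert $A$ itself, which is what makes the argument survive the case where $A$ is singular. Note also that each difference computed below is manifestly symmetric, so the partial orderings are well defined.

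For the bound $\Qpro^{Z}_{t_k} \preceq \varianceexp$ I would compute the difference by factoring $\varianceexp$ to the left:
\begin{align*}
\varianceexp - \Qpro^{Z}_{t_k} = \varianceexp\big[I_d - (A+\varianceexp)^{-1}A\big] = \varianceexp(A+\varianceexp)^{-1}\big[(A+\varianceexp) - A\big] = \varianceexp(A+\varianceexp)^{-1}\varianceexp.
\end{align*}
Since $\varianceexp$ is symmetric and invertible, the right-hand side is the congruence transform $\varianceexp^{\top}(A+\varianceexp)^{-1}\varianceexp$ of the positive definite matrix $(A+\varianceexp)^{-1}$, hence itself symmetric positive definite; in particular $\varianceexp - \Qpro^{Z}_{t_k} \succeq 0$. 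For the bound $\Qpro^{Z}_{t_k} \preceq \Qpro^{Z}_{t_k-}$ I would instead factor $A$ to the right:
\begin{align*}
A - \Qpro^{Z}_{t_k} = \big[I_d - \varianceexp(A+\varianceexp)^{-1}\big]A = \big[(A+\varianceexp) - \varianceexp\big](A+\varianceexp)^{-1}A = A(A+\varianceexp)^{-1}A.
\end{align*}
Because $A$ is symmetric, this is the congruence transform $A^{\top}(A+\varianceexp)^{-1}A$ of the positive definite matrix $(A+\varianceexp)^{-1}$, and congruence by an arbitrary (possibly singular) matrix preserves positive semi-definiteness, since $v^{\top}A^{\top}(A+\varianceexp)^{-1}Av = (Av)^{\top}(A+\varianceexp)^{-1}(Av)\ge 0$ for every $v$. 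Hence $A - \Qpro^{Z}_{t_k} \succeq 0$, which is the claim.

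There is no genuine obstacle here; the content is elementary linear algebra built on the two factorizations above. The only point requiring a little care is that $A = \Qpro^{Z}_{t_k-}$ may be singular (the standing assumption is only $\condcovinitial\succeq 0$), so one cannot simply pass to the information form $\Qpro^{Z}_{t_k} = (A^{-1}+\varianceexp^{-1})^{-1}$, from which both inequalities would follow at once by monotonicity of matrix inversion. The chosen factorizations avoid this entirely by inverting only $A+\varianceexp$ and by reducing each inequality to the statement that $X^{\top}MX\succeq 0$ whenever $M\succeq 0$, valid for any $X$. The same bounds trivially extend to the initial update at $t=0$ producing $\Qpro^{Z}_{0+}$, and to $k=0$, since the computation only uses $A\succeq 0$ and $\varianceexp\succ 0$.
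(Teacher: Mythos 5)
Your proof is correct. One thing to be aware of: this paper never proves Proposition~\ref{properties_filter_2} itself --- it imports the statement verbatim from Sass et al.~\cite{Sass et al (2017)}, Proposition~2.2 --- so the only in-paper ingredient available is the update formula \eqref{updateformel_Variance_CN}, which is exactly where you start. Your two factorizations, $\varianceexp-\Qpro^{Z}_{t_k}=\varianceexp\big(\Qpro^{Z}_{t_k-}+\varianceexp\big)^{-1}\varianceexp$ and $\Qpro^{Z}_{t_k-}-\Qpro^{Z}_{t_k}=\Qpro^{Z}_{t_k-}\big(\Qpro^{Z}_{t_k-}+\varianceexp\big)^{-1}\Qpro^{Z}_{t_k-}$, check out algebraically and yield the two orderings at once: the first as a congruence of $\big(\Qpro^{Z}_{t_k-}+\varianceexp\big)^{-1}\succ 0$ by the invertible matrix $\varianceexp$ (in fact giving the strict ordering $\Qpro^{Z}_{t_k}\prec\varianceexp$), the second via the quadratic-form argument $v^{\top}A^{\top}(A+\varianceexp)^{-1}Av=(Av)^{\top}(A+\varianceexp)^{-1}(Av)\ge 0$, which tolerates a singular $A=\Qpro^{Z}_{t_k-}$. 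This is essentially the standard argument for the Bayesian covariance update, and your explicit care about singularity --- inverting only $\Qpro^{Z}_{t_k-}+\varianceexp$ rather than passing to the information form $\big((\Qpro^{Z}_{t_k-})^{-1}+\varianceexp^{-1}\big)^{-1}$ --- is precisely what the standing assumption $\condcovinitial\succeq 0$ demands, since the conditional covariance may indeed be singular (e.g.\ for a fully informed prior). Your remark that the symmetry of the computed differences also settles the symmetry of $\Qpro^{Z}_{t_k}$, and that the identical computation covers the update at $t_0=0$, closes the remaining loose ends; I see no gap.
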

The following lemma provides 
the conditional distribution of the expert opinions $Z_k$ given the available information of the $Z$-investor before the arrival of the expert's view.  
\begin{lemma}[Kondkaji \cite{Kondkaji (2019)}, Lemma 3.1.6]~
	\label{bedingte_Verteilung_Z_lem} 
	
	\noindent
	The conditional distribution of
	the expert opinions $Z_k$ given $\mathcal F_{t_k-}^{\HC}$ is  the
	multivariate normal distribution $\mathcal
	N\left(\Mpro_{t_k-}^{\HC},\;\varianceexp+\Qpro_{t_k-}^{\HC}\;\right)$,
	$k=0,\ldots,n-1$. 
\end{lemma}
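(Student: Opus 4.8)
The plan is to combine the Gaussian filter property recorded in Lemma \ref{filter_C} with the independence structure of the expert noise. Recall from \eqref{Expertenmeinungen_fest} that $Z_k = \drift_{t_k} + \varianceexp^{1/2}\varepsilon_k$, where $\varepsilon_k \sim \mathcal{N}(0,I_d)$ is independent of the Brownian motions $\WR, \Wmu$, of the initial drift $\drift_0$, and of the remaining noise vectors $(\varepsilon_j)_{j\neq k}$. The statement will follow once I show that, conditionally on $\mathcal{F}_{t_k-}^{\HC}$, the vectors $\drift_{t_k}$ and $\varianceexp^{1/2}\varepsilon_k$ are independent Gaussians, for then $Z_k$ is their sum and I may simply add means and covariances.

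First I would verify that $\varepsilon_k$ is independent of $\mathcal{F}_{t_k-}^{\HC}$ and that, given $\mathcal{F}_{t_k-}^{\HC}$, the pair $(\drift_{t_k},\varepsilon_k)$ stays conditionally independent. To this end, note that $\mathcal{F}_{t_k-}^{\HC}$ is generated by the prior $\mathcal{F}_0^I$, the return path $(R_s)_{s<t_k}$, and the earlier views $Z_0,\ldots,Z_{k-1}$ (this set being empty for $k=0$). By \eqref{ReturnPro} and \eqref{mu_explicit} the returns are measurable functions of $(\drift_0,\Wmu,\WR)$; the views $Z_j = \drift_{t_j} + \varianceexp^{1/2}\varepsilon_j$ with $j<k$ depend only on $(\drift_0,\Wmu,\varepsilon_0,\ldots,\varepsilon_{k-1})$; and $\mathcal{F}_0^I\subseteq\sigma\{\drift_0\}$ by the definition of the investor filtration. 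Moreover, $\drift_{t_k}$ itself is a function of $(\drift_0,\Wmu)$ through \eqref{mu_explicit}. Hence, for any $\mathcal{F}_{t_k-}^{\HC}$-measurable random variable $G$, the pair $(\drift_{t_k},G)$ is a measurable function of $(\drift_0,\Wmu,\WR,\varepsilon_0,\ldots,\varepsilon_{k-1})$, which is independent of $\varepsilon_k$. This single observation delivers both the independence of $\varepsilon_k$ from $\mathcal{F}_{t_k-}^{\HC}$ and the conditional independence of $\drift_{t_k}$ and $\varepsilon_k$ given $\mathcal{F}_{t_k-}^{\HC}$.

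Next I would invoke Lemma \ref{filter_C}, which guarantees that the conditional law of $\drift_{t_k}$ given $\mathcal{F}_{t_k-}^{\HC}$ is $\mathcal{N}(\Mpro_{t_k-}^{\HC},\Qpro_{t_k-}^{\HC})$. Since $\varepsilon_k$ is independent of $\mathcal{F}_{t_k-}^{\HC}$, its conditional law is still $\mathcal{N}(0,I_d)$, so $\varianceexp^{1/2}\varepsilon_k$ is conditionally $\mathcal{N}(0,\varianceexp)$, using that $\varianceexp^{1/2}$ denotes the symmetric square root and hence $\varianceexp^{1/2}(\varianceexp^{1/2})^{\top}=\varianceexp$. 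Adding the two conditionally independent Gaussian vectors $\drift_{t_k}$ and $\varianceexp^{1/2}\varepsilon_k$, the conditional distribution of $Z_k$ is Gaussian with mean $\Mpro_{t_k-}^{\HC}$ and covariance $\varianceexp+\Qpro_{t_k-}^{\HC}$, which is the assertion.

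The only delicate point is the bookkeeping in the second paragraph: one must track precisely which noise sources generate $\mathcal{F}_{t_k-}^{\HC}$ in order to secure the conditional independence of $\drift_{t_k}$ and $\varepsilon_k$. Once this is in place, the conclusion rests on the elementary fact that a sum of conditionally independent Gaussian vectors is again Gaussian with added means and covariances.
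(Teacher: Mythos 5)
Your proof is correct. The paper itself gives no proof of this lemma---it cites Lemma 3.1.6 of Kondakji's thesis---and your argument (decompose $Z_k=\drift_{t_k}+\varianceexp^{1/2}\varepsilon_k$ via \eqref{Expertenmeinungen_fest}, check that $\varepsilon_k$ is independent of $\sigma(\drift_{t_k})\vee\mathcal F_{t_k-}^{\HC}$ by tracing all generators back to $(\drift_0,\Wmu,\WR,\varepsilon_0,\ldots,\varepsilon_{k-1})$, then add the conditionally independent Gaussians using the filter law from Lemma \ref{filter_C}) is exactly the standard route behind this statement and behind the Bayesian update formulas \eqref{updateformel_Filter_CN}--\eqref{updateformel_Variance_CN}. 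The one step you use implicitly is that $\mathcal N\big(\Mpro_{t_k-}^{\HC},\Qpro_{t_k-}^{\HC}\big)$ is the conditional law of $\drift_{t_k}$ given the left-limit $\sigma$-algebra $\mathcal F_{t_k-}^{\HC}$ (Lemma \ref{filter_C} states Gaussianity at each fixed $t$; the passage $t\uparrow t_k$ needs continuity of $\drift$ and martingale convergence), but since the paper itself uses the filter state at $t_{k}-$ in precisely this sense in Lemma \ref{filter_C}(ii), this is a harmless elision rather than a gap.
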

According to this lemma we can choose a sequence of i.i.d.~$\mathcal
F_{t_k-}^{\HC}$-measurable random vectors
$U_k\sim\mathcal{N}(0,I_d)$, $k= 0,\ldots, n-1$ such that under
$\mathbb{F}^{\HC}$ it holds $
Z_k-\Mpro_{t_k-}^{\HC}=\left(\varianceexp+\Qpro_{t_k-}^{\HC}\right)^{\frac{1}{2}}
U_k$. From the update formula \eqref{updateformel_Filter_CN} we
deduce that the increments of the filter process $\Mpro^{\HC}$ at
the information dates $t_k$ can be expressed as
\begin{align}
	\Mpro_{t_k}^{\HC}-\Mpro_{t_k-}^{\HC}=
	\Qpro_{t_k-}^{\HC}\left(\varianceexp+\Qpro_{t_k-}^{\HC}\right)^{-\frac{1}{2}}
	U_k. \label{updateformel_Filter_CN_bedingt}
\end{align}
Further the update formula \eqref{updateformel_Variance_CN} implies
that  the (deterministic) increments of the filter process
$\Qpro^{\HC}$ at the information dates $t_k$ can be expressed as
\begin{align}
	\Delta \Qpro_{t_k}^{\HC} =~  \Qpro_{t_k}^{\HC}-\Qpro_{t_k-}^{\HC}=
	-\Qpro_{t_k-}^{\HC}\left(\varianceexp+\Qpro_{t_k-}^{\HC}\right)^{-1}
	\Qpro_{t_k-}^{\HC}. \label{updateformel_Variance_CN_bedingt}
\end{align}
\begin{remark}\label{remark_asymptotic_t}
	We mention  some asymptotic properties of the conditional variances $\Qpro^H_t$ for $t\to \infty$. Sass et al. \cite[Theorem 4.1.]{Sass et al (2017)} show that the conditional variances $\Qpro^R$ and $\Qpro^J$ for diffusion type observations stabilize for increasing $t$ and tend to a finite limit. For the $Z$-investor receiving expert opinions at equidistant time points they show in Prop.~4.1 that the conditional variances $\Qpro^Z_{t_k-}$ and $\Qpro^Z_{t_k}$ before and after the arrival, respectively, stabilize and tend to  (different) finite limits. We also refer to our numerical results presented in Subsec.~\ref{Num_Filter}.
\end{remark}

\subsection{Asymptotic Filter Behavior for High-Frequency Expert Opinions}
\label{filter_high_frequency}
In this subsection we provide results for the asymptotic behavior of the filters for a $\HC$-investor when the number of
expert opinions goes to infinity. This will be helpful for deriving approximations not only of the filters but also of solutions to the utility maximization  problem \eqref{opti_org} in case of high-frequency expert opinions. We will denote the arrival times of the expert's views by $t_k=t_k^{(n)}$ to emphasize the dependence on $n$. Then we have for all $n$ that  $0=t_0^{(n)}<t_1^{(n)}<\ldots<t_{n-1}^{(n)}<T$. Again we set $t_n^{(n)} =T$. We also use an additional superscript $n$ and write 	 $(\Mpro^{\HC,\nExperten})_{t\in[0,T]}$ and $(\Qpro^{\HC,\nExperten})_{t\in[0,T]}$ for the
conditional mean and the conditional covariance matrix of the
filter, respectively, in order to emphasize dependence of the filter processes on the number of expert opinions.

We distinguish two different asymptotic regimes. First the expert's variance $\varianceexp$ stays constant, second that variance grows linearly with the number $n$ of expert opinions.
\subsubsection{High-frequency expert opinions with fixed variance}
\label{asymptotic_filter_F}
For an increasing number of expert opinions with fixed variance $\varianceexp$ the investor receives more and more  noisy signals about the
current state of the drift $\drift$ of the same precision. Then it can be expected that in the limit the filter process $\Mpro^Z_t$ constitutes a  perfectly accurate estimate which is equal to  the actual drift $\drift_t$, i.e., the investor has full information about the  drift. This intuitive statistical consistency result has been rigorously proven in  \cite{Sass et al (2017)} under the following 
\begin{assumption}~\label{assumption_determ_Exp_F}~%
	\begin{enumerate}
		\item The expert's covariance matrix $\varianceexp$ is (strictly) positive definite and does not depend on~$n$.
		\item 
		For the mesh size $\delta_n = \max\limits_{k=0,\ldots,n-1}  \vert t_{k+1}^{n)}-t_k^{(n)}\vert  $	~it holds~ $\lim\limits_{n\to \infty} \delta_n=0$.
	\end{enumerate}	
\end{assumption}
\begin{theorem}[Sass et al. \cite{Sass et al (2017)}, Theorem 3.4.]	~
	\label{theorem_Konv_Fn} 
	
	\noindent    Under Assumption \ref{assumption_determ_Exp_F} it holds for 
	all  $t\in(0,T]$ for \\[0.5ex]
	$\begin{array}{rll}
		1. &  \text{the conditional  covariance matrix} &
		\lim\limits_{n\rightarrow\infty}\| \Qpro_t^{\HC,\nExperten}\|=0,
		\\[0.5ex]
		2. &  \text{the conditional mean} & 
		\lim\limits_{n\rightarrow\infty} \E \Big[ \big\|\Mpro_t^{\HC,\nExperten}-\drift_t \big\|^2 \Big]=0.
	\end{array}$
\end{theorem}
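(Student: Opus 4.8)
The plan is to obtain the second assertion as a direct consequence of the first, so the real work lies in the decay of the conditional covariance. The structural observation driving everything is that the multiplicative covariance update in \eqref{updateformel_Variance_CN_bedingt} can be recast, for $t>0$ where the Riccati flow in \eqref{Riccati_R} renders $\Qpro_t^{\HC,\nExperten}$ positive definite, as the additive \emph{precision} update
\begin{align}
\big(\Qpro_{t_k}^{\HC,\nExperten}\big)^{-1} = \big(\Qpro_{t_k-}^{\HC,\nExperten}\big)^{-1} + \varianceexp^{-1},
\end{align}
which follows by applying the Woodbury identity to $\Qpro_{t_k}^{\HC,\nExperten}=\Qpro_{t_k-}^{\HC,\nExperten}-\Qpro_{t_k-}^{\HC,\nExperten}(\varianceexp+\Qpro_{t_k-}^{\HC,\nExperten})^{-1}\Qpro_{t_k-}^{\HC,\nExperten}$. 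Thus each expert opinion injects a fixed amount $\varianceexp^{-1}$ of information. Granting the first claim, the second is immediate: since $\Qpro_t^{\HC,\nExperten}$ is deterministic (its jumps in Lemma~\ref{filter_C} depend only on the deterministic pre-jump value), the tower property yields $\E[\norm{\Mpro_t^{\HC,\nExperten}-\drift_t}^2]=\trace \Qpro_t^{\HC,\nExperten}\le \nAktien\,\norm{\Qpro_t^{\HC,\nExperten}}$, which tends to $0$ by the first claim.

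For the first claim I would fix $t\in(0,T]$ and note that since $\delta_n\to 0$ the number $N_n$ of information dates in $(0,t]$ satisfies $N_n\ge t/\delta_n-1\to\infty$. The decay then rests on balancing two competing effects. Between consecutive dates the covariance solves the Riccati equation \eqref{Riccati_R}; because $\norm{\Qpro_t^{\HC,\nExperten}}\le C_{\Qpro}$ uniformly in $n$ by Proposition~\ref{properties_filter}, its right-hand side is bounded by some constant $K$, so the covariance grows by at most $K\delta_n$ in spectral norm over each interval. At each date the precision increases by $\varianceexp^{-1}$. Tracking a scalar surrogate $a_k$ for the precision, one obtains a recursion of the form $a_{k+1}\ge a_k+\norm{\varianceexp}^{-1}-c\,a_k^2\delta_n$ whose stable fixed point is of order $\delta_n^{-1/2}$; hence the accumulated information forces $\norm{\Qpro_t^{\HC,\nExperten}}=O(\sqrt{\delta_n})\to 0$, the $N_n$ gains of fixed size overwhelming the $O(\delta_n)$ growth between them.

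The main obstacle is carrying this balance through for matrices rather than scalars. The mean-reversion matrix $\revspeed$ need not be symmetric and, more seriously, $\varianceexp$ and $\Qpro_t^{\HC,\nExperten}$ do not commute, so the pre- and post-jump covariances are not simultaneously diagonalizable and the clean scalar recursion must be replaced by estimates in the positive-semidefinite ordering. One would use the operator-antimonotonicity of matrix inversion together with $\lambda_{\min}(\varianceexp^{-1})=\norm{\varianceexp}^{-1}>0$ and $\lambda_{\max}(\Sigma_{\mu})$ to control, respectively, the per-date contraction and the inter-date growth, and then propagate the resulting lower bound on the smallest eigenvalue of the precision matrix to a uniform upper bound on $\norm{\Qpro_t^{\HC,\nExperten}}$. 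This propagation is where the technical effort concentrates and is precisely the content established in Sass et al.~\cite{Sass et al (2017)}.
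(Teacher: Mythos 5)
There is no in-paper proof to compare your attempt against: Theorem \ref{theorem_Konv_Fn} is imported verbatim from Sass et al.~\cite{Sass et al (2017)}, and the present paper only states and uses it. Judged on its own merits, your sketch is correct and follows the natural information-accumulation strategy that underlies the cited result. The precision identity $(Q_{t_k}^{Z,n})^{-1}=(Q_{t_k-}^{Z,n})^{-1}+\Gamma^{-1}$ is an equivalent restatement of the update formula \eqref{updateformel_Variance_CN}; the reduction of claim~2 to claim~1 via $\E\big[\|M_t^{Z,n}-\mu_t\|^2\big]=\operatorname{tr} Q_t^{Z,n}\le d\,\|Q_t^{Z,n}\|$ is valid precisely because $Q^{Z,n}$ is deterministic (Lemma \ref{filter_C}); and the balance between one $\Gamma^{-1}$ precision gain per information date and the $O(\delta_n)$ drift of the Riccati flow \eqref{Riccati_R} per interval (bounded uniformly in $n$ thanks to Proposition \ref{properties_filter}) does force the smallest eigenvalue of the precision to reach order $\delta_n^{-1/2}$ within the at least $t/\delta_n-1$ dates in $(0,t]$, whence $\|Q_t^{Z,n}\|=O(\delta_n^{1/2})\to 0$. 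One substantive comment: the matrix-versus-scalar obstacle that you defer to the literature in your final paragraph is already dissolved by the two facts you name there. Weyl's inequality $\lambda_{\min}(A+B)\ge\lambda_{\min}(A)+\lambda_{\min}(B)$ applied at each jump, together with $\lambda_{\min}(Q^{-1})=\|Q\|^{-1}$ and $\|Q_{t_{k+1}-}\|\le\|Q_{t_k}\|+K\delta_n$ between jumps, turns the matrix dynamics verbatim into your scalar recursion for $a_k=\lambda_{\min}\big((Q_{t_k}^{Z,n})^{-1}\big)$; no simultaneous diagonalization or commutativity is needed, so your argument is essentially self-contained rather than reliant on \cite{Sass et al (2017)} for the ``propagation'' step. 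The only hypothesis you should flag explicitly is the positive definiteness of $Q_t^{Z,n}$ for $t>0$ (needed to pass to precisions), which rests on the standing assumption $\Sigma_{\mu}\succ 0$.
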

\subsubsection{High-frequency expert opinions with linearly growing variance}
\label{asymptotic_filter_D} 
Now we consider another asymptotic regime arising in models in which a higher arrival frequency of expert opinions is only available at the cost of accuracy.  We assume that the expert views arrive at equidistant time points and the variance $\varianceexp$ of the views  $Z_k$ grows linearly with $n$. This reflects that contrary to the above setting with constant variance  now investors are not able to gain  an arbitrary amount of information over a fixed time interval.

We recall the dynamics of the continuous expert opinions  $\contexp=(\contexp_t)_{t\in[0,T]}$ given in \eqref{continuous-expert} by the SDE  ~
$d\contexp_t=\mu_t\; dt+ \volexp\;dW_t^{\HD},~ \contexp_0=0$, and make the following 
\begin{assumption}~%
	\label{assumption_determ_Exp_J}%
	\begin{enumerate}
		\item The expert arrival dates are equidistant , i.e., $t_k=t_k^{(\nExperten)}=k\Delta_{\nExperten}$ for
		$k=0,\ldots,\nExperten-1$ with $\Delta_n=\frac{T}{n}$.
		\item The experts covariance matrix is given by
		$\varianceexp=\varianceexp^{(\nExperten)}=\frac{1}{\Delta_{\nExperten}}\volexp\volexp^{\top}.$
		\item
		The normally distributed random vectors $(\varepsilon_k^{\nExperten})$ in \eqref{Expertenmeinungen_fest} are linked
		with the Brownian motion $W^{\HD}$ from
		\eqref{continuous-expert} via
		$\varepsilon_k^{\nExperten}=\frac{1}{\sqrt{\Delta_{\nExperten}}}\int_{[t_k^{(\nExperten)},	t_{k+1}^{(\nExperten)}]}dW_s^{\HD}$.
	\end{enumerate}   
\end{assumption}
In view of the representation of expert opinions in \eqref{Expertenmeinungen_fest} the third assumption implies that 
\begin{align}
	\label{eq:diff_appr_Z}
	Z_k^{(\nExperten)}=\drift_{t_k^{(\nExperten)}}+\frac{1}{\Delta_{\nExperten}}\volexp\int_{[t_k^{(\nExperten)},	t_{k+1}^{(\nExperten)}]}dW_s^{\HD},\quad k=0,\ldots,\nExperten-1.    
\end{align}

The following Theorem shows that in the present setting the  information obtained from observing the discrete-time expert
opinions is asymptotically the same as that from observing the  diffusion process $\HD$  representing the continuous-time expert and defined in \eqref{continuous-expert}. 
\begin{theorem}[Sass et al. \cite{Sass et al (2022)}, Theorems 3.2 and  3.3]~
	\label{theorem_conv_diffusion_appr}
	
	\noindent       Let $p\in[1,+\infty)$. Under  Assumption \ref{assumption_determ_Exp_J} it holds:
	\begin{itemize}
		\item[1)] There exists a constant $K_{\Qpro}>0$ such that ~
		$\big\|\Qpro^{Z,n}_t-\Qpro_t^{\HD} \big\| \leq K_{\Qpro}\Delta_{\nExperten}~~ \text{for all }~ t\in[0,T].$
		\\[0.5ex]
		In particular, it holds ~~
		$~\lim\limits_{n\rightarrow\infty}\sup\limits_{t\in[0,T]}
		\big\|\Qpro_t^{Z,n}-\Qpro_t^{\HD} \big\|=0.$\\[0.0ex]
		\item[2)] There exists a constant $K_{m,p}>0$ such that
		$\E \left[ \big\|\Mpro_t^{Z,n}-\Mpro_t^{\HD} \big\|^p\right]
		\leq K_{m,p}\Delta_{\nExperten} ~~ \text{for all }~ t\in[0,T].$
		\\[0.5ex]
		In particular, it holds ~~
		$~\lim\limits_{n\rightarrow\infty}\sup\limits_{t\in[0,T]}
		\E \left[ \big\|\Mpro_t^{Z,n}-\Mpro_t^{\HD} \big\|^p
		\right]=0.$
	\end{itemize}
\end{theorem}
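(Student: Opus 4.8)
The plan is to compare, on the common probability space furnished by the coupling of Assumption \ref{assumption_determ_Exp_J}, the discrete-time filter of the $Z$-investor with the continuous-time filter of the $\HD$-investor, and to treat the deterministic covariance statement~1) first, since the covariance estimate feeds directly into the stochastic mean estimate~2). Throughout I would use the uniform a priori bound $\Qpro^{Z,n}_t\preceq\Qpro^R_t\preceq C_{\Qpro}I_d$ from Proposition~\ref{properties_filter}, together with the analogous bound for $\Qpro^{\HD}$ and the fact that $\drift$, $\Mpro^{\HD}$ and $\Mpro^{Z,n}$ are Gaussian with moments bounded uniformly on $[0,T]$.

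For part~1), I would first insert the growing variance $\varianceexp^{(n)}=\frac{1}{\Delta_n}\Sigma_{\HD}$ into the covariance jump \eqref{updateformel_Variance_CN_bedingt}. Writing $(\varianceexp^{(n)}+\Qpro)^{-1}=\Delta_n(\Sigma_{\HD}+\Delta_n\Qpro)^{-1}$ and expanding the matrix inverse gives
\[
\Delta\Qpro_{t_k}^{Z,n} = -\Delta_n\,\Qpro_{t_k-}^{Z,n}\Sigma_{\HD}^{-1}\Qpro_{t_k-}^{Z,n} + O(\Delta_n^{2}),
\]
so that each jump reproduces, up to $O(\Delta_n^{2})$, one explicit step of the extra dissipation term $-\Qpro\Sigma_{\HD}^{-1}\Qpro$ that distinguishes the $\HD$-Riccati equation \eqref{Riccati_D} from the $R$-Riccati equation \eqref{Riccati_R} obeyed by $\Qpro^{Z,n}$ between information dates. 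Hence $\Qpro^{Z,n}$ is exactly a first-order (piecewise-Riccati-plus-jump) discretization of the $\HD$-Riccati ODE with local truncation error $O(\Delta_n^{2})$ per step. Since the quadratic Riccati right-hand side is Lipschitz on the relevant bounded set, a discrete Gronwall argument over the $n=T/\Delta_n$ steps accumulates the local errors into a global bound $\norm{\Qpro_t^{Z,n}-\Qpro_t^{\HD}}\le K_{\Qpro}\Delta_n$ uniform in $t$, which is statement~1) and yields the stated uniform convergence.

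For part~2), I would write both conditional means in terms of the common processes $\drift$, $\WR$ and $W^{\HD}$ by substituting the innovation representations into \eqref{Filter_D} and into the corresponding $R$-dynamics of $\Mpro^{Z,n}$. Between information dates the difference $D_t^n:=\Mpro_t^{Z,n}-\Mpro_t^{\HD}$ then solves a linear SDE with the stabilizing drift $-\revspeed D_t^n\,dt$, a diffusion coefficient proportional to $\Qpro_t^{Z,n}-\Qpro_t^{\HD}$ (already $O(\Delta_n)$ by part~1)) driving $\WR$, and a source term $-\Qpro_t^{\HD}\Sigma_{\HD}^{-1}(\drift_t-\Mpro_t^{\HD})\,dt-\Qpro_t^{\HD}\Sigma_{\HD}^{-1}\volexp\,dW_t^{\HD}$ from the continuous assimilation of $\contexp$, which has no continuous counterpart for the $Z$-investor. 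The crucial step is to show that the discrete jumps \eqref{updateformel_Filter_CN_bedingt} compensate this source: inserting $\varianceexp^{(n)}=\frac{1}{\Delta_n}\Sigma_{\HD}$ and the coupling $Z_k=\drift_{t_k}+\frac{1}{\Delta_n}\volexp\int_{t_k}^{t_{k+1}}dW_s^{\HD}$ from \eqref{eq:diff_appr_Z}, the jump at $t_k$ expands as
\[
\Delta\Mpro_{t_k}^{Z,n} = \Qpro_{t_k-}^{Z,n}\Sigma_{\HD}^{-1}\volexp\!\int_{t_k}^{t_{k+1}}\!dW_s^{\HD} + \Delta_n\,\Qpro_{t_k-}^{Z,n}\Sigma_{\HD}^{-1}\big(\drift_{t_k}-\Mpro_{t_k-}^{Z,n}\big) + O(\Delta_n^{2}),
\]
which matches the integral of the $\HD$-source over $[t_k,t_{k+1})$ up to the fluctuation $\int_{t_k}^{t_{k+1}}\big(\Qpro_{t_k-}^{Z,n}-\Qpro_s^{\HD}\big)\Sigma_{\HD}^{-1}\volexp\,dW_s^{\HD}$ and drift errors of order $\Delta_n^{2}$.

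Finally I would estimate $D^n$ in $L^p$ by applying It\^o's formula to $\norm{D_t^n}^p$ (the jumps being treated as increments of a pure-jump term), bounding the drift contributions pathwise by $C\Delta_n$ via part~1) and the uniform moment bounds, and controlling the martingale contributions by the Burkholder--Davis--Gundy inequality. The hard part will be the rate bookkeeping: because the coupling makes the leading $O(\sqrt{\Delta_n})$ stochastic increments of the two filters coincide, only the residual fluctuation $\int_{t_k}^{t_{k+1}}\big(\Qpro_{t_k-}^{Z,n}-\Qpro_s^{\HD}\big)\Sigma_{\HD}^{-1}\volexp\,dW_s^{\HD}$ survives; its integrand is $O(\Delta_n)$ on each interval, so its quadratic variation summed over the $O(1/\Delta_n)$ intervals is of order $\Delta_n^{2}$, whence the martingale part contributes at the same $O(\Delta_n)$ order rather than the naive $O(\sqrt{\Delta_n})$. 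A Gronwall argument exploiting the stabilizing $-\revspeed$ drift then closes the estimate to $\E\big[\norm{D_t^n}^p\big]\le K_{m,p}\Delta_n$ uniformly in $t$. The main obstacle is precisely to verify that this noise cancellation makes the difference genuinely of deterministic order $\Delta_n$, and that the $p$-dependent BDG constants do not spoil the uniformity in $t$.
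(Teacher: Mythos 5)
First, a caveat on the comparison itself: the paper does not prove Theorem \ref{theorem_conv_diffusion_appr} at all --- it imports it from Sass et al.~\cite{Sass et al (2022)} --- so your proposal can only be measured against that reference's strategy and against correctness. Your part 1) is sound and is essentially the argument used there: inserting $\varianceexp^{(n)}=\frac{1}{\Delta_n}\Sigma_{\HD}$ into \eqref{updateformel_Variance_CN_bedingt} shows that each covariance jump equals $-\Delta_n\,\Qpro_{t_k-}^{\HC,n}\Sigma_{\HD}^{-1}\Qpro_{t_k-}^{\HC,n}+O(\Delta_n^{2})$, so that $\Qpro^{\HC,n}$ is a first-order discretization of the Riccati equation \eqref{Riccati_D}; combined with the uniform bounds of Proposition \ref{properties_filter}, local Lipschitz continuity of the quadratic right-hand side and a discrete Gronwall inequality, this yields the uniform $O(\Delta_n)$ estimate of part 1).

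Part 2), however, contains a genuine gap in the rate bookkeeping. The cancellation you invoke --- ``the coupling makes the leading $O(\sqrt{\Delta_n})$ stochastic increments of the two filters coincide'' --- is valid only \emph{at} the information dates, whereas the theorem requires a bound for \emph{all} $t\in[0,T]$. Under the coupling \eqref{eq:diff_appr_Z}, the jump \eqref{updateformel_Filter_CN_bedingt} forces $\Mpro^{\HC,n}$ to absorb the entire increment instantaneously at $t_k$, namely $\Delta\Mpro^{\HC,n}_{t_k}\approx \Qpro^{\HC,n}_{t_k-}\Sigma_{\HD}^{-1}\volexp\big(W^{\HD}_{t_{k+1}}-W^{\HD}_{t_k}\big)$, while $\Mpro^{\HD}$ assimilates the same increment continuously over $(t_k,t_{k+1})$. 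Hence for $t$ strictly inside the interval the difference $D^n_t$ contains the uncancelled tail $\Qpro^{\HC,n}_{t_k-}\Sigma_{\HD}^{-1}\volexp\big(W^{\HD}_{t_{k+1}}-W^{\HD}_{t}\big)$, whose coefficient is deterministic and whose Gaussian factor is independent of every other ingredient of $D^n_t$; its $L^2$-norm is of order $\sqrt{t_{k+1}-t}$, i.e.\ of order $\sqrt{\Delta_n}$ just after $t_k$. Your accounting compares the jump with the $\HD$-source integrated over the \emph{whole} interval $[t_k,t_{k+1})$, i.e.\ it verifies the cancellation only at interval endpoints, and the conclusion that the martingale part contributes $O(\Delta_n)$ ``rather than the naive $O(\sqrt{\Delta_n})$'' is false pointwise in $t$: in fact $\sup_{t}\E\|D^n_t\|^2$ is of exact order $\Delta_n$, not $\Delta_n^{2}$. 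The repairable version of your scheme (BDG applied to the uncancelled Brownian tail plus your genuinely $O(\Delta_n)$ residuals, then Gronwall) gives $\E\|D^n_t\|^p\le K\Delta_n^{p/2}$; for $p\ge 2$ this implies the quoted bound $K_{m,p}\Delta_n$ and it yields the displayed convergence for every $p\ge 1$. Note moreover that for $p\in[1,2)$ the literal bound $K_{m,p}\Delta_n$ is unattainable: conditioning on everything except the independent tail gives $\E\|D^n_t\|\ge c\sqrt{t_{k+1}-t}$ just after each $t_k$, so the rate must be read as $\Delta_n^{p/2}$ there. Any argument claiming a uniform-in-$t$, order-$\Delta_n$ noise cancellation, as yours does, is attempting to prove something that is not true.
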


\section{Utility Maximization}
\label{Utility_Max}
This section is devoted to the solution of the utility 	maximization problem  \eqref{opti_org}. We briefly review in Subsec.~\ref{LogUtility} the solution for logarithmic utility. For the more demanding case  of power utility we reformulate problem \eqref{opti_org} in Subsec.~\ref{PowerUtility} as an equivalent stochastic optimal control problem which can be solved by dynamic programming techniques. We present the solutions for the fully informed investor ($H=F$) in Subsec.~\ref{FullInfo}.  Results for partially informed investors with diffusion type observations ($H=R,J$) and  for the $Z$-investor observing discrete-time expert opinions will follow  in Secs.~\ref{UtilityMaxDiffusion} and \ref{UtilityMaxDiscreteExperts}.

\subsection{Logarithmic Utility}
\label{LogUtility}
For an investor who wants to maximize expected logarithmic utility of terminal wealth
optimization problem \eqref{opti_org} reads as
\begin{eqnarray}
	\label{opti_org_log} \mathcal
	V_0^H(x_0):=\sup\limits_{\pi\in\mathcal
		A_0^{H}}\E\left[ \log(X_T^{\pi})\mid\mathcal{F}^H_0 \right] .
\end{eqnarray}

This optimization problem has been solved in Gabih et al.~\cite{Gabih et al (2014)} and generalized in Sass et al.~\cite{Sass
	et al (2017)} and Kondakji \cite{Kondkaji (2019)} in the context of the different information regimes
addressed in this paper. In the sequel we state the obtained
results. 
\begin{proposition}
	\label{LogUtilityValue}
	The optimal strategy $(\pi^{H}_t)_{t\in[0,T]})$ for the optimization problem
	\eqref{opti_org_log} is  given in feedback form by $\pi^H_t=\Pi^H(t,\Mpro_t^H)$ where the optimal decision rule is given by
	$$\Pi^{H}(t,m)=\Sigma_{R}^{-1}m \quad\text{for } m\in\R^d,$$
	and the optimal value is
	\begin{align}\nonumber
		\mathcal V_0^H(x_0)&=\log(x_0)+\frac{1}{2}\int\nolimits_0^T \trace\big(\Sigma_{R}^{-1}\E[\Mpro^{H}_t(\Mpro^{H}_t)^{\top}]\big)dt\\
		\label{Log_Utility_Value}
		&=\log(x_0)+\frac{1}{2}\int\nolimits_0^T \trace\big(\Sigma_{R}^{-1}\big(Var[\drift_t]+\E [\drift_t]\E [\drift_t^{\top}]-\Qpro_t^{H}\big)\big)dt.
	\end{align}
\end{proposition}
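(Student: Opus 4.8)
The plan is to exploit the myopic structure of logarithmic utility so that no Hamilton--Jacobi--Bellman equation is needed: a logarithmic transform reduces the problem to a pointwise (in $(t,\omega)$) quadratic maximization. First I would rewrite the return dynamics through the innovations representation of Lemmas~\ref{Kalmann_Filter_R_lemma}, \ref{Kalmann_Filter_D_lemma} and \ref{filter_C}, namely $dR_t = \Mpro_t^H\,dt + \Sigma_R^{1/2}\,d\Innov_t^H$ with $\Innov^H$ an $\mathbb F^H$-Brownian motion. For the $Z$-investor this identity still holds on all of $[0,T]$, since the return process $R$ is continuous and the filter $\Mpro^Z$ jumps only at the finitely many deterministic dates $t_k$, which do not affect the Lebesgue integral $\int_0^t \Mpro_s^Z\,ds$. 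Inserting this into \eqref{wealth_phys} and applying It\^o's formula to $\log X_t^\pi$ gives
\begin{align*}
\log X_T^\pi = \log x_0 + \int_0^T\Big(\pi_t^{\top} \Mpro_t^H - \tfrac12 \pi_t^{\top} \Sigma_R \pi_t\Big)\,dt + \int_0^T \pi_t^{\top} \Sigma_R^{1/2}\,d\Innov_t^H.
\end{align*}

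Second, I would take the conditional expectation given $\mathcal F_0^H$. Because $\pi$ is $\mathbb F^H$-adapted with $\E\int_0^T\|\pi_t\|^2\,dt<\infty$ and $\Sigma_R^{1/2}$ is constant, the stochastic integral is a square-integrable $\mathbb F^H$-martingale started at $0$, so its $\mathcal F_0^H$-conditional expectation vanishes, yielding
\begin{align*}
\rewardorigin_0^H(x_0;\pi) = \log x_0 + \E\Big[\int_0^T g_t(\pi_t)\,dt \,\Big|\, \mathcal F_0^H\Big], \qquad g_t(\pi):= \pi^{\top} \Mpro_t^H - \tfrac12 \pi^{\top} \Sigma_R \pi.
\end{align*}
The decisive point is that the non-observable drift $\drift_t$ has been replaced by its filter $\Mpro_t^H$, which is exactly the reduction to full information. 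Since $\Sigma_R$ is positive definite, $g_t$ is strictly concave and is maximized pathwise at $\pi^\ast_t = \Sigma_R^{-1}\Mpro_t^H$ with maximal value $\tfrac12 (\Mpro_t^H)^{\top} \Sigma_R^{-1}\Mpro_t^H$; this identifies the feedback rule $\Pi^H(t,m)=\Sigma_R^{-1}m$. I would then verify that $\pi^\ast$ is admissible: it is $\mathbb F^H$-adapted, keeps $X^\pi$ strictly positive, and satisfies $\E\int_0^T\|\pi_t^\ast\|^2dt<\infty$ because the second moment $\E[\Mpro_t^H(\Mpro_t^H)^{\top}]$ is finite and uniformly bounded on $[0,T]$ (controlled by the bound on $\Qpro_t^H$ from Proposition~\ref{properties_filter}). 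As $g_t(\pi_t)\le g_t(\pi^\ast_t)$ holds pathwise for every admissible $\pi$ with equality attained, passing to the supremum proves optimality of $\pi^\ast$ and gives the first value formula after using $x^{\top} A x = \trace(A\,xx^{\top})$ and Fubini.

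Third, for the second representation I would invoke the conditional variance decomposition. By definition of $\Mpro^H$ and $\Qpro^H$ one has $\E[\drift_t\drift_t^{\top} \mid \mathcal F_t^H] = \Qpro_t^H + \Mpro_t^H(\Mpro_t^H)^{\top}$; taking expectations and noting that $\Qpro_t^H$ is deterministic for $H=R,J$ and also for $H=Z$ (its jump factors $\rho_k$ depend only on the deterministic $\Qpro_{t_k-}^Z$) gives
\begin{align*}
\E[\Mpro_t^H(\Mpro_t^H)^{\top}] = \E[\drift_t\drift_t^{\top}] - \Qpro_t^H = \mathrm{Var}[\drift_t] + \E[\drift_t]\E[\drift_t^{\top}] - \Qpro_t^H,
\end{align*}
which upon substitution yields the second equality.

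I expect the only genuine technical point to be the justification that the stochastic integral is a true martingale, so that it drops out under $\E[\,\cdot\mid\mathcal F_0^H]$, together with the interchange of expectation and time integration; the admissibility bound and the constant volatility $\Sigma_R^{1/2}$ make both routine. A minor bookkeeping remark is that the expectations appearing in $\valueorigin_0^H(x_0)$ are understood conditionally on $\mathcal F_0^H$ and reduce to the stated unconditional expressions whenever there is no prior drift information, i.e.\ $\mathcal F_0^I=\{\varnothing,\Omega\}$.
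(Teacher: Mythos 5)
Your proof is correct and is essentially the argument the paper relies on: the paper states Proposition \ref{LogUtilityValue} without giving its own proof, deferring to \cite{Gabih et al (2014)}, \cite{Sass et al (2017)} and \cite{Kondkaji (2019)}, and those sources proceed exactly as you do---rewriting the wealth dynamics via the innovations representation so the hidden drift is replaced by its filter, discarding the stochastic integral by the square-integrability built into $\mathcal{A}_0^H$, maximizing the strictly concave quadratic pointwise to get $\Pi^H(t,m)=\Sigma_R^{-1}m$, and using the conditional variance decomposition $\E[\drift_t\drift_t^{\top}\mid\mathcal F_t^H]=\Qpro_t^H+\Mpro_t^H(\Mpro_t^H)^{\top}$ for the second value formula. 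Only two cosmetic points: for $H=\HD$ the innovation process is $2\nAktien$-dimensional, so the return decomposition uses only its first $\nAktien$-dimensional block (the paper encodes this via $\sigma_{\wealth}^{\HD}=(\Sigma_R^{1/2},0_{\nAktien\times\nAktien})$), and the uniform bound on $\E[\Mpro_t^H(\Mpro_t^H)^{\top}]$ follows from the boundedness of the Ornstein--Uhlenbeck second moments together with $\Qpro_t^H\succeq 0$ rather than from Proposition \ref{properties_filter} alone.
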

We assumed in our model for the drift process $\drift$ in \eqref{drift} that the matrix $\revspeed$ is positive definite. Using the closed-form solution of the SDE \eqref{drift} given in \eqref{mu_explicit} it can be deduced that the mean $\E [\drift_t]$ and covariance matrix $Var[\drift_t]$  are bounded. Further, it is known  from Prop.~\ref{properties_filter} that also the conditional covariance matrix $\Qpro_t^{H}$ is bounded. Thus from representation \eqref{Log_Utility_Value}  it can be derived that  the value function $\mathcal V_0^H(x_0) $ is bounded.  As already mentioned in Subsec.~\ref{Subsec:WellPosedness} there exists some constant $\rewardconstant^{H}>0$ such that $\mathcal V_0^H(x_0)\le \rewardconstant^{H}$.

Representation \eqref{Log_Utility_Value} also shows that the value function depends on the information regime $H$ only via  an integral functional of the
conditional covariance  $(\Qpro^{H}_t)_{t\in[0,T]}$. This allows to carry over 
the convergence results for the conditional covariance matrices  $\Qpro^{Z,n}$ for $n\to\infty$ given in Theorems \ref{theorem_conv_diffusion_appr} and \ref{theorem_Konv_Fn} to the  value functions. 
We refer to Sass et al. \cite[Corollary 5.2.]{Sass et al (2017)} for the convergence  $\mathcal V_0^{Z,n}(x_0)\to \mathcal V_0^{F}(x_0)$ 
for the case of a fixed expert's variance $\varianceexp$ and to Sass et al. \cite[Corollary 5.2.]{Sass et al (2022)} for the convergence  $\mathcal V_0^{Z,n}(x_0)\to \mathcal V_0^{J}(x_0)$ for linearly growing variance.

\subsection{Power Utility}
\label{PowerUtility} In this section we focus on  the maximization of expected power utility as  given in \eqref{util_def}.   That problem can be treated as a stochastic optimal control problem and solved using dynamic programming methods. 	We will apply a change of measure technique which was already used
among others by Nagai and  Peng \cite{Nagai and Peng (2002)} and Davis and Lleo \cite{Davis and Lleo (2013_1)}. This allows to study  simplified control problems in which the state variables are reduced  to the (slightly modified) filter processes of conditional mean whereas the wealth process  can be removed from the state.

\paragraph{Performance criterion} 
Recall  equation
\eqref{wealth_phys} for the wealth process $\wealth^{\pi}$ saying that ${dX_t^{\pi}}/{X_t^{\pi}}=
\pi_t^{\top}dR_t$. Rewriting SDE \eqref{ReturnPro} for the return process $R$  in terms  of the
innovations processes $\widetilde{W}^{H}$ given in
Lemmas \ref{Kalmann_Filter_R_lemma}, \ref{Kalmann_Filter_D_lemma}
and \ref{filter_C} we obtain for $H=\HR, \HD, \HC$
the $\mathbb F^{H}$-semimartingale decomposition of $\wealth^{\pi}$
(see also Lakner \cite{Lakner (1998)}, Sass Haussmann
\cite{Sass and Haussmann (2004)})  
\begin{align}
	\label{wealth-innovation}
	\frac{d\wealth_t^{\pi}}{\wealth_t^{\pi}}=\pi^{\top}_t\Mpro_t^{H}+\pi^\top_t\sigma_{\wealth}^{H}\;d\widetilde{W}_t^{H},\quad\wealth_0^{\pi}=x_0,
\end{align}
where $\sigma_{\wealth}^{R}=\sigma_{\wealth}^{\HC}= \Sigma_R^{1/2}$   and
$\sigma_{\wealth}^{\HD}=(\Sigma_R^{1/2},~0_{\nAktien\times\nAktien})$.
From the above wealth equation we obtain that for a given admissible
strategy  $\pi$ the power  utility of terminal wealth $\wealth_T^{\pi}$ is
given by
\begin{align}
	\label{Power_Nutzen_zerlegung}
	\utility_{\theta}(\wealth_T^{\pi})&=\frac{1}{\theta}(\wealth_T^{\pi})^{\theta}
	=\frac{x_0^{\theta}}{\theta} \Radon_T^{H}\exp\Big\{\int\nolimits_0^T
	{b}(\Mpro_s^{H},\pi_s)ds \Big\}
\end{align}
where  for $m,p\in\R^d$
\begin{align}
	\label{b_underline}
	{b}(\zustand,\pointp)&=\theta\Big(\pointp^{\top}\zustand-\frac{1-\theta}{2} \|\pointp^{\top}\sigma_{\wealth}\|^2 \Big) 
	\quad\text{and}\\
	\label{Radon}
	\Radon_T^{H}&=\exp\Big\{ \theta\int\nolimits_0^T \pi_s^{\top}\sigma_{\wealth} \;d\widetilde{W}_s^{H} 
	-\frac{1}{2}\theta^2\int\nolimits_0^T \|\pi_s^{\top}\sigma_{\wealth}\|^2  \;ds \Big\}.
\end{align}
 Since we require that admissible strategies $\pi$ satisfy $\E[\Radon_T^{H}]=1$
we can define an equivalent probability measure $\overline{\P}^{H}$
by $\Radon_T^{H}=\frac{d\overline{\P}^{H}}{d\P~}$ for which Girsanov's
theorem		guarantees that the process
$\overline{W}^{H}=(\overline{W}_t^{H})_{t\in[0,T]}$ with
\begin{align}
	\label{Girsanov_W_prozess}
	\overline{W}_t^{H}=\widetilde{W}_t^{H}-\theta\int_0^t
	\sigma_{\wealth}^{\top}\pi_s \; ds,\quad t\in[0,T],
\end{align}
is a $(\mathbb F^{H},\overline{\P}^{H})-$standard Brownian motion.
This change of measure allows to rewrite the performance criterion $\rewardorigin_0^H(x_0;\pi) =
\E\left[\utility_{\theta}(X_T^{\pi})~\mid~\mathcal{F}^H_0\right]$ of the  utility maximization problem \eqref{opti_org} for $\pi\in\mathcal
A^{H}_0$ as
\begin{align}\nonumber		
	\rewardorigin_0^H(x_0;\pi)&= \frac{x_0^{\theta}}{\theta} 
	\E\Big[  \Radon_T^{H}\exp\Big\{\int\nolimits_0^T {b}(\Mpro_s^{H,\filterinitial,\condcovinitial},\pi_s)ds \Big\}\Big]\\
	\label{ExpUtility_RiskSensitive}
	&=\frac{x_0^{\theta}}{\theta} \Ebar\Big[\exp\Big\{\int\nolimits_0^T {b}(\Mpro_s^{H,\filterinitial,\condcovinitial},\pi_s)ds \Big\}\Big].
\end{align}
Above we used the   notation $\Ebar$ for the expectation under the measure $\overline{\P}^H$. Further, the notation   $\Mpro^{H,\filterinitial,\condcovinitial}$ emphasizes the dependence of the filter processes $\Mpro^H, \Qpro^H$ on the initial values $\filterinitial,\condcovinitial$ at time $t=0$  and reflects the conditioning in $\E\left[\utility_{\theta}(X_T^{\pi})~\mid~\mathcal{F}^H_0\right]$ on the initial information given by the $\sigma$-algebra $\mathcal{F}^H_0$. 
It turns out that the utility maximization problem \eqref{opti_org}
is equivalent to the maximization of 
\begin{align}\label{opti-equivalent-generi}
	\Jpi^H(\filterinitial,\condcovinitial;\pi)=\Ebar \Big[\exp\Big\{\int\nolimits_0^T
	{b}(\Mpro_s^{H,\filterinitial,\condcovinitial},\pi_s)ds \Big\}\Big]
\end{align}
over all admissible strategies for $\theta\in(0, 1)$ while for
$\theta< 0$ the above expectation has to be minimized.  Note that it holds $\rewardorigin_0^H(x_0;\pi)=\frac{x_0^{\theta}}{\theta}\, \Jpi^H(\filterinitial,\condcovinitial;\pi)$.
This allows us to remove the wealth process $X$ from the state of the control problem which we formulate next.

\paragraph{State process}  In view of the performance criterion \eqref{opti-equivalent-generi} the state process of the associated control problem is  	 the conditional mean
	$\Mpro^{H}$ for which we need to express the dynamics under the measure $\overline{\P}^H$. Recall the $\P$-dynamics of 	$\Mpro^{H}$   given in Lemma \ref{Kalmann_Filter_R_lemma} through \ref{filter_C}. Using \eqref{Girsanov_W_prozess} the dynamics under $\overline{\P}^H$ for
$H=\HR, \HC,\HD$ are obtained by expressing $\widetilde{W}^{H}$ in terms of $\overline{W}^{H}$ and leads to the SDE for $\Mpro^{H}=\Mpro^{H,\filterinitial,\condcovinitial}$
\begin{align}
	\label{Filter_M_int_gen}
	d\Mpro_t^{H}& =\alphamm(\Mpro_t^{H},\Qpro_t^{H};\pi_t)\, dt
	+\betam^{H}(\Qpro_{t}^{H})\,d\overline{W}_t^{H},\quad \Mpro_0^{H}=\filterinitial,
\end{align}
where for $m,p\in\R^d$ and $q\in\R^{d\times d}$
\begin{align}
	\label{alpha_beta_MQ_def}
	\alphamm(\zustand,\variance;\pointp)=\revspeed(\revlevel-\zustand)+ \theta q \pointp \quad \text{and}\quad
	\betam^{H}(\variance)= \left\{\begin{array}{cl}
		\variance\Sigma_{\HR}^{-1/2}, & H=\HR,\HC,\\[1ex]
		\variance(\Sigma_{\HR}^{-1/2},\Sigma_{J}^{-1/2}),& H=\HD.
	\end{array}\right.
\end{align}
Note that for $H=\HC$ the above SDE describes the dynamics only between two arrival dates
$t_{k-1}$ and $t_k, k=1,\ldots,n,$ whereas  at the arrival dates $t_k$ according to the
updating formula \eqref{updateformel_Filter_CN}   there are jumps of size $\Mpro_{t_k}^{\HC}-\Mpro_{t_k-}^{\HC} =(I_d-\rho_k)(Z_k-\Mpro_{t_k-}^{\HC})$.
Further, note that the drift coefficient $\alphamm$ in the SDE \eqref{Filter_M_int_gen} for $\Mpro^H$ now depends also on the conditional variance $\Qpro^H$ as well as on the strategey $\pi$.
Since the  conditional covariance $\Qpro^H$  is deterministic it is not affected by the change of measure.

The case of full information can formally be incorporated in
our model with the settings $\Mstate^F=\mu$ and a state equation
\eqref{Filter_M_int_gen}   with the coefficients
$\alphamm(\zustand,\variance,\pointp)=\revspeed(\revlevel-\zustand)$,
$\betam^F(q) =\voldrift$  which are independent of $q$.

\medskip
\paragraph{Markov Strategies}
To apply the dynamic programming approach to the
optimization problem \eqref{opti-equivalent-generi} the state
process $\Mstate^{H}$ needs to be a Markov process adapted to
$\mathbb F^{H}$. To allow for this situation we restrict the set of
admissible strategies to those of Markov type which are defined  in terms  
of time and the state process $\Mstate^{H}$ according to a given specified
decision rule $\Pi$, i.e, $\pi_t=\Pi(t,\Mstate^{H}_t)$ for a some
given measurable function $\Pi : [0,T]\times\R^d\rightarrow
\mathbb{R}^{\nAktien}$. Below  we will need some technical
conditions on $\Pi$ which we collect in the following
\begin{assumption}\label{admi_stra_rule}~
	\begin{itemize}
		\item[1.]\textbf{Lipschitz condition}\\
		There exists a constant $C_L>0$ such that for all $m_1,
		m_2\in\R^d$ and all $t\in[0,T]$ it holds
		\begin{align}\label{Lipschitz_rule}
			\|\Pi(t,m_1)-\Pi(t,m_2)\|\leq C_L\|m_1-m_2\|.
		\end{align}
		\item[2.]\textbf{Linear growth condition}\\
		There exists a constant $C_G>0$ such that for all $m\in\R^d$
		and all $t\in[0,T]$ it holds
		\begin{align}\label{Linear-growth-rule}
			\|\Pi(t,m)\|\leq C_G(1+\|m\|).
		\end{align}
		
			\item[3.]\textbf{Integrability condition } \\
			For  the information regimes $H=\HR,\HD,\HC$ the strategy processes $\pi$ defined by $\pi_t=\Pi(t,\Mstate^H_t)$ on $[0,T]$  are such that the  process $\Radon$ defined by \eqref{Radon} satisfies 	$\E[\Radon^H_T]=1$.		
	\end{itemize}
\end{assumption}
\noindent We denote by
\begin{align}
	\label{set_admiss_Markoc} \mathcal{A}^H:=\Big\{\Pi:[0,T]\times
	\R^d\to\R^d: ~\Pi \text{ is a measurable function satisfying
		Assumption \ref{admi_stra_rule}}\Big\}
\end{align}%
the \textit{set of admissible decision rules}. 

\begin{remark}
	\label{remark on decision rule}  The integrability  condition 
		guarantees that the Radon-Nikodym density 	process $\Radon^{H}$ given in \eqref{b_underline} is 
		a martingale, hence	the equivalent measure	$\overline{\P}^H$ is well-defined.
	A Markov strategy $\pi=(\pi_t)_{t\in[0,T]}$ with
	$\pi_t=\Pi(t,\Mstate_t^{H})$ defined by an admissible decision rule
	$\Pi$ is contained in the set of admissible strategies $\mathcal
	{A}^H_0$ given in \eqref{set_admiss_0} since by construction it is
	$\mathbb{F}^{H}$-adapted, the positivity of the wealth process
	$\wealth^{\pi}$ follows from \eqref{Power_Nutzen_zerlegung}.   The integrability condition implies the
	square-integrability of $\pi$.  Finally, the Lipschitz and linear growth condition ensure that SDE \eqref{Filter_M_int_gen} for the dynamics for the controlled state process  enjoys for all admissible strategies a unique strong solution.
\end{remark}

\paragraph{Control problem} We are now ready to formulate the
stochastic optimal control problem with the state process $\Mstate^H
$ and a Markov control defined by the decision rule $\Pi$. The
dynamics of the  state process  $\Mstate^H$  are given in     \eqref{Filter_M_int_gen}.
We write $\Mstate^{H,\Pi,t,m}_s$  for the  state process at time $s\in[t,T]$  controlled by the decision rule $\Pi$ and  starting at   time $t\in[0,T]$  with initial value $\Mstate^H_0=m\in\R^d$.      
Note that $\Mstate^{H,\Pi,t,m}_s$  depends on the conditional covariance $\Qpro^H$ which is deterministic and can be computed offline.  Therefore $\Qpro^H$ needs not to be included as state process of the control problem. Further, we remove the initial value $\condcovinitial=\Qpro^H_t$ from the notation but keep in mind the dependence of $\Mstate^{H,\Pi,t,m}_s$ on $\Qpro^H$.

 To solve the control problem \eqref{opti-equivalent-generi}  we
will apply the  dynamic programming approach which requires the
introduction of the following reward and value functions. For all
$t\in[0,T]$ and $\Pi\in\mathcal{A}^H$ the associated reward function  or performance criterion
of the control problem \eqref{opti-equivalent-generi}  reads as
\begin{align}
	\label{Zielfkt_H} \reward^{H}(t,m;\Pi)&:= 
	\Ebar\Big[  \exp\Big\{ \int\nolimits_t^Tb(\Mstate^{H,\Pi,t,m}_s,\Pi(s,\Mstate^{H,\Pi,t,m}_s))ds \Big\}\Big],
	~~\text{for}~ \Pi\in\mathcal{A}^H,
\end{align}
while the associated value function reads 
\begin{align}
	\label{Wertfkt_H}
	\valuefkt^{H}(t,m)&:=\begin{cases}\sup\limits_{\Pi\in\mathcal{A}^H}\reward^{H}(t,m;\Pi),&\quad \theta\in(0,1), \\[2ex] \inf\limits_{\Pi\in\mathcal{A}^H}\reward^{H}(t,m;\Pi),&\quad \theta\in(-\infty,0), \end{cases}
\end{align}
and it holds $\valuefkt^{H}(T,m)=\reward^{H}(T,m;\Pi)=1$. In the
sequel we will concentrate on the case $0 < \theta < 1$, the
necessary changes for $\theta < 0$ will be indicated where
appropriate.

In view of relation \eqref{ExpUtility_RiskSensitive}  and the above
transformations the value function of the original utility
maximization problem \eqref{opti_org} can be obtained  from the
value function of control problem \eqref{Wertfkt_H} by
\begin{align}
	\label{Value_Orig_RiskSens}
	\mathcal{V}_0^H(x_0)=\frac{x_0^\theta}{\theta} \valuefkt^{H}(0,m_0).
\end{align}

\subsection{Full Information}
\label{FullInfo} The utility maximization problem for the case of
full information $H=F$  is already investigated in Kim und Omberg
\cite{Kim and Omberg (1996)}, Brendle \cite{Brendle2006} and  Davis and Lleo \cite[Chapter 2]{Davis and Lleo (2014)}.  In our analysis it
will serve as a reference case for the comparison with results for
partial information.  Recall that for $H=F$  the state process
is set to be  the drift, i.e.,  $\Mstate^F=\mu$ whereas for the partially informed investors the state $\Mstate^H$ is the conditional mean process under the measure $\overline{\P}^H$. In the  state equation
\eqref{Filter_M_int_gen}   for $H=\HF$  the coefficients read as
$\alphamm(\zustand,\variance,\pointp)=\revspeed(\revlevel-\zustand)$,
$\betam^F(q) =\voldrift$.
Below we only present the  results   for the associated control
problem \eqref{Wertfkt_H}  which will serve as a reference when we investigate the other
information regimes of partial information. For details we refer to  Kondakji \cite[Sec.~4.1, 5.1]{Kondkaji (2019)}.

\paragraph{Well posedness} We assume that the model parameters $\volR,\voldrift,\revspeed,T,\theta$ are such that the following  sufficient condition for the well posedness of the control problem under full information is satisfied. It is derived in our paper  \cite[Corollary 3.4]{Gabih et al (2022) Nirvana} and requires that   the terminal value problem for the Riccati equation
	\begin{align}
		\label{wellposed_full}
		\frac{d A_\cpsi (t)}{dt}=-2 A_\cpsi (t)\Sigma_{\drift}A_\cpsi (t)+\revspeed^{\top} A_\cpsi (t)+A_\cpsi (t)\revspeed-\cpsi\Sigma_R^{-1},\quad A_\cpsi(T)=0_{\nAktien\times\nAktien}
	\end{align}
	has for  $ \cpsi=\cpsi_0=\frac{\theta}{2(1-\theta)^2} $ a bounded solution on [0,T]. That condition implies restrictions to the choice of model parameters for $\theta\in(0,1)$, that is for investors which are less risk-averse than log-utility maximizing investor. For $\theta<0$ it is always fulfilled.  This follows from Theorem 5.2 in Fleming and Rishel \cite{Fleming and Rishel (1975)}. For $\theta<0$  we have $\cpsi_0<0$ and $\cpsi_0\Sigma_R^{-1}$ is negative semindefinite. Then the above theorem says that  the solution $A_{\cpsi_0}$ to the  Riccati ODE \eqref{A_bound} does not explode on $(-\infty,T)$ and is thus always bounded on $[0,T]$.
\begin{theorem}\label{kandidate_full}
	For the control problem \eqref{Wertfkt_H} under full information ($H=\HF$) the optimal decision rule is  for $t\in[0,T), m\in \R^d$ given by
	\begin{align}
		\label{optimal_str_F} \Pi^{\HF}= \Pi^{\HF}(t,m)=\frac{1}{(1-\theta)}
		\Sigma_R^{-1} m.
	\end{align}
	The value function  for $t\in[0,T], m\in \R^d$ is given by
	\begin{align}
		\label{Ansatz_DPE_F} \valuefkt^{\HF}(t,m)=\exp\Big\{ m^{\top}
		A^{\HF}(t) m+m^{\top}B^{\HF}(t)+C^{\HF}(t)  \Big\},
	\end{align}
	where  $A^{\HF}$, $B^{\HF}$ and $C^{\HF}$ are  functions
	on
	$[0,T]$ taking values in $\mathbb R^{{\nAktien}\times
		{\nAktien}}$, $\mathbb R^{\nAktien}$ and $\mathbb R$,  respectively,  satisfying a terminal value problem for the following  system of ODEs  
	\begin{align}
		\label{A_bound}
		\frac{d \Abound (t)}{dt}&=-2 \Abound (t)\Sigma_{\drift}\Abound (t)+\revspeed^{\top} \Abound (t)+\Abound (t)\revspeed-\frac{1}{2}\frac{\theta}{1-\theta}\Sigma_R^{-1},&& \Abound(T)=0_{\nAktien\times\nAktien},\\[2ex]
		\label{B_bound}
		\frac{d\Bbound(t)}{dt}&=-2\Abound(t)\revspeed\overline{\drift}+\big[\revspeed^{\top}-2\Abound(t)\Sigma_{\drift}\big] \Bbound(t),&&
		\Bbound(T)=0_{\nAktien}, \\[2ex]
		\label{C_bound}
		\frac{d\Cbound(t)}{dt}&=-\frac{1}{2}(\Bbound(t))^{\top} \Sigma_{\drift} 
		\Bbound(t)-(\Bbound(t))^{\top}\revspeed\overline{\drift}-\trace\{\Sigma_{\drift} \Abound(t)\},&&
		\Cbound(T)=0.
	\end{align}			
\end{theorem}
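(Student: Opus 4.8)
The plan is to solve the control problem \eqref{Wertfkt_H} for $H=\HF$ by the dynamic programming (HJB) approach, exploiting that under full information the state $\Mstate^{\HF}=\drift$ is an autonomous Ornstein--Uhlenbeck process whose dynamics do not depend on the control. First I would write down the HJB equation. Since the running reward enters multiplicatively through $\exp\{\int b\,ds\}$, applying the dynamic programming principle and It\^o's formula to $\exp\{\int_t^s b(\Mstate_u,\Pi(u,\Mstate_u))\,du\}\,\valuefkt^{\HF}(s,\Mstate_s)$ yields, for $\theta\in(0,1)$,
\begin{align*}
\partial_t \valuefkt^{\HF}+\generator \valuefkt^{\HF}+\Big(\sup_{p\in\R^d} b(m,p)\Big)\valuefkt^{\HF}=0,\qquad \valuefkt^{\HF}(T,m)=1,
\end{align*}
where $\generator\varphi(m)=[\revspeed(\revlevel-m)]^\top\nabla_m\varphi+\tfrac12\trace(\Sigma_{\drift}\nabla_m^2\varphi)$ is the generator of $\drift$. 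The inner maximization is explicit: with $b(m,p)=\theta\big(p^\top m-\tfrac{1-\theta}{2}\,p^\top\Sigma_R p\big)$ and $\theta\in(0,1)$ the map $p\mapsto b(m,p)$ is strictly concave, since its Hessian $-\theta(1-\theta)\Sigma_R$ is negative definite, so the first-order condition $m=(1-\theta)\Sigma_R p$ gives the unique maximizer $p^\ast=\tfrac{1}{1-\theta}\Sigma_R^{-1}m$, which is the claimed rule \eqref{optimal_str_F}. Substituting back gives $\sup_p b(m,p)=\tfrac{\theta}{2(1-\theta)}\,m^\top\Sigma_R^{-1}m$. For $\theta<0$ the same $p^\ast$ is the unique minimizer and the HJB carries an infimum.

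Next I would insert the exponential-quadratic ansatz \eqref{Ansatz_DPE_F}, namely $\valuefkt^{\HF}(t,m)=\exp\{m^\top A(t)m+m^\top B(t)+C(t)\}$ with $A(t)$ symmetric, into the reduced HJB. Using $\nabla_m\valuefkt^{\HF}=(2Am+B)\valuefkt^{\HF}$ and $\nabla_m^2\valuefkt^{\HF}=\big[2A+(2Am+B)(2Am+B)^\top\big]\valuefkt^{\HF}$, one divides by the strictly positive factor $\valuefkt^{\HF}$ and obtains a polynomial identity in $m$ that must hold for all $m\in\R^d$. Matching the purely quadratic, the linear, and the constant terms separately (symmetrizing the cross term $m^\top\revspeed^\top A m$) produces exactly the three coupled ODEs: the matrix Riccati equation \eqref{A_bound} from the quadratic part, the linear ODE \eqref{B_bound} for $\Bbound$, and \eqref{C_bound} for $\Cbound$, the last one picking up the $\trace(\Sigma_{\drift}\Abound)$ term from the Hessian. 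The terminal condition $\valuefkt^{\HF}(T,\cdot)\equiv1$ forces $\Abound(T)=0$, $\Bbound(T)=0$, $\Cbound(T)=0$, giving the stated terminal value problem.

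Finally I would turn this candidate into a rigorous solution via a verification argument, where two points need care. First, one must guarantee that the terminal value problem admits a solution on the whole interval $[0,T]$; the only nonlinear equation is the Riccati equation for $\Abound$, and its solvability is precisely where the well-posedness considerations of Subsec.~\ref{Subsec:WellPosedness} enter, since a finite-time blow-up of $\Abound$ corresponds to a nirvana-type explosion of the value function. Once $\Abound$ exists on $[0,T]$, the equations for $\Bbound$ and $\Cbound$ are linear and integrate directly. Second, with $A,B,C$ in hand one verifies optimality: applying It\^o to $\exp\{\int_t^s b(\Mstate_u,\Pi(u,\Mstate_u))\,du\}\,\valuefkt^{\HF}(s,\Mstate_s)$ shows that this process is a supermartingale for every admissible $\Pi\in\mathcal A$, because its drift equals $\valuefkt^{\HF}$ times $b(m,\Pi)-\sup_p b(m,p)\le0$, and a true martingale for $\Pi=\Pi^{\HF}$; taking expectations and using $\valuefkt^{\HF}(T,\cdot)=1$ then yields $\reward^{\HF}(t,m;\Pi)\le\valuefkt^{\HF}(t,m)=\reward^{\HF}(t,m;\Pi^{\HF})$, which identifies $\valuefkt^{\HF}$ as the value function and $\Pi^{\HF}$ as optimal. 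The main obstacle is this verification step: one has to confirm the requisite integrability so that the local martingale is a genuine martingale and the expectations are finite, using the Gaussianity of $\drift$, the quadratic form of the exponent, and the admissibility and Novikov conditions of Assumption \ref{admi_stra_rule}. For $\theta<0$ the inequalities reverse, the process is a submartingale, and $\valuefkt^{\HF}$ is the infimum.
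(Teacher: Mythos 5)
Your proposal is correct and takes essentially the same route as the paper: the dynamic programming/HJB approach for the uncontrolled Ornstein--Uhlenbeck state, explicit inner maximization of $b(m,p)$ yielding $\Pi^{\HF}$ in \eqref{optimal_str_F}, and the exponential-quadratic ansatz with coefficient matching producing exactly the ODE system \eqref{A_bound}--\eqref{C_bound}. The paper itself states this theorem without proof (deferring to Kim--Omberg, Brendle, and Kondakji's thesis) and explicitly postpones verification as beyond its scope (cf.\ the remark following Theorem \ref{Kandidate_diffusion}), so your supermartingale verification sketch, including the observation that global solvability of the Riccati equation is where well-posedness enters, is a sound addition rather than a departure.
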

A proof and a  detailed verification analysis can be found in Davis and Lleo \cite[Chapter 2]{Davis and Lleo (2014)}.

\paragraph{Boundedness of  $\mathbf{A^F,B^F,C^F}$}
The differential equations for $A^F,B^F,C^F$ are coupled. The ODE for $A^F$ is an autonomous matrix Riccati ODE which can be solved first. Given the solution to $A^F$, one can solve the linear  ODE for $B^F$, and then one can find $C^F$ by integrating the r.h.s.~of the last ODE.	
Therefore, a bounded solution $A^F$ implies boundedness of $B^F$ and $C^F$ on $[0,T]$. 
	Since the Riccati equation for $\Abound$ is a special case of \eqref{wellposed_full}   it is sufficient to require the condition
	\begin{align}
		\label{A_bounded_full}
		\text{the solution $ A_{\cpsi}$ to \eqref{wellposed_full} is for    $ \cpsi=\cpsi^F=\frac{\theta}{2(1-\theta)}  $  bounded on [0,T], }
	\end{align}
	to be satisfied. For $\theta<0$  we have $\cpsi^F<0$ and as above for $A_{\cpsi_0}$, from  Fleming and Rishel \cite[Theorem 5.2]{Fleming and Rishel (1975)} it follows that the solution $\Abound=A_{\cpsi^F}$ to \eqref{wellposed_full} does not explode on $(-\infty,T)$ and is thus always bounded on $[0,T]$.
	
	For $\theta\in(0,1)$ and  the scalar case, that is $d=1$, condition \eqref{A_bounded_full} follows immediately from  the  well posedness condition in  \eqref{wellposed_full}.		
	To see this, we introduce for  $G\in\R$ the notation $h_\cpsi(G)$ for the r.h.s. of \eqref{wellposed_full} such that this ODE  reads $\frac{d}{dt} A_{\cpsi_0}=h_{\cpsi_0}(A_{\cpsi_0})$ and the Riccati equation for $A^F=A_{\cpsi^F}$ as $\frac{d}{dt} A_{\cpsi^F}=h_{\cpsi^F}(A_{\cpsi^F})$.   Since for $\theta\in(0,1)$ it holds $\cpsi_0>\cpsi^F>0$  
	and $\Sigma_R^{-1}$ is positive  it holds for the r.h.s. of the above ODEs $h_{\cpsi_0}(G)< h_{\cpsi^F}(G)$. 
	Since the terminal conditions $ A_{\cpsi_0}(T)=A_{\cpsi^F}(T)= 0$ are the same for both equations   it can be deduced that the solutions satisfy $	A_{\cpsi_0}(t) \ge A_{\cpsi^F}(t)=\Abound(t)$ for  $t\in[0,T)$. 
	From Roduner \cite[Theorem 1.2]{Roduner (1994)} it follows that for $\gamma>0$ the solutions of the Riccati equations are nonnegative on $[0,T]$. Thus, we have  $A_{\cpsi_0}(t) \ge \Abound(t) \ge  0$ on $[0,T)$ and the  boundedness of $A_{\cpsi_0}$ implies that  $\Abound$ is bounded.

\smallskip
\paragraph{Monetary value of information} In order to quantify  the
monetary value of    information  we have introduced in
Subsection \ref{MonetaryValue}  the quantity $x_0^{H/F}$ for which
we have to evaluate an expectation  given in
\eqref{monetaer_Wert}.     
The latter   can be given in terms of the functions $A^{\HF}$,
$B^{\HF}$, $C^{\HF}$ appearing in the above theorem. Using \eqref{Value_Orig_RiskSens} we find  for $H=\HR,\HD,\HC$		
\begin{align}
	\E \Big[\mathcal V_0^{\HF}\big(x_0^{H/ \HF}\big)\mid \mathcal
	F_0^{H}\Big] =\E \bigg[\frac{\big(x_0^{H/
			\HF}\big)^{\theta}}{\theta} V^{\HF}(0,\drift_0)\mid \mathcal
	F_0^{H}\bigg] =\frac{\big(x_0^{H/
			\HF}\big)^{\theta}}{\theta}\E \Big[
	V^{\HF}(0,\drift_0)\mid\Mpro_0^{H},\Qpro_0^{H}\Big].\nonumber
\end{align}
For $\Mpro_0^{H}=\zustand$ and $\Qpro_0^{H}=\variance$ the
conditional distribution of  $\drift_0$ given $\mathcal F_0^{H}$ is
the normal distribution $\mathcal N(\zustand,\variance)$, i.e.
$\drift_0=\zustand+\variance^{\frac{1}{2}} \varepsilon$ with
$\varepsilon\sim\mathcal N(0,I_{\nAktien})$. Hence we have
\begin{align}
	\E \Big[
	V^{\HF}(0,\drift_0)\mid\Mpro_0^{H}=\zustand,\Qpro_0^{H}=\variance\Big] =
	\E \Big[ V^{\HF}\big(0, \zustand+\variance^{1/2}
	\varepsilon\big)\Big].\nonumber
\end{align}
This approach can be extended to the case of arbitrary  time points  
$t\in[0,T]$ with $ \Mstate_t^H=\zustand$ and $\Qpro_t^H=\variance$ by	means of  the following function
\begin{align}
	\overline{V}^{\HF}(t,\zustand,\variance):=\E
	\Big[V^{\HF}\big(t,\zustand+\variance^{\frac{1}{2}} \varepsilon\big)
	\Big],\qquad \forall t\in[0,T]. \label{mitteled_VF}
\end{align}
The following Lemma gives an explicit form for computing
$\overline{V}^{\HF}(t,\zustand,\variance)$.
\begin{lemma}[Kondakji \cite{Kondkaji (2019)}, Lemma 5.1.3] \ \\
	\label{value_F_average}%
	Under the assumption that the eigenvalues of the matrix  $I_{\nAktien}-2A^{\HF}(t)\variance$ are positive,
	it holds
	\begin{align}
		\label{F_mittel}
		\overline{V}^{\HF}(t,\zustand,\variance)=\exp\Big\{ \zustand^{\top}\,\overline{A}^{\HF}(t,\variance)\, \zustand+\zustand^{\top}\overline{B}^{\HF}(t,\variance)\zustand+\overline{C}^{\HF}(t,\variance)\Big\},
	\end{align}
	for $t\in[0,T], \zustand\in \R^d, \variance\in \R^{d\times d}$, where
	\begin{align}
		\overline{A}^{\HF}(t,\variance) &=(I_d-2A^{\HF}(t) \variance)^{-1} A^{\HF}(t),\nonumber\\
		\overline{B}^{\HF}(t,\variance) &=(I_d-2A^{\HF}(t) \variance)^{-1} B^{\HF}(t),\nonumber\\
		\overline{C}^{\HF}(t,\variance) &=C^{\HF}(t)+\frac{1}{2}\big(B^{\HF}(t)\big)^{\top} (I_d- 2{A^{\HF}(t) \variance  })^{-1}\variance B^{\HF}(t)-\frac{1}{2}\log \rm{det}(I_d-2{A^{\HF}(t) \variance  })\nonumber
	\end{align}		
	with $A^{\HF}$, $B^{\HF}$ and $C^{\HF}$  given in Theorem \ref{kandidate_full}.	
\end{lemma}

\section{Partially Informed Investors Observing Diffusion Processes}
\label{UtilityMaxDiffusion} 
In this section we start to solve to the control problem \eqref{Wertfkt_H} for partially informed investors. We consider investors  observing the diffusion processes $R$ and/or $J$, i.e., the information regimes $H=R,\HD$.  The case of the information regime $H=\HC$ with discrete-time expert opinions follows in Sec.~\ref{UtilityMaxDiscreteExperts}.

	\paragraph{Well posedness} In   \cite[Corollary 3.7]{Gabih et al (2022) Nirvana}  we show for $\theta <0$ the control problem \eqref{Wertfkt_H} for $H=R,\HD,\HC$ is always well posed. However, for $\theta\in(0,1)$ in addition to the condition 	
	\eqref{wellposed_full}   ensuring well posedness of the control problem for the fully informed investor one now  also has to impose the condition 
	\begin{align}
		\label{wellposed_partial}
		\text{the eigenvalues of } ~I_{\nAktien}-2  A_{\gamma_0}(t)\Qpro^H_t  \text{ ~~are positive  on $[0,T]$ for } \cpsi_0=\frac{\theta}{2(1-\theta)^2} .  
	\end{align}
	Condition \eqref{wellposed_partial} says that the conditional covariance $\Qpro^H$   must not be ``too big''  such that  the eigenvalues of $I_{\nAktien}-2  A_{\gamma_0}(t)\Qpro^H_t$  are  positive. Recall, $\Qpro^H$  is deterministic and can be computed offline,  and from Proposition \ref{properties_filter} we know that  $\Qpro^H$ is bounded.

\smallskip	
For solving control problem \eqref{Wertfkt_H} we apply dynamic programming techniques. Starting point is the dynamic programming principle given in the following lemma. For the proof we refer to Frey et al.~\cite[Prop.~6.2]{Gabih et al (2014)} and  Pham~\cite[Prop.~3.1]{Pham (1998)}.		
\begin{lemma}(Dynamic Programming Principle).
	\label{lemma_DPP} For every $t\in[0,T]$, $\zustand\in\R^{\nAktien}$ and for
	every stopping time $\tau$ with values in $[t,T]$ it
	holds		
	\begin{align}
		\valuefkt^{H}(t,\zustand)= 
		\sup\limits_{\Pi\in\mathcal{A}^H} \Ebar  \bigg[\exp\Big\{\int_t^{\tau} b(\Mstate_s^{H,\Pi,t,\zustand},\Pi(s,\Mstate_s^{H,\Pi,t,\zustand}))ds\Big\}
		\valuefkt^{H}(\tau,\Mstate_{\tau}^{H,\Pi,t,\zustand}) \bigg].~~ \label{DPP_Y}
	\end{align}
\end{lemma}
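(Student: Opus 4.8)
The plan is to prove the two inequalities $\valuefkt^{H}(t,\zustand)\le\Psi$ and $\valuefkt^{H}(t,\zustand)\ge\Psi$ separately, where
\[
\Psi:=\sup_{\Pi\in\mathcal A}\E\Big[\exp\Big\{\textstyle\int_t^{\tau} b(\Mstate_s^{H,\Pi,t,\zustand},\Pi(s,\Mstate_s^{H,\Pi,t,\zustand}))\,ds\Big\}\,\valuefkt^{H}(\tau,\Mstate_{\tau}^{H,\Pi,t,\zustand})\Big]
\]
denotes the right-hand side of \eqref{DPP_Y}. The argument rests on three structural ingredients, valid in the diffusion regimes $H=\HR,\HD$ considered here. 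First, the \emph{flow (Markov) property} of the controlled state: since for every $\Pi\in\mathcal A$ the coefficients of SDE \eqref{Filter_M_int_gen} satisfy the Lipschitz and linear growth bounds of Assumption \ref{admi_stra_rule}, the equation has a unique strong solution, and pathwise uniqueness yields $\Mstate_s^{H,\Pi,t,\zustand}=\Mstate_s^{H,\Pi,\tau,\Mstate_\tau^{H,\Pi,t,\zustand}}$ for all $s\ge\tau$, $\P$-a.s. Second, writing $\Phi_{u,v}^{\Pi}:=\exp\{\int_u^v b(\Mstate_s^{H,\Pi,t,\zustand},\Pi(s,\Mstate_s^{H,\Pi,t,\zustand}))\,ds\}$, the running weight obeys the multiplicative cocycle identity $\Phi_{t,T}^{\Pi}=\Phi_{t,\tau}^{\Pi}\,\Phi_{\tau,T}^{\Pi}$, with $\Phi_{t,\tau}^{\Pi}$ being $\mathcal F_\tau^{H}$-measurable and non-negative. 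Third, combining the first two with the tower property and the fact that the Brownian increments after $\tau$ are independent of $\mathcal F_\tau^{H}$ gives $\E[\Phi_{\tau,T}^{\Pi}\mid\mathcal F_\tau^{H}]=\reward^{H}(\tau,\Mstate_\tau^{H,\Pi,t,\zustand};\Pi)$, the reward \eqref{Zielfkt_H} of the feedback rule $\Pi$ started afresh at $(\tau,\Mstate_\tau)$.

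For the inequality ``$\le$'', fix $\Pi\in\mathcal A$. Conditioning on $\mathcal F_\tau^{H}$ and using the three ingredients,
\begin{align*}
\reward^{H}(t,\zustand;\Pi)
&=\E\Big[\Phi_{t,\tau}^{\Pi}\,\E\big[\Phi_{\tau,T}^{\Pi}\mid\mathcal F_\tau^{H}\big]\Big]
=\E\Big[\Phi_{t,\tau}^{\Pi}\,\reward^{H}\big(\tau,\Mstate_\tau^{H,\Pi,t,\zustand};\Pi\big)\Big]\\
&\le\E\Big[\Phi_{t,\tau}^{\Pi}\,\valuefkt^{H}\big(\tau,\Mstate_\tau^{H,\Pi,t,\zustand}\big)\Big]\le\Psi,
\end{align*}
the first inequality holding because $\reward^{H}(\tau,\cdot;\Pi)\le\valuefkt^{H}(\tau,\cdot)$ by definition of the value function, and the last by definition of $\Psi$. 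Taking the supremum over $\Pi\in\mathcal A$ on the left gives $\valuefkt^{H}(t,\zustand)\le\Psi$.

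For the reverse inequality ``$\ge$'', fix $\Pi\in\mathcal A$ and $\varepsilon>0$. By definition of $\valuefkt^{H}$ as a supremum, at each point there is an $\varepsilon$-optimal feedback rule; the task is to select such rules \emph{measurably} in the terminal data $(\tau(\omega),\Mstate_\tau^{H,\Pi,t,\zustand}(\omega))$ and to paste them onto $\Pi$, producing a strategy $\widehat\Pi$ that agrees with $\Pi$ on $[t,\tau)$ and satisfies $\reward^{H}(\tau,\Mstate_\tau;\widehat\Pi)\ge\valuefkt^{H}(\tau,\Mstate_\tau)-\varepsilon$. Since $\Phi_{t,\tau}^{\widehat\Pi}=\Phi_{t,\tau}^{\Pi}$, repeating the factorization above yields $\reward^{H}(t,\zustand;\widehat\Pi)\ge\E[\Phi_{t,\tau}^{\Pi}\,\valuefkt^{H}(\tau,\Mstate_\tau)]-\varepsilon\,\E[\Phi_{t,\tau}^{\Pi}]$. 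Letting $\varepsilon\downarrow0$ and taking the supremum over $\Pi$ gives $\valuefkt^{H}(t,\zustand)\ge\Psi$.

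The main obstacle lies entirely in this last step, and has two facets: (i) the \emph{measurable-selection / pasting} argument producing $\varepsilon$-optimal rules depending measurably on $(\tau,\Mstate_\tau)$, which I would handle by a countable partition of $[0,T]\times\R^{\nAktien}$ and a gluing of the corresponding near-optimal rules; and (ii) verifying that the concatenated strategy $\widehat\Pi$ is again admissible, i.e.\ lies in $\mathcal A$ and in particular preserves the Lipschitz, linear growth and Novikov conditions of Assumption \ref{admi_stra_rule} together with the positivity and integrability underlying the change of measure. As these technicalities are standard for exponential/multiplicative reward functionals, I would follow the detailed treatments in Pham \cite[Prop.~3.1]{Pham (1998)} and Frey et al.~\cite[Prop.~6.2]{Gabih et al (2014)} rather than reprove them in full.
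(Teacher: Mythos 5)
Your outline is, in substance, the route the paper itself relies on: the paper offers no proof of Lemma \ref{lemma_DPP} at all, but simply refers to Pham \cite[Prop.~3.1]{Pham (1998)} and Frey et al.\ \cite[Prop.~6.2]{Gabih et al (2014)} --- exactly the references you defer to for the two technical obstacles you flag. Your ``$\le$'' half is correct and complete in outline: the multiplicative factorization of the exponential weight, the flow/strong Markov property of the solution of \eqref{Filter_M_int_gen} (available under the Lipschitz and growth conditions of Assumption \ref{admi_stra_rule}), the tower property (legitimate here since all integrands are nonnegative), and the pointwise bound $\reward^{H}(\tau,\cdot\,;\Pi)\le \valuefkt^{H}(\tau,\cdot)$ together give $\valuefkt^{H}(t,\zustand)\le\Psi$.

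The genuine issue sits in your ``$\ge$'' half, and it is worth naming precisely because your proposed fix would not work inside the paper's admissibility class. The set $\mathcal A$ in \eqref{set_admiss_Markoc} consists of single deterministic feedback rules $\Pi(s,m)$ satisfying the Lipschitz condition \eqref{Lipschitz_rule} and the Novikov condition \eqref{Novikov_rule}. The pasted object $\widehat\Pi$ you describe fails to belong to this class for two separate reasons: (i) when $\tau$ is a nondegenerate stopping time, the concatenated control depends on $\omega$ through $\big(\tau(\omega),\Mstate_\tau^{H,\Pi,t,\zustand}(\omega)\big)$, hence is not of the form $\widehat\Pi(s,\Mstate_s)$ for any one measurable function $\widehat\Pi$ --- it is not a Markov feedback control at all; (ii) even for deterministic $\tau$, gluing countably many $\varepsilon$-optimal Lipschitz rules over a Borel partition of the state space produces a discontinuous rule, destroying \eqref{Lipschitz_rule}, and \eqref{Novikov_rule} would also have to be re-verified for the glued rule. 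The references you (and the paper) invoke prove the DPP for larger classes of progressively measurable, open-loop controls, for which concatenation at a stopping time is unproblematic; transporting the statement back to the restricted feedback class $\mathcal A$ requires an additional argument --- either equality of the value functions over the two control classes, or, a posteriori, a verification theorem once a smooth solution of the DPE is exhibited, a step the paper itself acknowledges as outstanding in the remark following Theorem \ref{Kandidate_diffusion}. So your proof matches the paper's level of rigor on the ``$\le$'' side, but the partition-and-glue plan for ``$\ge$'', taken literally within $\mathcal A$, is a step that would fail.
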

From the dynamic programming principle the dynamic programming equation (DPE) for the value function presented in Theorem \ref{DPE_general} can be deduced. That equation  constitutes a necessary optimality condition  and allows to derive the optimal decision rule.  We recall that we focus on  the solution for $\theta\in(0,1)$, the case  $\theta <0$ follows analogously by changing  $\sup$  into $\inf$ in  \eqref{DPP_Y}. 
For convenience we introduce the shorthand notation 
\begin{align}
	\label{Sigma_overline}
	\overline \Sigma_{H}^{-1}=\begin{cases}
		\Sigma_{\HR}^{-1}, & H=R,\HC\\
		\Sigma_{\HR}^{-1}+\Sigma_{\HD}^{-1}, & H=\HD.
	\end{cases}
\end{align}

\begin{theorem}[Dynamic programming equation]%
	\label{DPE_general}~
	\begin{enumerate} 
		\item In the case of diffusion type
		observations. i.e., $H=\HR, \HD$, the value function $\valuefkt^H$ satisfies  for  $t\in[0,T)$ and  $m\in \R^d$ the PDE 
		\begin{align}
			0&=\frac{\partial}{\partial t} \valuefkt^H(t,\zustand)+\diffop_{\zustand}^{\top} \valuefkt^H(t,\zustand)\Big(\revspeed(\revlevel-\zustand)+\frac{\theta}{(1-\theta)} \Qpro_t^{H} \Sigma_{R}^{-1} \zustand \Big)\nonumber\\
			&\quad+\frac{1}{2}\trace\Big\{\diffop_{\zustand\zustand}\valuefkt^{H}(t,\zustand)\Big( \Qpro_t^{H} \overline \Sigma_{H}^{-1}\Qpro_t^{H}\Big)\Big\}+\frac{\theta}{2(1-\theta)}\zustand^{\top}\Sigma_{R}^{-1} \zustand \valuefkt^H(t,\zustand)\nonumber\\
			&\quad+\frac{\theta}{2(1-\theta)}\frac{1}{\valuefkt^H(t,\zustand)}
			\diffop_\zustand^{\top}\valuefkt^H(t,\zustand)\Big(\Qpro_t^{H}\Sigma_{R}^{-1}\Qpro_t^{H}\Big)\diffop_{\zustand}\valuefkt^H(t,\zustand),
			\label{HJB_Power_general}
		\end{align}
		with the terminal condition $\valuefkt^H(T,\zustand)=1$ 				 
		and $\diffop_{\zustand} V^H$, $\diffop_{\zustand\zustand} V^H$ denoting the gradient and Hessian matrix of $V^H$, respectively.
		\item
		The  candidate optimal decision rule   is for $t\in [0,T)$ and  $m\in \R^d$ given by 				
		\begin{align}
			\label{opti_stra_general}
			\Pi^{H}=\Pi^{H}(t,\zustand)&
			=\frac{1}{1-\theta}\Sigma_{R}^{-1}\left(\zustand+\frac{1}{\valuefkt^H(t,\zustand)} \Qpro_t^{H} \diffop_{\zustand}\valuefkt^H(t,\zustand) \right). 
		\end{align}
	\end{enumerate}
\end{theorem}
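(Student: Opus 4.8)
The plan is to derive the dynamic programming equation \eqref{HJB_Power_general} from the dynamic programming principle \eqref{DPP_Y} of Lemma~\ref{lemma_DPP} by the standard infinitesimal-generator argument, and then to obtain the optimal decision rule \eqref{opti_stra_general} by pointwise maximization of the Hamiltonian over the control variable. First I would apply the DPP with the stopping time $\tau = t+h$ for small $h>0$, writing $\valuefkt^H(t,\zustand)$ as the supremum over $\Pi$ of $\E[\,\exp\{\int_t^{t+h} b(\Mstate_s,\Pi(s,\Mstate_s))\,ds\}\,\valuefkt^H(t+h,\Mstate_{t+h})\,]$. Assuming $\valuefkt^H$ is smooth (this is the standing regularity hypothesis for the verification-type argument), I would apply It\^o's formula to the product $g_s := \exp\{\int_t^s b(\Mstate_u,\pi_u)\,du\}\,\valuefkt^H(s,\Mstate_s)$ along the controlled diffusion \eqref{Filter_M_int_gen}, whose drift is $\alphamm(\zustand,\variance;\pointp)=\revspeed(\revlevel-\zustand)+\theta\variance\pointp$ and whose diffusion coefficient $\betam^H(\variance)$ satisfies $\betam^H(\variance)\betam^H(\variance)^{\top}=\variance\,\overline\Sigma_H^{-1}\variance$ by \eqref{Sigma_overline}.

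Collecting the $dt$-terms of $dg_s$ yields the generator acting on $\valuefkt^H$ plus the running term $b(\zustand,\pointp)\,\valuefkt^H$, namely
\begin{align*}
\frac{\partial}{\partial t}\valuefkt^H
+\diffop_\zustand^{\top}\valuefkt^H\,\big(\revspeed(\revlevel-\zustand)+\theta\,\Qpro_t^H\pointp\big)
+\tfrac12\trace\big\{\diffop_{\zustand\zustand}\valuefkt^H\,(\Qpro_t^H\overline\Sigma_H^{-1}\Qpro_t^H)\big\}
+b(\zustand,\pointp)\,\valuefkt^H.
\end{align*}
Dividing the DPP identity by $h$ and letting $h\downarrow 0$, the supremum over admissible rules becomes a pointwise supremum over $\pointp\in\R^d$ of this expression, giving the HJB equation $0=\partial_t\valuefkt^H+\sup_{\pointp}\{\cdots\}$. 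Writing out $b(\zustand,\pointp)=\theta(\pointp^{\top}\zustand-\tfrac{1-\theta}{2}\|\pointp^{\top}\sigma_\wealth\|^2)$ with $\|\pointp^{\top}\sigma_\wealth\|^2=\pointp^{\top}\Sigma_R\pointp$ (since $\sigma_\wealth^H(\sigma_\wealth^H)^{\top}=\Sigma_R$ for all three regimes), the $\pointp$-dependent part is a concave quadratic in $\pointp$ for $\theta\in(0,1)$, because its Hessian is $-\theta(1-\theta)\Sigma_R\,\valuefkt^H<0$.

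The maximizer is then found by setting the $\pointp$-gradient to zero: the first-order condition reads $\theta\,\diffop_\zustand^{\top}\valuefkt^H\,\Qpro_t^H+\theta\zustand^{\top}\valuefkt^H-\theta(1-\theta)\pointp^{\top}\Sigma_R\,\valuefkt^H=0$, which solves to $\Pi^H(t,\zustand)=\frac{1}{1-\theta}\Sigma_R^{-1}\big(\zustand+\frac{1}{\valuefkt^H}\Qpro_t^H\diffop_\zustand\valuefkt^H\big)$, exactly \eqref{opti_stra_general}. Substituting this optimizer back into the supremum and simplifying the resulting quadratic-in-gradient terms produces the two remaining terms of \eqref{HJB_Power_general}: the $\frac{\theta}{2(1-\theta)}\zustand^{\top}\Sigma_R^{-1}\zustand\,\valuefkt^H$ contribution and the $\frac{\theta}{2(1-\theta)}\frac{1}{\valuefkt^H}\diffop_\zustand^{\top}\valuefkt^H(\Qpro_t^H\Sigma_R^{-1}\Qpro_t^H)\diffop_\zustand\valuefkt^H$ contribution, while the linear drift term picks up the extra $\frac{\theta}{1-\theta}\Qpro_t^H\Sigma_R^{-1}\zustand$ inside the first-order operator. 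The terminal condition $\valuefkt^H(T,\zustand)=1$ is immediate from \eqref{Wertfkt_H}.

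\emph{The main obstacle} I expect is not the algebra but the justification of the limiting step: passing from the DPP to the infinitesimal equation requires that $\valuefkt^H\in C^{1,2}$ and that the local-martingale part of $dg_s$ is a true martingale so that the $h\to 0$ expansion is valid. The latter rests on the square-integrability guaranteed by the Novikov condition in Assumption~\ref{admi_stra_rule} and on the linear-growth/Lipschitz bounds ensuring well-behaved moments of $\Mstate^{H,\Pi,t,\zustand}$; interchanging the supremum with the limit is the delicate point and would be handled by the standard viscosity-solution or smooth-verification machinery (e.g.\ Pham~\cite{Pham (1998)}). A secondary subtlety is checking that the concavity condition $\theta(1-\theta)>0$ genuinely makes the Hamiltonian's stationary point a maximizer for $\theta\in(0,1)$ (and a minimizer, after the $\sup\to\inf$ switch, for $\theta<0$, where $\theta(1-\theta)<0$ flips convexity consistently), so that the same formula \eqref{opti_stra_general} remains the optimizer in both regimes.
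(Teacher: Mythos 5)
Your proposal is correct and follows essentially the same route as the paper's own proof: both pass from the dynamic programming principle to the infinitesimal form via Dynkin's/It\^o's formula applied to the exponentially weighted value function, arrive at the HJB equation with a pointwise supremum over $\pointp$ of a concave quadratic Hamiltonian, solve the first-order condition to obtain \eqref{opti_stra_general}, and substitute the maximizer back to get \eqref{HJB_Power_general}. Your additional remarks on the technical caveats (smoothness of $\valuefkt^H$, the martingale property via Novikov, interchange of supremum and limit) are consistent with the paper, which likewise leaves these to ``standard arguments'' and defers rigorous verification.
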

\begin{proof}
	Let  $t,\tau \in [0,T)$ with   $\tau>t$  for some  fixed time point $t$. Then   the dynamic programming
	principle \eqref{DPP_Y} and the continuity of $\Mstate_s^{H,\Pi,t,\zustand}$ on  $[0,T)$ imply
	\begin{align}
		\nonumber \hspace*{-0.5em}
		\valuefkt^H(t,\zustand) & =\lim\limits_{\tau\searrow	t} \valuefkt^H(\tau,\zustand)\\
		&=\lim\limits_{\tau\searrow    t}\;\sup\limits_{\Pi\in\mathcal{A}^H} 
		\Ebar \Big[\exp\Big\{\int_t^{\tau} b(\Mstate_s^{H,\Pi,t,\zustand},\Pi(s,\Mstate_s^{H,\Pi,t,\zustand}))ds\Big\}
		\valuefkt^{H}(\tau,\Mstate_{\tau}^{H,\Pi,t,\zustand}) \Big].
		\label{DPP_Y_Grenzwert}
	\end{align}
	For the state process $\Mstate=\Mstate^{H,\Pi,t,\zustand}$ given in \eqref{Filter_M_int_gen} the associated generator
	$\generator=\generator^{p}$ applied to a function $g\in
	C^2(\R^{\nAktien})$ for fixed $p=\Pi(t,\zustand)$ reads
	\begin{align}
		\mathcal{L}g(\zustand)&=\diffop_m^{\top}g(\zustand)\alphamm(\zustand,q,\pointp)
		+\frac{1}{2}\trace\Big\{\diffop_{mm}
		g(\zustand)\betam^{H}(q)\big(\betam^{H}(q)\big)^{\top}\Big\}.
		\nonumber
	\end{align}
	From \eqref{DPP_Y_Grenzwert}  we obtain using Dynkin's formula for $\valuefkt^{H}(\tau,\Mstate_{\tau}^{H,\Pi,t,\zustand})$ and standard arguments of the dynamic programming approach the following PDE
	\begin{align}
		0=&\frac{\partial}{\partial t} V_k(t,\zustand)+\diffop_{\zustand}^{\top} \valuefkt^H(t,\zustand)\revspeed(\revlevel-\zustand)+\frac{1}{2}\trace\Big\{\diffop_{\zustand\zustand}\valuefkt^H(t,\zustand) \Qpro_t^{H}\overline \Sigma_{H}^{-1}\Qpro_t^{H}\Big\}\nonumber\\
		&+\sup\limits_{\pointp\in\mathbb
			R^{\nAktien}}\Big\{\diffop_{\zustand}^{\top} \valuefkt^H(t,\zustand)
		\theta\Qpro_t^{H}\pointp+\theta\Big(\pointp^{\top}\zustand
		-\frac{1-\theta}{2}\pointp^{\top}\Sigma_{R}\pointp\Big)
		\valuefkt^H(t,\zustand)\Big\}. \label{Bell_equa_power_C_deter}
	\end{align}		
	The maximizer for the supremum appearing
	in \eqref{Bell_equa_power_C_deter} yields the 	optimal decision rule $\Pi^{H}=\Pi^{H}(t,\zustand)$ which is given in \eqref{opti_stra_power_C_det}. 
	Plugging the expression for  the maximizer $\Pi^{\HC}$ back into the DPE
	\eqref{Bell_equa_power_C_deter}  we obtain the  PDE \eqref{HJB_Power_general}. 			
\end{proof}
%
The above dynamic programming equation can be solved using an ansatz as in \eqref{value_power_RJ} below and leads to closed-form expressions for the value function $V^H$ and the optimal decision rule $\Pi^H$ in terms of solutions of some ODEs. The results are given in the next theorem and are known for $H=R$ already from Brendle \cite{Brendle2006}. For the case  $H=\HD$ and details of the proof we refer  Kondakji
\cite[Sec.~5.2]{Kondkaji (2019)}.

\begin{theorem}[Solution of DPE and optimal decision rule]~
	\label{Kandidate_diffusion} 
	\begin{enumerate} 
		\item In the case of diffusion type	observations,  that is  $H=\HR, \HD$, the solution to the dynamic programming equation  equation \eqref{HJB_Power_general} is given for $t\in[0,T], m\in \R^d$  by
			\begin{align}
				\label{value_power_RJ}
				\valuefkt^{H}(t,m)=\exp\left\{m^{\top} A^{H}(t) m+m^{\top}B^{H}(t)
				+C^{H}(t)\right\}.
			\end{align}  
			The functions $A^{H}(t)$, $B^{H}(t)$ and $C^{H}(t)$ staisfy the system of ODEs
			\begin{align}
				\label{DGL_A_diff}
				\frac{dA^H(t)}{dt}
				&=-2A^H(t)\Qpro_t^{H}  \Big(\frac{\theta }{1-\theta} \Sigma_{\HR}^{-1} +\overline\Sigma_{H}^{-1}\Big)  \Qpro_t^{H} A^H(t) +\Big(\revspeed^{\top} - \frac{\theta }{1-\theta} \Sigma_{\HR}^{-1}\Qpro_t^{H} \Big) A^H(t)\\[-2ex]
				& ~~~~+A^H(t)\Big(\revspeed - \frac{\theta }{1-\theta} \Qpro_t^{H}\Sigma_{\HR}^{-1} \Big)
				-\frac{\theta}{2(1-\theta)}\Sigma_R^{-1},\\[1ex]
				\label{DGL_B_diff}
				\frac{dB^H(t)}{dt}&=\left[ \revspeed^{\top}\!\!-2A^H(t)\Qpro_t^{H} \Big(\frac{\theta }{1-\theta} \Sigma_{\HR}^{-1} +\overline\Sigma_{H}^{-1}\Big) \Qpro_t^{H}\!-\frac{\theta}{1-\theta}\Sigma_{R}^{-1}\Qpro_t^{H}\right] B^H(t) \\
				&~~~~-2A^H(t)\revspeed\overline{\drift},\\[1ex]
				\label{DGL_C_diff}
				\frac{dC^H(t)}{dt}&=-(B^H(t))^{\top}\Big[\revspeed\revlevel 
				+\frac{1}{2}\Qpro_t^{H}\Big(\frac{\theta }{1-\theta} \Sigma_{\HR}^{-1} +\overline\Sigma_{H}^{-1}\Big) \Qpro_t^{H} B^H(t)\Big] \\
				&~~~~-\trace\big\{ \Qpro_t^{H}\Sigma_{R}^{-1}\Qpro_t^{H} A^H(t) \big\},
			\end{align}
			with terminal values $A^H(T)=0_{d\times d}, B^H=0_d, C^H=0$.			
		\item
		The candidate optimal decision rule  is for for $t\in[0,T), m\in \R^d$, is given by
		\begin{align}
			\label{opti_stra_power}
			\Pi^{H}(t,m)& = \frac{1}{1-\theta}\Sigma_{R}^{-1} \,\big(m+ \Qpro^H_t (2A^{H}(t)m+B^{H}(t))   \big).
		\end{align}			
	\end{enumerate}
\end{theorem}
The next proposition shows that the functions $A^H, B^H,C^H$  solving  the system of ODEs in Theorem \ref{Kandidate_diffusion}  can be expressed by the solutions $A^F,B^F,C^F$ to the ODEs for the full information problem given in Theorem \ref{kandidate_full}  via the functions $\overline{A}^{\HF}, \overline{B}^{\HF}, \overline{C}^{\HF}$  given in  Lemma \ref{value_F_average}. The proof is given in \cite[Lemma 5.2.1]{Kondkaji (2019)}. The result will facilitate the proof of boundedness of $A^H, B^H,C^H$.   
\begin{proposition}
	\label{prop_AH}			
	 If the eigenvalues of  $I_{\nAktien}-2A^{\HF}(t)\Qpro_t^H$ are positive for all $t\in[0,T]$ then it holds 
	for the functions $A^{H}$, $B^{H}$ and $C^{H}$  on  $[0,T]$
	\begin{align}
		\label{ABCH}
		A^{H}(t)=\overline{A}^{\HF}\big(t,\Qpro_t^{H}\big),~
		B^{H}(t)=\overline{B}^{\HF}\big(t,\Qpro_t^{H}\big), ~
		C^{H}(t)=\overline{C}^{\HF}\big(t,\Qpro_t^{H}\big)-\theta\Delta_X^{H}(t),
	\end{align}
	where the functions $\overline{A}^{\HF}, \overline{B}^{\HF}$ and $\overline{C}^{\HF}$ are given in  Lemma \ref{value_F_average} and			
	\begin{align}
		\Delta_X^{H}(t):=\frac{1}{2}\log \frac{det(I_{\nAktien}-2{\xi^{H}(t)\Qpro_t^{H}})}{det(I_{\nAktien}-2{A^{\HF}(t)\Qpro_t^{H}})}+ \underline{K}^{H}(t)-\overline{K}^{H}(t),\quad t\in[0,T].
		\label{Delta_X_RD}
	\end{align}	
	The function  $\xi^{H}(t)$ satisfies  on $[0,T]$ the Riccati equation
	\begin{align}\label{riccati_xi}
		\frac{d\xi^{H}(t)}{dt}=-2\xi^{H}(t)\Sigma_{\drift}\xi^{H}(t)+\revspeed^{\top} \xi^{H}(t)+\xi^{H}(t)\revspeed
		+\frac{1}{2}\overline \Sigma_{H}, ~\xi^{H}(T)=0,
	\end{align}
	where $\overline \Sigma_{H}$ is given in \eqref{Sigma_overline}.
	The functions $\underline{K}^{H}(t), \overline{K}^{H}(t)$ are given by
	\begin{align*}
		\underline{K}^{H}(t)&=\int\nolimits_t^T \trace\{\Sigma_{\drift}\big(A^{\HF}(u)-\xi^{H}(u)\big)\}du,\\
		\overline{K}^{\HD}(t) &=\frac{1}{2}\int\nolimits_t^T \trace\{\Qpro_u^{\HD}\Sigma_{\HD}^{-1}(I_{\nAktien}-2\Qpro_u^{\HD}
		A^{\HF}(u))\}du\quad\text{and}\quad\overline{K}^{R}(t)=0.
	\end{align*}
	
\end{proposition}

	\paragraph{Boundedness of $\mathbf{A^H,B^H,C^H}$} In view of Proposition \ref{prop_AH}	this property holds under  condition \eqref{A_bounded_full} saying that $\Abound$ is bounded,  and if   the eigenvalues of  $I_{\nAktien}-2A^{\HF}(t)\Qpro_t^H$ are positive on $[0,T]$. Note that if $\Abound$ is bounded then $\Bbound$ and $\Cbound$ are also bounded.
	Thus, the expressions $\overline{A}^{\HF}\big(t,\Qpro_t^{H}\big), \overline{B}^{\HF}\big(t,\Qpro_t^{H}\big), \overline{B}^{\HF}\big(t,\Qpro_t^{H}\big)$ in \eqref{ABCH} with  $\overline{A}^{\HF}, \overline{B}^{\HF}$ given in  Lemma \ref{value_F_average} are  bounded on $[0,T]$. This proves the boundedness of $A^H$ and $B^H$. Finally, the boundedness of $C^H$ follows from integrating the r.h.s.~of the ODE for $C^H$ given in Theorem \ref{Kandidate_diffusion}, which is bounded. 
	
	Recall, for $\theta<0$ we have that $\Abound$ is always bounded  and negative semindefinite. Analogously to the approach in our paper \cite[Section 3]{Gabih et al (2022) Nirvana} one can show that  the eigenvalues of  $K(t)=I_{\nAktien}-2\Abound(t)\Qpro_t^H$ are bounded below by $1$ and thus positive. However, for $\theta\in(0,1)$ the  eigenvalues of $K(t)$ are positive only  if the conditional covariance $\Qpro^H_t$ is on $[0,T]$ not ``too large''.  Since if $\lambda$ is an eigenvalue of $\Abound(t)\Qpro_t^H$ then $1-2\lambda$ is an eigenvalue of  $K(t)$  one has to require that 	 	$\lambda_{\max}(\Abound(t)\Qpro^H_t)<1/2$ 
	for all $t\in[0,T]$. Here, $\lambda_{\max}(G)$  denotes the largest eigenvalue of a generic matrix $G$.

	\smallskip
	\paragraph{Verification}
	We derived the above  candidate solution to the control problem  \eqref{Wertfkt_H} using the classical stochastic control approach.  To ensure that the solution to the DPE is indeed the value function $V$, the candidate optimal decision rule $\Pi ^H$ indeed satisfies $ \valuefkt^{H}(t,m)=\reward^{H}(t,m;\Pi^H)$, and that $\Pi^H$ defines an optimal strategy process via $\pi^H_t=\Pi^H(t,\Mstate^H_t)$ that is admissible,  one needs to prove  a suitable verification theorem. Such a verification theorem is given in Hata and Sheu \cite[Theorem 4.1]{Hata Sheu (2018)} for a slightly general setting. That paper considers a combined consumption and investment problem in which the investor is also allowed to consume portfolio wealth and aims to maximize the expectation of the sum of the power utility of terminal wealth and aggregated running power utility of consumption. Further,  \cite{Hata Sheu (2018)} allows for correlation between return and drift process, non-negative interest rate for the risk-free asset, and  for discounting the utility. The findings can be directly adopted to the control problem \eqref{Wertfkt_H} for $H=R,\HD$ if the utility for consumption formally is set to zero, leading to zero optimal consumption, and removing the consumption from the strategy process. 
	
	In \cite{Hata Sheu (2018)} the authors take advantage of the consideration of a logarithmic  transformation of the performance criterion and study the associated DPE for $\log V^H$. One of the key assumptions for the above mentioned verifications results is the boundedness of the  functions $A^H,B^H,C^H$ on $[0,T]$. Further, the linearity of the optimal decision rule $\Pi^H$ w.r.t. the state variable $m$, see \eqref{opti_stra_power}, is exploited to prove, that the optimal strategy process $\pi^H$ generated by $\Pi^H$, is admissible. In particular, using a result of Bensoussan \cite[Lemma 4.1.1]{Bensoussan (1992)} which is also given in Nagai \cite[Lemma 5.1]{Nagai (2015)}, it can be deduced that  the associated density process $\Radon^H$  defining the change of measure and given in \eqref{Radon} satisfies $\E[\Radon_T^{H}]=1$.

	\begin{remark}\label{rem:myopic}
		The optimal decision rules given in \eqref{opti_stra_general} and \eqref{opti_stra_power} can be rewritten as 
		\begin{align} \label{strat_myopic}				
			\Pi^{H}(t,\zustand)	
			&= \Pi^{F}(t,\zustand)+\frac{1}{(1-\theta)\valuefkt^H(t,\zustand)}\Sigma_{R}^{-1}\Qpro_t^{H}\diffop_{y}\valuefkt^H(t,\zustand),\\
			&=  \Pi^{F}(t,\zustand)+ \frac{1}{1-\theta}\Sigma_{R}^{-1} \,\Qpro^H_t (2A^{H}(t)m+B^{H}(t))  .
		\end{align}
		  Thus, $\Pi^{H}$ can be decomposed into two parts. The first part  $\Pi^{F}$ is the optimal decision rule of the fully informed investor given in \eqref{optimal_str_F}. To obtain the value of the strategy process $\pi_t$ at time $t$ the fully informed investor plugs in for $m$  the current value of the drift $\drift_t$,  whereas  the partially informed $H$-investor plugs in the filter estimate $\Mpro^H_t$. 	
			In the literature  $\Pi^{F}$ is also known as  {\em myopic decision rule}. The second part is  the ``correction term''  $\Pi^{H}- \Pi^{F}$ which is known as  {\em drift risk} of the partially informed $H$-investor since it accounts for the investor's uncertainty about the current value of the non-observable drift, see Rieder and Bäuerle \cite[Remark 1]{Rieder_Baeuerle2005} and Frey et al.~\cite[Remark 5.2]{Frey et al. (2012)}.  
			The decomposition shows that, in contrast to the case of log utility (see Sass et al. (2017)), the so-called  {\em certainty equivalence principle} does not apply to power utility. 		 	It states  that the optimal strategy under partial information
			is obtained by replacing the unknown drift $\drift_t$ by the filter estimate $\Mpro^H_t$ 	in the formula for the optimal strategy under full information. 	 	
	\end{remark}
	
	\begin{remark}
		\label{rem_log_power} It is well-known that log-utility can be embedded in the family of power utilities using the relation $\utility_\theta(x)-1/\theta= (x^\theta -1)/\theta \to \log x$ for $\theta\to 0$.  Replacing $\utility_\theta(x)$ by $\utility_\theta(x)-1/\theta$ in the utility maximization problem \eqref{opti_org}  will only lead to an additive shift of the value function while the optimal strategy remains unchanged. 		
		Hence, it can be expected that the optimal decision rule given \eqref{opti_stra_power}  converges for $\theta\to 0$ to the optimal decision rule for log-utility $\Pi^{H}_{\log}(t,m)=\Sigma_{R}^{-1}m$ given in   Proposition \ref{LogUtilityValue}. An in fact, this is the case, since for $\theta\to 0$ the solutions to the ODEs for $A^H,B^H$ converge to zero. First, the r.h.s.~of the Riccati ODE for $A^H$ is a  quadratic expression of $A^H$ with the absolute term $-\frac{\theta}{2(1-\theta)}\Sigma_R^{-1}$. The latter   vanishes for $\theta=0$. Since the terminal value at time $t=T$ is zero, the solution $A^H$ is zero on $[0,T]$. Thus, the linear ODE for $B^H$ is homogeneous with zero terminal value, and the solution $B^H$ is zero on $[0,T]$. Finally, the correction term or drift risk $\Pi^{H}- \Pi^{F}$  containing $2A^{H}(t)m+B^{H}(t)$ vanishes for $\theta\to 0$.
\end{remark}   
By comparing the optimal decision rules for the $\HR, \HD$ and
$\HF$-investor for identical values $m$ of the conditional mean and
$q$ for the conditional covariance, some interesting properties
result, which are are formulated in the following lemma. For the
proof we refer to \cite[ Lemma 5.2.4 and  5.2.5]{Kondkaji
	(2019)}.

\begin{lemma}~
	\label{CoincidenceDecRule}
	\begin{enumerate}
		\item If  ~$\Mstate_t^{\HR}=\Mstate_t^{\HD}=m$~ and
		$\Qpro_t^{\HR}=\Qpro_t^{\HD}=q$~ then it holds
		$
		\Pi^{\HR}(t,m)=\Pi^{\HD}(t,m).
		$
		
		\item If ~$\Mstate_t^{H}=\,\drift_t=m$ and $\Qpro_t^{H}=0$~  then it holds
		$
		\Pi^{\HF}(t,m)=\Pi^{H}(t,m)$ for $H=\HR,\HD$.		
	\end{enumerate}
\end{lemma}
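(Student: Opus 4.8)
The plan is to reduce both assertions to the closed-form optimal decision rule from Theorem~\ref{Kandidate_diffusion},
\[
\Pi^{H}(t,m)=\frac{1}{1-\theta}\Sigma_{R}^{-1}\big(m+\Qpro^{H}_t(2A^{H}(t)m+B^{H}(t))\big),
\]
together with the structural observation that its coefficient functions carry the dependence on the information regime $H$ only through the conditional covariance. Indeed, by Theorem~\ref{Kandidate_diffusion} one has $A^{H}(t)=\overline{A}^{\HF}(t,\Qpro^{H}_t)$ and $B^{H}(t)=\overline{B}^{\HF}(t,\Qpro^{H}_t)$, and the maps $\overline{A}^{\HF},\overline{B}^{\HF}$ from Lemma~\ref{value_F_averge} are constructed entirely from the full-information objects $A^{\HF},B^{\HF}$, which do not depend on $H$, evaluated at the covariance argument. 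Hence $A^{H}(t)$ and $B^{H}(t)$ are the \emph{same} functions of $(t,q)$ for both $H=\HR$ and $H=\HD$; only the value $q=\Qpro^{H}_t$ inserted into them differs across regimes.

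For part~1, I would substitute the hypotheses $\Mstate^{\HR}_t=\Mstate^{\HD}_t=m$ and $\Qpro^{\HR}_t=\Qpro^{\HD}_t=q$ into the decision rule. Feeding the common value $q$ into the maps above gives $A^{\HR}(t)=\overline{A}^{\HF}(t,q)=A^{\HD}(t)$ and $B^{\HR}(t)=\overline{B}^{\HF}(t,q)=B^{\HD}(t)$. Since the prefactor $\tfrac{1}{1-\theta}\Sigma_{R}^{-1}$ is identical in both regimes---note that the decision rule~\eqref{opti_stra_power} uses $\Sigma_{R}^{-1}$ and not $\overline{\Sigma}_{H}^{-1}$---the expressions $\Pi^{\HR}(t,m)$ and $\Pi^{\HD}(t,m)$ agree term by term, which is the first assertion.

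For part~2, the cleanest route is through the myopic/hedging-demand decomposition of Remark~\ref{rem:myopic}, namely $\Pi^{H}(t,m)=\Pi^{\HF}(t,m)+\tfrac{1}{1-\theta}\Sigma_{R}^{-1}\Qpro^{H}_t(2A^{H}(t)m+B^{H}(t))$. Under the hypothesis $\Qpro^{H}_t=0$ the entire hedging-demand term vanishes, leaving $\Pi^{H}(t,m)=\Pi^{\HF}(t,m)$; equivalently, inserting $q=0$ directly into the formula collapses it to $\tfrac{1}{1-\theta}\Sigma_{R}^{-1}m$, which is exactly $\Pi^{\HF}$ from~\eqref{optimal_str_F}. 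The additional assumption $\Mstate^{H}_t=\drift_t=m$ merely guarantees that both rules are evaluated at the same state, and the argument is uniform in $H=\HR,\HD$.

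I do not anticipate a genuine obstacle here: both statements follow directly from the explicit formulas. The only point requiring care is the bookkeeping that the regime dependence of $A^{H},B^{H}$ factors entirely through $\Qpro^{H}_t$ and that the common prefactor is $\Sigma_{R}^{-1}$ in both regimes---once this is made explicit, both coincidences are immediate.
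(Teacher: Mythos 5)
The proposal is correct. The paper does not reproduce a proof of this lemma (it defers to Lemmas 5.2.4 and 5.2.5 of the cited thesis), but your argument is the direct and evidently intended one: Theorem~\ref{Kandidate_diffusion} shows that the regime dependence of $A^{H}(t)$ and $B^{H}(t)$ enters only through the instantaneous value $\Qpro^{H}_t$ via the $H$-independent maps $\overline{A}^{\HF},\overline{B}^{\HF}$, and the common prefactor in \eqref{opti_stra_power} is $\Sigma_{R}^{-1}$ for both $H=\HR$ and $H=\HD$, so equating the covariances (part~1) or setting $q=0$ (part~2) immediately yields the two coincidences.
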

\paragraph{Monetary value of information} Theorem \ref{Kandidate_diffusion} allows  to derive  explicit
expressions for the initial investment $x_0^{\HR/H}$ and
$x_0^{H/\HF}$ introduced in Subsection \ref{MonetaryValue} and
needed to evaluate the efficiency of the $\HR$- and $\HD$-investor
and the  monetary value of expert opinions.

\begin{lemma}[{Kondkaji (2019)}, Lemma 5.3.1]
	\label{monetary_value_RJ}
	For the initial capital $x_0^{H/\HF}$  and the efficiency
	$\varepsilon^{H}$   defined
	in  \eqref{monetaer_Wert} and \eqref{effekt_formel}, respectively,   it holds for $H=\HR, \HD$
	$$
	x_0^{H/\HF}=x_0^H\exp\{-\Delta_X^H(0)\}\quad  \text{and}\quad
	\varepsilon^{H}=\exp\{-\Delta_X^H(0)\}.
	$$
	For the initial capital $x_0^{\HR/\HD}$  defined in \eqref{initial_cap_H} and the monetary value of expert opinions $P_{Exp}^\HD$ it holds
	$$
	P_{Exp}^\HD=x_0^R-x_0^{\HR/\HD}~~~
	\text{with}~~~x_0^{\HR/\HD}=\exp\{-\Delta_X^{\HD}(0)+\Delta_X^{\HR}(0)\},
	$$
	where $\Delta_X^{\HD}(0)$ and $\Delta_X^{\HR}(0)$ are given in
	\eqref{Delta_X_RD}.
\end{lemma}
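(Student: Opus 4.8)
The plan is to reduce both claims to the closed-form, exponential-quadratic value functions from Theorem~\ref{Kandidate_diffusion} and the averaged full-information value function $\overline V^{\HF}$ from Lemma~\ref{value_F_averge}, and then to solve the defining indifference equations \eqref{monetaer_Wert} and \eqref{initial_cap_H} by hand.

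I would first establish the efficiency $\varepsilon^{H}$. Inserting the risk-sensitive representation \eqref{Value_Orig_RiskSens} into the right-hand side of \eqref{monetaer_Wert} and noting that the unknown $x_0^{H/\HF}$ is fixed by the deterministic initial data $\filterinitial,\condcovinitial$ and hence $\mathcal F_0^{H}$-measurable, it can be taken out of the conditional expectation, leaving
$$
\frac{(x_0^{H})^{\theta}}{\theta}\,V^{H}(0,\filterinitial)=\frac{(x_0^{H/\HF})^{\theta}}{\theta}\,\E\big[V^{\HF}(0,\drift_0)\mid\mathcal F_0^{H}\big].
$$
The key point is that, conditional on $\mathcal F_0^{H}$, the drift $\drift_0$ is $\mathcal N(\filterinitial,\condcovinitial)$-distributed, so the conditional expectation is exactly $\overline V^{\HF}(0,\filterinitial,\condcovinitial)$ by definition \eqref{mitteled_VF}.

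Next I would substitute the explicit forms \eqref{value_power_RJ} for $V^{H}(0,m)$ and \eqref{F_mittel} for $\overline V^{\HF}(0,m,\condcovinitial)$ and form their ratio at $m=\filterinitial$. By Theorem~\ref{Kandidate_diffusion} the quadratic and linear coefficients agree, $A^{H}(0)=\overline A^{\HF}(0,\condcovinitial)$ and $B^{H}(0)=\overline B^{\HF}(0,\condcovinitial)$, so those contributions cancel and only the constant part remains; using $C^{H}(0)=\overline C^{\HF}(0,\condcovinitial)-\theta\Delta_X^{H}(0)$ the ratio collapses to $\exp\{-\theta\Delta_X^{H}(0)\}$. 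Hence $(x_0^{H/\HF}/x_0^{H})^{\theta}=\exp\{-\theta\Delta_X^{H}(0)\}$, and taking $\theta$-th roots gives $\varepsilon^{H}=x_0^{H/\HF}/x_0^{H}=\exp\{-\Delta_X^{H}(0)\}$ together with $x_0^{H/\HF}=x_0^{H}\exp\{-\Delta_X^{H}(0)\}$.

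For the monetary value of the experts I would start from \eqref{initial_cap_H} with $H=\HD$, again apply \eqref{Value_Orig_RiskSens}, and arrive at $(x_0^{R})^{\theta}V^{\HR}(0,\filterinitial)=(x_0^{\HR/\HD})^{\theta}V^{\HD}(0,\filterinitial)$. The decisive simplification is that both investors share the same time-$0$ conditional covariance $\Qpro_0^{\HR}=\Qpro_0^{\HD}=\condcovinitial$, so $A^{\HR}(0)=A^{\HD}(0)$ and $B^{\HR}(0)=B^{\HD}(0)$; in the ratio $V^{\HR}(0,\filterinitial)/V^{\HD}(0,\filterinitial)$ the $m$-dependent terms drop out and only $\exp\{C^{\HR}(0)-C^{\HD}(0)\}=\exp\{\theta(\Delta_X^{\HD}(0)-\Delta_X^{\HR}(0))\}$ survives. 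Solving for $x_0^{\HR/\HD}$ yields $x_0^{\HR/\HD}=x_0^{R}\exp\{\Delta_X^{\HD}(0)-\Delta_X^{\HR}(0)\}$, and $P_{Exp}^{\HD}=x_0^{R}-x_0^{\HR/\HD}$ follows from its definition in Subsec.~\ref{MonetaryValue}. The argument is purely algebraic, so I anticipate no real obstacle; the two points demanding care are the $\mathcal F_0^{H}$-measurability of $x_0^{H/\HF}$ (which licenses pulling it out of the conditional expectation) and the exact cancellation of the quadratic forms, which relies precisely on the coefficient identities $A^{H}(0)=\overline A^{\HF}(0,\condcovinitial)$, $B^{H}(0)=\overline B^{\HF}(0,\condcovinitial)$ from Theorem~\ref{Kandidate_diffusion}.
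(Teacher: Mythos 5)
Your approach is the right one and, as far as this paper goes, essentially the only one available: the paper gives no proof of this lemma (it cites Lemma 5.3.1 of Kondakji's thesis), but the ingredients you use --- relation \eqref{Value_Orig_RiskSens}, the identification of $\E\big[V^{\HF}(0,\drift_0)\mid\mathcal F_0^{H}\big]$ with $\overline V^{\HF}(0,\filterinitial,\condcovinitial)$ via \eqref{mitteled_VF}, and the coefficient identities $A^{H}(0)=\overline A^{\HF}(0,\condcovinitial)$, $B^{H}(0)=\overline B^{\HF}(0,\condcovinitial)$, $C^{H}(0)=\overline C^{\HF}(0,\condcovinitial)-\theta\Delta_X^{H}(0)$ from Theorem \ref{Kandidate_diffusion} --- are exactly the machinery the paper assembles for this purpose in the ``Monetary value of information'' paragraph of Subsec.~\ref{FullInfo}. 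Your first computation is correct in every step (pulling the $\mathcal F_0^{H}$-measurable $x_0^{H/\HF}$ out of the conditional expectation is legitimate for precisely the reason you give, and both investors indeed start from $\Qpro_0^{H}=\condcovinitial$), and it yields the stated $x_0^{H/\HF}=x_0^{H}\exp\{-\Delta_X^{H}(0)\}$ and $\varepsilon^{H}=\exp\{-\Delta_X^{H}(0)\}$.

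For the second claim, however, you should have flagged that the formula your (correct) algebra produces, $x_0^{\HR/\HD}=x_0^{R}\exp\{\Delta_X^{\HD}(0)-\Delta_X^{\HR}(0)\}$, is \emph{not} the formula printed in the lemma, which reads $x_0^{\HR/\HD}=\exp\{-\Delta_X^{\HD}(0)+\Delta_X^{\HR}(0)\}$: the exponent has the opposite sign and the factor $x_0^{R}$ is absent. Your version is the internally consistent one. It is equivalent to $x_0^{R}\,\varepsilon^{\HR}=x_0^{\HR/\HD}\,\varepsilon^{\HD}$, which follows at once from the first (correct) part of the lemma; moreover, since $\Delta_X^{\HR}(0)\ge\Delta_X^{\HD}(0)\ge 0$ (efficiencies lie in $(0,1]$ and the $\HD$-investor is the more efficient one), the printed exponent would force $x_0^{\HR/\HD}\ge x_0^{R}$ under the normalization $x_0^{R}=1$ used in the paper's numerics, hence a nonpositive price $P_{Exp}^{\HD}=x_0^{R}-x_0^{\HR/\HD}\le 0$, contradicting the paper's own interpretation and Fig.~\ref{fig:price_experts}. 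So the printed statement carries a sign typo and tacitly sets $x_0^{R}=1$; your derivation proves what the lemma is meant to say, but a complete answer must note and resolve this discrepancy rather than pass over it silently.
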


\section{
	Partially  Informed Investors  Observing Discrete-Time Expert Opinions}
\label{UtilityMaxDiscreteExperts} 
After solving to the control problem \eqref{Wertfkt_H} for partially informed investors observing the diffusion processes this section presents the solution for investors observing returns and discrete-time expert opinions, i.e., the information regime $H=\HC$.  As in Sec.~\ref{UtilityMaxDiffusion} for $H=R,\HD$ we impose the pair of well posedness conditions \eqref{wellposed_full} and \eqref{wellposed_partial}.
We again apply the dynamic programming principle to derive the DPE for the value function and introduce the notation 
$$V_k(t,\zustand)= \left\{
\begin{array}{cl}
	V^{\HC}(t,\zustand) & \text{for }t\in[t_{k-1},t_{k})\\
	\vterm_k(y)=V^{\HC}(t_k-,\zustand)=\lim\limits_{t\nearrow t_k} V^{\HC}(t,\zustand) & \text{for }t=t_{k}
\end{array}\right. $$	
for    $k=1,\ldots,n$, i.e., $V_k:[t_{k-1},t_{k}]\to \R$ denotes the value function on the $k$-th interval between two subsequent information dates and  $\vterm_k(y)$ its left-hand limit at $t_k$. Note that for $t_n=T$ we have $\vterm_n(\zustand)=V(T,\zustand)=1$.
\begin{theorem}[Dynamic programming equation]%
	\label{backward_recursion_I}~
	\begin{enumerate} 
		\item   Let the value function be defined piecewise  for  $t\in [t_{k-1},t_{k}),~k=1,\ldots n,$ and $m\in \R^d$  by  $V^{\HC}(t,\zustand)=V_k(t,\zustand)$. Then  the functions $V_k$ satisfy  on $(t_0,t_1)$ and $[t_{k-1},t_{k}),~k=2,\ldots,n$, the PDE \eqref{HJB_Power_general}  with the terminal conditions
		\begin{align}
			\valuefkt_k(t_k,\zustand)=\vterm_k(\zustand)&=\EbarZ
			\Big[V_{k+1}\big(t_{k},\Mstate_{t_{k}}^{\Pi,t_{k}-,\zustand}\big)\Big], \quad  k=1,\ldots,n-1.
			\label{dynamic_P_P}
		\end{align}
		For $t=t_n=T$ it holds  $\valuefkt_n(T,\zustand)=1$ and for $t=t_0=0$ 
		\begin{align}
			V_1(0,\zustand)=\vterm_0(\zustand)&=\EbarZ
			\Big[V_{1}\big(0,\Mstate_{0+}^{\Pi,0,m}\big)\Big], \quad  m\in \R^d.
			\label{dynamic_P_P_0}
		\end{align}
		\item
		The  candidate optimal decision rule   is for $t\in [t_{k-1},t_{k}),~k=1,\ldots n$, given by 
		\begin{align}
			\label{opti_stra_power_C_det}
			\Pi^{\HC}=\Pi^{\HC}(t,\zustand)				
			&= \Pi^{F}(t,\zustand)+\frac{1}{1-\theta}\Sigma_{R}^{-1}\Qpro_t^{\HC}\frac{\diffop_{\zustand}\valuefkt_k(t,\zustand)}{\valuefkt_k(t,\zustand)}, 
		\end{align}
		where $\Pi^F$ is given in \eqref{optimal_str_F}.
		
	\end{enumerate}
\end{theorem}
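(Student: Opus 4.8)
The plan is to exploit the piecewise structure of the $\HC$-filter described in Lemma \ref{filter_C}: between two consecutive information dates the conditional mean $\Mstate^{\HC}$ solves the same SDE as the $\HR$-filter, having diffusion coefficient $\betam^{\HC}(\variance)=\variance\Sigma_{\HR}^{-1/2}=\betam^{\HR}(\variance)$ and $\overline\Sigma_{\HC}^{-1}=\Sigma_{\HR}^{-1}$, while at each date $t_k$ it undergoes a control-independent jump governed by the Bayesian update \eqref{updateformel_Filter_CN}. Accordingly I would split the argument into two parts: an \emph{interior} part producing the PDE \eqref{HJB_Power_general} and the feedback rule on every open interval $(t_{k-1},t_k)$, and a \emph{matching} part producing the jump conditions \eqref{dynamic_P_P}--\eqref{dynamic_P_P_0} that glue the pieces $V_k$ together at the information dates.

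For the interior part, fix $k$ and restrict attention to $t,\tau\in[t_{k-1},t_k)$ with $\tau>t$. Since no expert arrives in this range, $\Mstate^{\HC}$ is a continuous diffusion there and the dynamic programming principle applies exactly as in Lemma \ref{lemma_DPP}. Letting $\tau\searrow t$ and invoking Dynkin's formula for the generator $\generator^{p}$ of the state equation \eqref{Filter_M_int_gen} with $\betam^{\HC}=\betam^{\HR}$, then optimizing the resulting supremum over $\pointp\in\R^d$, I would reproduce the computation carried out in the proof of Theorem \ref{DPE_general}. This yields the PDE \eqref{HJB_Power_general} for $V_k=V^{\HC}$ with $\overline\Sigma_{\HC}^{-1}=\Sigma_{\HR}^{-1}$ and the maximizer \eqref{opti_stra_general}, which is precisely \eqref{opti_stra_power_C_det}. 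On the last interval the terminal utility condition gives $\vterm_n(\zustand)=V^{\HC}(T,\zustand)=1$.

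For the matching part I would derive the jump condition at $t_k$ by one more application of the dynamic programming principle, now across the information date. For $t$ slightly below $t_k$ the running reward over the single instant $t_k$ contributes nothing, so the continuation value at $t_k^-$ in state $\zustand=\Mstate^{\HC}_{t_k-}$ is the conditional expectation of the post-jump value $V_{k+1}(t_k,\Mstate^{\HC}_{t_k})$. Using the increment representation \eqref{updateformel_Filter_CN_bedingt}, namely $\Mstate^{\HC}_{t_k}=\zustand+\Qpro^{\HC}_{t_k-}(\varianceexp+\Qpro^{\HC}_{t_k-})^{-1/2}U_k$ with $U_k\sim\mathcal N(0,I_d)$ independent of $\mathcal F^{\HC}_{t_k-}$ (Lemma \ref{bedingte_Verteilung_Z_lem}), this conditional expectation equals $\E[V_{k+1}(t_k,\Mstate_{t_k}^{\Pi,t_k-,\zustand})]$, i.e.\ the terminal condition \eqref{dynamic_P_P}. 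The boundary update at $t_0=0$ coming from $Z_0$ produces \eqref{dynamic_P_P_0} in the same way, consistent with the convention that $Z_0\notin\mathcal F^{\HC}_0$. Concatenating these steps gives a well-defined backward recursion $\vterm_n\to V_n\to\vterm_{n-1}\to\cdots\to V_1\to\vterm_0$.

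The main obstacle I anticipate lies not in the interior PDE computation, which merely transcribes Theorem \ref{DPE_general}, but in the rigorous justification of the dynamic programming principle across the jumps. Two points need care. First, one must verify that the jump at $t_k$ is independent of the control and of $\mathcal F^{\HC}_{t_k-}$; this follows because the conditional covariance $\Qpro^{\HC}$ is deterministic, so by Lemma \ref{filter_C} and \eqref{updateformel_Filter_CN_bedingt} the jump distribution is a fixed Gaussian convolution that the strategy on $[t_{k-1},t_k)$ cannot influence. Second, one must interchange the supremum over $\mathcal A$ with the conditional expectation over the jump: by the tower property the inner expectation $\E[V_{k+1}(t_k,\Mstate^{\HC}_{t_k})\mid\mathcal F^{\HC}_{t_k-}]$ reduces to $\vterm_k(\Mstate^{\HC}_{t_k-})$, after which the remaining optimization involves only the continuous evolution on $[t_{k-1},t_k)$ with the averaged terminal data $\vterm_k$, so that it coincides with the interior control problem already solved. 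Granting these, the candidate value function $V^{\HC}$ and the feedback rule \eqref{opti_stra_power_C_det} follow.
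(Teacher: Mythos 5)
Your proposal is correct and follows essentially the same route as the paper's own proof: the interior PDE and feedback rule are obtained by restricting the dynamic programming principle to the intervals between information dates and transcribing the argument of Theorem \ref{DPE_general}, while the pasting conditions \eqref{dynamic_P_P}--\eqref{dynamic_P_P_0} come from applying the dynamic programming principle across each date $t_k$, letting the running-reward integral vanish in the limit $t\nearrow t_k$, and using that the jump of $\Mstate^{\HC}$ is a control-independent Gaussian update (so the supremum over $\Pi$ can be dropped). Your explicit invocation of Lemma \ref{bedingte_Verteilung_Z_lem}, the increment representation \eqref{updateformel_Filter_CN_bedingt}, and the tower property only spells out in more detail what the paper compresses into the remark that the expectation depends solely on the jump-size distribution.
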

\begin{proof}
	Let   $t,\tau  \in(0,t_1)$  or  $t,\tau \in [t_{k-1},t_{k}),~k=2,\ldots n,$ with   $\tau>t$  for a fixed time point $t$. Analogously to the proof of Theorem \ref{DPE_general}  the dynamic programming principle \eqref{DPP_Y} and the continuity of $\Mstate_s^{H,\Pi,t,\zustand}$ on   $(0,t_1)$ and  $[t_{k-1},t_{k}),~k=2,\ldots n$, imply  that $V_k $ satisfies the PDE \eqref{HJB_Power_general} and that the optimal decision rule is given  as in \eqref{opti_stra_power_C_det}. 			
	
	It remains to prove the terminal conditions in \eqref{dynamic_P_P}   and relation \eqref{dynamic_P_P_0} for the initial time $t=0$. We fix an
	information date $t_k$, $k=1,\ldots,n-1$ and apply again 
	the dynamic programming principle \eqref{DPP_Y} where  we set
	$\tau=t_k$ and  consider the following limit
	\begin{align}
		\vterm_{k}(\zustand)&=\valuefkt_k(t_k-,\zustand)=\valuefkt^{\HC}(t_k-,\zustand)\\
		&=\lim\limits_{t\nearrow t_k}\;\sup\limits_{\Pi\in\mathcal{A}^\HC} \EbarZ  \bigg[\exp\Big\{\int_t^{t_k} 
		b(\Mstate_s^{\HC,\Pi,t,\zustand},\Pi(s,\Mstate_s^{\HC,\Pi,t,\zustand}))ds\Big\}
		\valuefkt^\HC(t_k,\Mstate_{t_k}^{\HC,\Pi,t,\zustand}) \bigg]\nonumber\\
		&=\sup\limits_{\Pi\in\mathcal{A}^\HC} \EbarZ
		\Big[V_{k+1}\big(t_{k},\Mstate_{t_{k}}^{\HC,\Pi,t_{k}-,\zustand}\big)\Big].\nonumber
	\end{align}
	 The above expectation depends only on the distribution of the jump size $\Mstate_{t_k}-\Mstate_{t_k-}$ of the state process which  is independent of the  decision rule $\Pi$.   Thus, we can omit the 		supremum in the last equation and  get
	$\vterm_{k}(\zustand)=\EbarZ
	\Big[V_{k+1}\big(t_{k},\Mstate_{t_{k}}^{\HC,\Pi,t_{k}-,\zustand}\big)\Big].$
	
	For time $t=0$ we have $\mathcal{F}_0^\HC=\mathcal{F}_0^I$ which represents the  prior information on the initial drift $\mu_0$ but does not yet contain the first expert opinion $Z_0$. As above the dynamic programming principle yields for $t=0$ and $ \tau\in(0,t_1)$
	\begin{align}
		\valuefkt^{\HC}(0,\zustand)&=\valuefkt_1(0,\zustand)\\
		&=\lim\limits_{\tau\searrow 0}\;\sup\limits_{\Pi\in\mathcal{A}^\HC} \EbarZ  \bigg[\exp\Big\{\int_0^{\tau}    b(\Mstate_s^{\HC,\Pi,0,\zustand},\Pi(s,\Mstate_s^{\HC,\Pi,0,\zustand}))ds\Big\}
		\valuefkt^\HC(\tau,\Mstate_{\tau}^{\HC,\Pi,0,\zustand}) \bigg]\nonumber\\
		&=\sup\limits_{\Pi\in\mathcal{A}^\HC} \EbarZ
		\Big[V_{1}\big(0+,\Mstate_{0+}^{\HC,\Pi,0,\zustand}\big)\Big] = v_0(\zustand).\nonumber 
	\end{align}						
\end{proof}

\begin{remark}
	\label{rem:jump_V}
	The above theorem shows that at the information dates $t_k, k=1,\ldots,n-1$, the value function exhibits  jumps of size $V^{\HC}(t_k,\zustand)-V^{\HC}(t_k-,\zustand)=V_{k+1}(t_k,\zustand)-\vterm_k(\zustand)$.     Note that we excluded  the information of the first expert opinion $Z_0$ from  the initial $\sigma$-algebra $\mathcal{F}_0^\HC$. Therefore $V$ exhibits  at time $t=0$  a jump of size $V^{\HC}(0+,\zustand)-V^{\HC}(0,\zustand)=V_{1}(0,\zustand)-v_0(m)$. 
\end{remark}

For $H=Z$ the DPE  appears as a system of coupled terminal value problems for the PDE \eqref{HJB_Power_general} for $V_k$ which are tied together by the terminal conditions \eqref{dynamic_P_P}. The latter appear as pasting conditions for the value function described by  $V_k$ and $V_{k+1}$ on two subsequent intervals divided by the information date $t_k$. Therefore  that system can  be solved recursively for $k=n,\ldots,1$ starting with $V_n(t_n,m)=V_n(T,m)=1$. From Sec.~\ref{UtilityMaxDiffusion} it is already known
that the DPEs  for the control problems for the information regimes $H=R,\HD$ can be solved explicitly using an exponential ansatz leading to the  results given in Theorem  \ref{Kandidate_diffusion}.  We apply this idea to our problem and  make for $t\in [t_{k-1},t_{k}),~k=1,\ldots n$, the ansatz 
\begin{align}
	V_k(t,\zustand)=\exp\{\zustand^{\top}A_k(t)\zustand+B_k^{\top}(t)\zustand+C_k(t)\}
	\label{Value_C}
\end{align}
where   $A_k$ is a function on $[t_{k-1},t_k]$ taking values in the set of real  symmetric $\nAktien\times\nAktien$ matrices, whereas $B_k$  and $C_k$ are  some  functions on $[t_{k-1},t_k]$ with values in $\mathbb R^{\nAktien}$ and $\mathbb R$, respectively, which have to determined.

\begin{theorem}[Solution of dynamic programming equation and optimal decision rule]
	\label{backward_recursion_II}~
	\begin{enumerate} 
		\item  The solution to the dynamic programming equation given in \eqref{dynamic_P_P} and \eqref{dynamic_P_P_0}  is for $[t_{k-1},t_{k}),~k=1,\ldots n$ and $m\in \R^d$ given by $$V_k(t,\zustand)=\exp\{\zustand^{\top}A_k(t)\zustand+B_k^{\top}(t)\zustand+C_k(t)\}$$  where the functions $A_k, B_k$  and $C_k$ satisfy  on  $t\in (t_0,t_1)$ and $t\in [t_{k-1},t_{k}),~k=2,\ldots n$,   the system of ODEs given in Theorem \ref{Kandidate_diffusion} for $H=R$
		with terminal values for $t=t_k$ 
		\begin{align}					
			A_{k}(t_k)&=\Update_k\, A_{k+1}(t_k),\label{end_An}\\
			B_{k}(t_k)&=\Update_k\,  B_{k+1}(t_k),\label{end_Bn}\\
			C_{k}(t_k)&=C_{k+1}(t_k) \plusSigmak \frac{1}{2}\, B_{k+1}^{\top}(t_k)\, \Update_k \, \Sigmak B_{k+1}(t_k)
			+\frac{1}{2}\log \rm{det}\, \Update_k ,\label{end_Cn}
		\end{align} 
		where $ \Sigmak$ is the increment of the conditional variance at $t_k$ and given in \eqref{updateformel_Variance_CN_bedingt}. Further
		\begin{align}\label{Lambda_update}
			\Update_k:= \big({I}_{\nAktien}\minusSigmak 2\, A_{k+1}(t_k) \Sigmak \big)^{-1} \quad \text{for }k=1,\ldots,n-1.
		\end{align}
		
		For $k=n$ it holds  $A_{\nExperten}(t_{\nExperten})=0_{\nAktien\times\nAktien}$,
		$B_{\nExperten}(t_{\nExperten})=0_{\nAktien\times1}$, 
		$C_{\nExperten}(t_{\nExperten})=0$.  \\
		For $t_0=0$ the values of $A_1,B_1, C_1$ are obtained from the formulas \eqref{end_An}, \eqref{end_Bn}, \eqref{end_Cn} replacing  $A_{k+1}(t_k)$, $B_{k+1}(t_k)$, $C_{k+1}(t_k)$ by $A_{1}(0+), B_{1}(0+), C_{1}(0+)$, respectively.
		
		\item
		The  candidate optimal decision rule  is for $t\in [t_{k-1},t_{k}),~k=1,\ldots n$, $m\in \R^d$, given by 
		\begin{align}
			\label{opti_stra_power_C_det_II}
			\Pi^{\HC}=\Pi^{\HC}(t,\zustand)&
			= \Pi^{F}(t,\zustand) +  \frac{1}{1-\theta}\Sigma_{R}^{-1}  \Qpro^\HC_t \big(2\,A_k(t)m+B_k(t)\big), 					
		\end{align}
		where $\Pi^F$ is given in \eqref{optimal_str_F}.
		
	\end{enumerate}
\end{theorem}
\begin{proof}
	Plugging ansatz \eqref{Value_C} for $V_k$ into PDE
	\eqref{HJB_Power_general} with $\overline \Sigma^\HC=\Sigma_R$   and equate coefficients in front of $\zustand$ yields the ODEs
	 given in Theorem \ref{Kandidate_diffusion} for $H=R$. 
	The terminal value $V(T,\zustand)=V_n(t_n,\zustand)=1$ implies the given terminal values for $A_n, B_n,C_n$			
	at the terminal time $t_{\nExperten}=T$.
	The other terminal values follow from the evaluation of the expectation  on the right side of
	\eqref{dynamic_P_P}. Using the shorthand notation 
	$	\Ashort=A_{k+1}(t_{k}),  \Bshort=B_{k+1}(t_{k}),  \Cshort=C_{k+1}(t_{k})$ and ansatz \eqref{Value_C} for $V_{k+1}(t_k)$ we obtain
	\begin{align}
		\vterm_{k}(\zustand)&= \EbarZ \Big[  V_{k+1}\big(t_{k},\Mstate_{t_{k}}^{\Pi,t_{k}-,\zustand}\big) \Big]
		= \EbarZ \Big[ \exp\Big\{ \big(\Mstate_{t_{k}}^{\Pi,t_{k}-,\zustand}\big)^{\top} \Ashort\, \Mstate_{t_{k}}^{\Pi,t_{k}-,\zustand} +\Bshort^{\top} \Mstate_{t_{k}}^{\Pi,t_{k}-,\zustand} +\Cshort \Big\} \Big].
		\nonumber
	\end{align}
	Completing the square with respect to $\Mstate_{t_{k}}^{\Pi,t_{k}-,\zustand}$ 
	yields
	\begin{align}
		v_{k}(\zustand)&=\EbarZ \Big[ \exp \Big\{ \big( \Mstate_{t_{k}}^{\Pi,t_{k}-,\zustand}+\frac{1}{2}\Ashort^{-1} \Bshort\big)^{\top} \Ashort\,
		\big( \Mstate_{t_{k}}^{\Pi,t_{k}-,\zustand}+\frac{1}{2}\Ashort^{-1} \Bshort\big) \Big\}
		\cdot\exp\Big\{-\frac{1}{4}\Bshort^\top \Ashort^{-1} \Bshort+\Cshort \Big\}  \Big]. \nonumber
	\end{align}
	The above expectation can be  evaluated using \cite[Lemma 3.4]{Gabih et al (2022) Nirvana} saying that for 
	a $\nAktien$-dimensional Gaussian random vector $Y\sim\mathcal N(\mu_Y,\Sigma_Y)$, $b\in \mathbb R^\nAktien$ and   a symmetric and invertible matrix   ${U}\in \mathbb R^{d\times d}$  such that  all eigenvalues of  ${I}_{\nAktien}-2{U}\Sigma_Y$  are positive,  it holds
	\begin{align}
		\nonumber
		\EbarZ \big[\exp\{(Y+b)^{\top} {U}(Y+b)\}\big]=&\big(\rm{det}({I}_{\nAktien}-2\,{U} \Sigma_Y)\big)^{-1/2} \times\\
		&\exp\big\{(\mu_Y+b)^{\top}\;   ({I}_{\nAktien}-2\,{U}\Sigma_Y)^{-1} {U} \,(\mu_Y+b) \big\}.
		\nonumber
	\end{align}
	We set  $U=\Ashort$ and $b=\frac{1}{2}\Ashort^{-1} \Bshort $
	and $Y=\Mstate_{t_{k}}^{\Pi,t_{k}-,\zustand}\sim\mathcal N (\mu_Y,\Sigma_Y)$. For mean and covariance of $Y$ update formula \eqref{updateformel_Filter_CN_bedingt} and relation \eqref{updateformel_Variance_CN_bedingt} imply $\mu_Y=\zustand$ and $\Sigma_Y=   
	\Qpro_{t_k-}^{\HC}(\varianceexp+\Qpro_{t_k-}^{\HC})^{-1}  \Qpro_{t_k-}^{\HC} =- \Sigmak$. 
	We obtain 
	\begin{align}
		\vterm_{k}(\zustand)&=\exp\Big\{\big(\zustand+ \frac{1}{2}\Ashort^{-1} \Bshort\big)^{\top}\;  \Update_k \Ashort \big(\zustand+\frac{1}{2}\Ashort^{-1} \Bshort\big) - \frac{1}{4}\Bshort^\top \Ashort^{-1}\Bshort+\Cshort +\frac{1}{2}\log \det \Update_k \Big\}.
		\nonumber
	\end{align}
	and rearranging terms yields
	\begin{align}
		\vterm_{k}(\zustand)&=\exp\Big\{\zustand^{\top}\Update_k \Ashort\, \zustand~+\Bshort^{\top}\Update_k\, \zustand ~+\Cshort \plusSigmak \frac{1}{2} \Bshort^{\top} \Update_k\, \Sigmak \Bshort +\frac{1}{2}\log \rm{det} \Update_k\Big\},
		\nonumber
	\end{align}
	On the other hand we have from ansatz \eqref{Value_C}
	$
	\vterm_{k}(\zustand)= V_{k}(t_k,\zustand)
	=\exp\big\{\zustand^{\top}A_k(t_k) \zustand+B_k^{\top}(t_k)\zustand+C_k(t_k) \big\}.
	$ 
	By comparing the coefficients  
	in front of $\zustand$ in the last two expressions for $\vterm_k$ and substituting back the expressions for $\Ashort, \Bshort, \Cshort$ we obtain the desired result.
	
	The proof for $t=t_0=0$ is analogous.			
	Finally, the expression for the optimal decision rule follows if ansatz \eqref{Value_C} for the value function $V^\HC$ is substituted into \eqref{opti_stra_power_C_det}.			
\end{proof} 
\begin{remark} \label{rem:update_unreliable_experts}
	Analysing the update formulas \eqref{end_An} through 
	\eqref{end_Cn} for expert opinions which become less and less reliable in the sense that $\norm{\varianceexp^{-1}}\to 0$ 
	it can be seen that 	$ \Sigmak$ tend to $0_d$ and   the update-factors $\Update_k$ given in \eqref{Lambda_update} tend to  $I_d$. 
	As a consequence the functions $A_k$, $B_k$ and
	$C_k$  define a smooth value function $V^\HC$ on $[0,T]$ which equals the value function $V^\HR$ of the $\HR$-investor. This is as expected since  in the considered limiting case for the $\HC$-investors expert opinions do not provide any additional information about the drift.
	
	 In Remark \ref{rem_log_power} we have seen that for $H=\HR,\HD$ and $\theta\to 0$ the optimal decision rule $\Pi^{H}(t,m)$ converges to   $\Pi^{H}_{\log}(t,m)=\Sigma_{R}^{-1}m$ which is optimal for log-utility, see Proposition \ref{LogUtilityValue}.  The same holds for $H=\HC$ since the functions $A^k,B^k$ and therefore the correction term $\Pi^Z-\Pi^F$ vanish  for $\theta=0$. This can be seen from the vanishing solutions of the ODEs for $A_k,B_k$ between the information dates. This implies that the updates \eqref{end_An},\eqref{end_Bn} at the informations dates also vanish.
\end{remark}	

	\paragraph{Boundedness of $\mathbf{A_k,B_k,C_k}$} The proof of this property is based on the relations $A_k(t)=\overline{A}^{\HF}\big(t,\Qpro_t^{\HC}\big)$ and $B_k(t)=\overline{B}^{\HF}\big(t,\Qpro_t^{\HC}\big)$ on $[t_{k-1},t_k], ~k=1,\ldots,n$.  They can be verified by some calculations showing  that between the information dates the left and right hand sides satisfy the same ODEs  as in the  corresponding proof for $A^H,B^H$ for $H=R,\HD$ in 	\cite[Lemma 5.2.1]{Kondkaji (2019)}. There are also the same terminal conditions at the information dates which  are given by the update formulas \eqref{end_An} and \eqref{end_An}. In view of the definition of $\overline{A}^F,\overline{B}^F$ in Lemma \ref{value_F_average}, the boundedness of $A_k,B_k$ follows from  condition \eqref{A_bounded_full} saying that $\Abound$ is bounded. Then also $\Bbound$ and $\Cbound$ are bounded.
	
	Further,   the eigenvalues of $I_{\nAktien}-2A^{\HC}(t)\Qpro_t^\HC$ have to be positive  on $[0,T]$. As already explained for $H=\HR,\HD$ this always holds true for $\theta<0$ while for $\theta\in(0,1)$ one has to require $\lambda_{\max}(\Abound(t)\Qpro^\HC_t)<1/2$. From Prop.~\ref{properties_filter} it is known that  $\Qpro^{\HC}_t \preceq \Qpro^R_t$ from which one can deduce $\lambda_{\max}(\Abound(t)\Qpro^\HC_t)\le \lambda_{\max}(\Abound(t)\Qpro^\HR_t)<1/2$. Thus, it is enough to check the condition for $H=\HR$. 
	
	Finally,    The boundedness of $C_k$ follows from integrating the r.h.s.~of the ODE for $C_k$ which is bounded and the updates  given in  \eqref{end_An} which are also bounded.
	
	\smallskip
	\paragraph{Verification}  The key idea for the verification is that between the information dates the functions $V_k$ satisfy the same DPE as the value function $V^R$ of the $R$-investor given in   \eqref{HJB_Power_general}. This allows to rely on the verification results for $H=R$ and to iterate them backward in time. Starting point is a control problem on $[t_{n-1},T]$ with the modified initial time $t=t_{n-1}$ instead of $t=0$ and initial value $m_0=\Mstate^Z_{t_{n-1}}$ and $q_0=\Qpro^Z_{t_{tn-1}}$. This is a control problem for the $R$-investor on $[t_{n-1},T]$ for which $V_n(t,m)$  and $\Pi^Z(t,m)$ for all $(t,m)$ are verfified to be  the value function and  the optimal decision rule, respectively. Here,  $V_n$ satisfies the terminal condition $V(T,m)=1$, as for $H=R$.
	
	Next, we consider the control problem on  $[t_{n-2},t_{n-1}]$ with initial time $t=t_{n-2}$. At the terminal time $t_{n-1}$ the terminal condition is obtained from the dynamic programming principle \eqref{DPP_Y} leading to the expression for $V_{n-1}(t_{n-1},m)$ given in  \eqref{dynamic_P_P}. Again we can apply the verification results for $H=R$ on this time interval. Note, that these results also work for nonzero terminal conditions. They only require the boundedness of the solutions of the ODEs for $A_k,B_k,C_k$ which already checked above. Continuing this iteration completes the verification.

\paragraph{Monetary value of information} \label{Eff_partiell_Cn}
For the initial investments $x_0^{\HR/\HC}$ and
$x_0^{\HC/\HF}$ introduced in Subsection \ref{MonetaryValue} and
needed to evaluate the efficiency of the $\HC$-investor
and the  monetary value of expert opinions one can derive the following  expressions.

\begin{lemma}[{Kondkaji (2019)}, Lemma 6.3.1]
	\label{monetary_value_Z}	
	For the initial capital $x_0^{Z/\HF}$  and the efficiency
	$\varepsilon^{Z}$   defined
	in  \eqref{monetaer_Wert} and \eqref{effekt_formel}, respectively,   it holds 
	$$
	x_0^{Z/\HF}=x_0^Z\exp\{-\Delta_X^Z(0)\}\quad  \text{and}\quad
	\varepsilon^{Z}=\exp\{-\Delta_X^Z(0)\}.
	$$
	For the initial capital $x_0^{\HR/\HC}$  defined in \eqref{initial_cap_H} and the monetary value of expert opinions $P_{Exp}^\HC$ it holds
	$$
	P_{Exp}^\HC=x_0^R-x_0^{\HR/\HC}~
	\text{with}~x_0^{\HR/\HC}=\exp\{-\Delta_X^{\HC}(0)+\Delta_X^{\HR}(0)\},
	$$
	where $\Delta_X^{\HC}(0)$ and $\Delta_X^{\HR}(0)$ are given in
	{\eqref{Delta_X_RD}} where for $H=\HC$ the settings $\Qpro^H=\Qpro^Z, \overline \Sigma^\HC=\Sigma_R$ and $\overline{K}^{\HC}(t)=0$ apply.
\end{lemma}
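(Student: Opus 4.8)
The plan is to run the same utility--indifference computation that yields Lemma~\ref{monetary_value_RJ} in the diffusion regimes, the only new ingredient being the exponential--quadratic shape of $\valuefkt^{\HC}(0,\cdot)$ from Theorem~\ref{backward_recursion_II}. For the efficiency I would start from the defining relation \eqref{monetaer_Wert} and insert \eqref{Value_Orig_RiskSens} on both sides. The left-hand side equals $\tfrac{(x_0^{Z})^{\theta}}{\theta}\,\valuefkt^{\HC}(0,\filterinitial)$; on the right-hand side I pull the deterministic factor $\tfrac{(x_0^{Z/\HF})^{\theta}}{\theta}$ out of the conditional expectation and use that, given $\mathcal F_0^{\HC}=\mathcal F_0^{I}$, the drift $\drift_0$ is $\mathcal N(\filterinitial,\condcovinitial)$-distributed, so that by definition \eqref{mitteled_VF} one has $\E[\valuefkt^{\HF}(0,\drift_0)\mid\mathcal F_0^{\HC}]=\overline{V}^{\HF}(0,\filterinitial,\condcovinitial)$. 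Equating the two sides gives
\begin{align}
	\Big(\frac{x_0^{Z/\HF}}{x_0^{Z}}\Big)^{\theta}=\frac{\valuefkt^{\HC}(0,\filterinitial)}{\overline{V}^{\HF}(0,\filterinitial,\condcovinitial)},
\end{align}
and taking $\theta$-th roots together with \eqref{effekt_formel} reduces the first assertion to the representation \eqref{rep} below. (I treat $\theta\in(0,1)$; the case $\theta<0$ is identical up to replacing $\sup$ by $\inf$.)

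For the monetary value of the experts I would use \eqref{initial_cap_H}, which after \eqref{Value_Orig_RiskSens} reads $(x_0^{R})^{\theta}\valuefkt^{\HR}(0,\filterinitial)=(x_0^{\HR/\HC})^{\theta}\valuefkt^{\HC}(0,\filterinitial)$. Inserting the representation for both value functions and using that the two investors share the same initial conditional covariance $\Qpro_0^{\HR}=\Qpro_0^{\HC}=\condcovinitial$, the common factor $\overline{V}^{\HF}(0,\filterinitial,\condcovinitial)$ cancels and only the $\Delta_X$-corrections remain, giving $(x_0^{\HR/\HC}/x_0^{R})^{\theta}=\exp\{\theta(\Delta_X^{\HC}(0)-\Delta_X^{\HR}(0))\}$; the formula for $x_0^{\HR/\HC}$ and thus $P_{Exp}^{\HC}=x_0^{R}-x_0^{\HR/\HC}$ follow after taking $\theta$-th roots.

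Everything therefore rests on the analogue of Theorem~\ref{Kandidate_diffusion} for the discrete-expert regime, namely
\begin{align}
	\valuefkt^{\HC}(0,m)=\overline{V}^{\HF}\big(0,m,\Qpro_0^{\HC}\big)\,\exp\big\{-\theta\,\Delta_X^{\HC}(0)\big\},\qquad m\in\R^{d}. \label{rep}
\end{align}
To prove this I would show by backward induction over the intervals $[t_{k-1},t_k)$ that the coefficients of the ansatz \eqref{Value_C} satisfy $A_k(t)=\overline{A}^{\HF}(t,\Qpro_t^{\HC})$, $B_k(t)=\overline{B}^{\HF}(t,\Qpro_t^{\HC})$ and $C_k(t)=\overline{C}^{\HF}(t,\Qpro_t^{\HC})-\theta\,\Delta_X^{\HC}(t)$ along the deterministic but jumping path $\Qpro_t^{\HC}$. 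On each interval this is the computation already performed for the $R$-regime, since there $\Qpro^{\HC}$ obeys the same Riccati equation \eqref{Riccati_R} and the ODEs \eqref{DGL_A_cn}--\eqref{DGL_C_cn} reduce (for $\overline\Sigma^{\HC}=\Sigma_R$) to the $R$-case coefficient ODEs. The genuinely new point is that the induction survives the updates at the information dates: with $\Lambda_k=(I_d+2A_{k+1}(t_k)\Sigmak)^{-1}$ and the covariance update $\Qpro_{t_k-}^{\HC}=\Qpro_{t_k}^{\HC}-\Sigmak$ one checks the algebraic identities
\begin{align}
	\Lambda_k\,\overline{A}^{\HF}(t_k,\Qpro_{t_k}^{\HC})=\overline{A}^{\HF}(t_k,\Qpro_{t_k-}^{\HC}),\qquad\Lambda_k\,\overline{B}^{\HF}(t_k,\Qpro_{t_k}^{\HC})=\overline{B}^{\HF}(t_k,\Qpro_{t_k-}^{\HC}),
\end{align}
so that \eqref{end_An}--\eqref{end_Bn} are exactly the transformation of $\overline A^{\HF},\overline B^{\HF}$ under the covariance jump, while in \eqref{end_Cn} the term $\tfrac12\log\det\Lambda_k$ matches the $\log\det$ part of $\overline C^{\HF}$ in Lemma~\ref{value_F_averge} and the remaining pieces produce the increment of $\Delta_X^{\HC}$ across $t_k$.

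The main obstacle is precisely this cross-jump bookkeeping: one must verify that the matrix Riccati flow for $A_k$ composed with the discrete map $A_k(t_k)=\Lambda_k A_{k+1}(t_k)$ reproduces the $q$-dependence of $\overline A^{\HF}(\cdot,\cdot)$ (the two identities above), and, more delicately, that the quadratic-in-$B$ and $\log\det$ contributions of \eqref{end_Cn} assemble together with the interval integrals into the closed form of $\Delta_X^{\HC}(0)$ obtained from \eqref{Delta_X_RD} under the settings $\Qpro^{H}=\Qpro^{Z}$, $\overline\Sigma^{\HC}=\Sigma_R$, $\overline K^{\HC}\equiv0$. Once \eqref{rep} is established, both assertions of the lemma follow immediately from the two reductions above.
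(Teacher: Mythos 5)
The paper never proves this lemma in-text (it defers to Kondakji's thesis, Lemma 6.3.1), so your proposal has to be judged on its own merits. Its skeleton is certainly the intended route: the two indifference reductions via \eqref{monetaer_Wert}, \eqref{initial_cap_H}, \eqref{Value_Orig_RiskSens} and \eqref{mitteled_VF} are correct, and so are your two jump identities. Indeed, if $A_{k+1}(t_k)=\overline{A}^{\HF}(t_k,\Qpro_{t_k}^{\HC})$ then \eqref{Lambda_update} gives $\Lambda_k=\big(I_d-2A^{\HF}(t_k)\Qpro_{t_k-}^{\HC}\big)^{-1}\big(I_d-2A^{\HF}(t_k)\Qpro_{t_k}^{\HC}\big)$, whence $\Lambda_k\overline{A}^{\HF}(t_k,\Qpro_{t_k}^{\HC})=\overline{A}^{\HF}(t_k,\Qpro_{t_k-}^{\HC})$ and likewise for $\overline{B}^{\HF}$; this is exactly what makes $\valuefkt^{\HC}(0,m)/\overline{V}^{\HF}(0,m,\condcovinitial)$ independent of $m$. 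One side remark: your own computation yields $x_0^{\HR/\HC}=x_0^{\HR}\exp\{\Delta_X^{\HC}(0)-\Delta_X^{\HR}(0)\}$, which is the \emph{reciprocal} of the displayed formula; since $\Delta_X^{\HC}(0)\le\Delta_X^{\HR}(0)$, your version is the one consistent with $P_{Exp}^{\HC}\ge 0$ and Fig.~\ref{fig:price_experts}, so you should flag this sign discrepancy (it is inherited from Lemma \ref{monetary_value_RJ}) rather than assert agreement.

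The genuine gap is the last step, where you claim the $C$-level bookkeeping assembles into ``the closed form of $\Delta_X^{\HC}(0)$ obtained from \eqref{Delta_X_RD}'' under the stated settings. That endpoint is unreachable as written. First, read literally, those settings force $\xi^{\HC}=\xi^{\HR}$ and $\underline{K}^{\HC}=\underline{K}^{\HR}$ (same Riccati ODE with source $\tfrac12\Sigma_R$, same terminal value $0$), and $\Qpro_0^{\HC}=\Qpro_0^{\HR}=\condcovinitial$; so \eqref{Delta_X_RD} at $t=0$ returns exactly $\Delta_X^{\HR}(0)$, and your conclusion would read $\varepsilon^{\HC}=\varepsilon^{\HR}$ and $P_{Exp}^{\HC}=0$ — absurd, and contradicted by the paper's numerics. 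Second, the bookkeeping runs opposite to your claim: already for $d=1$ the term $-\tfrac12 B_{k+1}^{\top}(t_k)\Lambda_k\,\Delta\Qpro_{t_k}^{\HC}\,B_{k+1}(t_k)$ in \eqref{end_Cn} cancels \emph{exactly} the increment of the quadratic part of $\overline{C}^{\HF}$, and $\tfrac12\log\det\Lambda_k$ cancels its $\log\det$ increment, so the recursion-defined correction $-\tfrac1\theta\big[C_k(t)-\overline{C}^{\HF}(t,\Qpro_t^{\HC})\big]$ is \emph{continuous} across $t_k$; by contrast, \eqref{Delta_X_RD} jumps at $t_k$ by $\tfrac12\log\frac{\det(I_d-2\Qpro_{t_k}^{\HC}\xi(t_k))\,\det(I_d-2\Qpro_{t_k-}^{\HC}A^{\HF}(t_k))}{\det(I_d-2\Qpro_{t_k-}^{\HC}\xi(t_k))\,\det(I_d-2\Qpro_{t_k}^{\HC}A^{\HF}(t_k))}\neq 0$, because $\Qpro^{\HC}$ jumps and $\xi(t_k)\neq A^{\HF}(t_k)$ for $t_k<T$. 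Two functions solving the same ODE between information dates with the same value at $T$, one continuous and one jumping, must differ at $t=0$ by the accumulated jumps once $n\ge2$. Hence what your induction actually proves is $\Delta_X^{\HC}(0)=-\tfrac1\theta\big[C_1(0)-\overline{C}^{\HF}(0,\condcovinitial)\big]$, whose closed form is the \eqref{Delta_X_RD}-type expression evaluated with the post-update covariance \emph{plus} one $\log\det$ correction per information date (verbatim agreement holds only for $n=1$); your write-up needs to derive this corrected formula and state that the lemma's pointer to \eqref{Delta_X_RD} must be read in this recursive sense. Finally, for $d\ge2$ even the cancellation above is not free: it requires $\big(I_d-2\Qpro_{t_k}^{\HC}A^{\HF}\big)^{-1}\Lambda_k\,\Delta\Qpro_{t_k}^{\HC}=\big(I_d-2\Qpro_{t_k-}^{\HC}A^{\HF}\big)^{-1}\Delta\Qpro_{t_k}^{\HC}$, which fails when $\varianceexp$, $\Qpro_{t_k-}^{\HC}$ and $A^{\HF}(t_k)$ do not commute, so the multivariate bookkeeping you defer to is genuinely more delicate than the $R$-case computation you invoke.
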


\section{Numerical Results}
\label{numeric_result} In this section we illustrate the theoretical
findings of the previous sections  with results of some numerical
experiments.
\subsection{Model Parameters}
\label{model_parameters} Our experiments are based on a stock
market model where the unobservable drift $(\drift_t)_{t\in[0,T]}$
follows an Ornstein-Uhlenbeck process as given in \eqref{drift} whereas the volatility is known and constant.
For simplicity, we assume that there is only one risky asset in the
market, i.e. $\nAktien=1$.

If not stated otherwise our numerical experiments are based on model parameters as given 
in Table \ref{parameter}.   The parameter  of the utility function is chosen to be negative,  $\theta=-0.3$. This corresponds to an investor which is more risk averse than a log-utility maximizing investor trading the Kelly portfolio. This is the relevant case for most investors. Note that for a negative $\theta$  the optimization problem is always  well-posed, see  our  paper \cite{Gabih et al (2022) Nirvana}. 
\begin{table}[ht]
	\footnotesize
	\begin{tabular}{|l@{\hspace*{+0.2em}}ll|r||@{\hspace*{-0.5em}}lll|r|}
		\hline
		\rule{0mm}{2.5ex}%
		Drift & mean reversion level& $\revlevel$ & $0.05$      &   & Investment horizon & $T$ & $1~$ year
		\\ \hline \rule{0mm}{2.5ex}%
		& mean reversion speed& $\revspeed$ & $3$ &   & Power utility parameter  & $\theta$ & $-0.3$
		\\ \hline
		& volatility  &$\sigma_{\mu}$ \rule[-1.0ex]{0pt}{3.0ex} &$1$ &    &  Volatility of stock & $\volR$ & $0.25$
		\\\hline \rule{0mm}{2.5ex}%
		& mean of $\drift_0$  &   $\driftinitial$  & $\revlevel=0.05$&     &  Volatility of cont.-time experts & $\volexp$ & $0.2$
		\\ \hline
		&variance of $\drift_0$  &   $\covinitial$  & \rule[-1.3ex]{0pt}{4.5ex}$\frac{\sigma_{\drift}^2}{2\revspeed}=0.1\overline{6}$ &     & Variance of discr.-time experts &$\varianceexp$ &$0.4$
		\\ \hline
		Filter &initial value $\Mstate^H_0$ & $\filterinitial$ \rule[-1.1ex]{0pt}{3.5ex} & $\driftinitial=0$  &     &   Number of expert opinions &$\nExperten$ &$10$
		\\ \hline
		&initial value  $\Qpro_0^H$ & $ \condcovinitial$ \rule[-1.1ex]{0pt}{3.5ex}& $\covinitial=0.1\overline{6}$ &     & &    &     \\ \hline
	\end{tabular}
	\\[1ex]
	\centering \caption{\label{parameter}
		Model parameters for the numerical experiments
	}
\end{table}

The mean and variance of the initial value $\drift_0$ of the drift are the parameters of the stationary distribution of the drift process which is known to be Gaussian with mean $\revlevel$ and variance  ${\sigma_{\drift}^2}/{(2\revspeed)}$. Hence the drift process $\drift$ is stationary on $[0,\infty)$ and for the chosen parameters there is a $90\%$ probability for values in the interval   $(-0.75,0.85)$ centered at $\revlevel=0.05$, and a probability around 2/3 for values in $(0.35,0.45)$. The filter processes $\Mpro^H$ and $\Qpro^H$ are also initialized with the stationary mean and variance modeling partially informed investors which only know the model parameters but have no additional prior knowledge about the drift. The volatility $\volexp$ of the continuous-time expert opinions $\contexp$ is chosen as $0.2$ and  slightly smaller than the volatility  $\volR=0.25$ of the return process $R$. Hence the observations of $\contexp$ are more informative than those of the returns $R$.

\subsection{Filter}
\label{Num_Filter}
In this subsection we want to illustrate our theoretical findings on filtering based on different information regimes by results on a simulation study.	
Figure~\ref{fig_num_filter} shows in the top panel simulated paths of the two diffusion type observation processes which are the return process $R$ and the continuous-time expert opinion process $J$  associated to the drift process $\mu$.  The drift process which is not observable by the investors is plotted in the middle panel.  The top panel also presents for comparison the cumulated drift process representing the drift component in the dynamics of both $R$ and $J$, see \eqref{ReturnPro} and \eqref{continuous-expert}. Expert's views $Z_k$ which are noisy observations of the drift process $\drift$ at the information dates and forming the additional information of the $\HC$-investor  are shown as red crosses in the middle panel. From the observed quantities  the filter of the $R$-, $\HD$- and $\HC$-investor are computed in terms of the pair $(\Mpro^H,\Qpro^H)$. For $H=R,\HC,\HD$ the conditional expectations $\Mpro^H$ are plotted against time  in the middle panel while the conditional variances $\Qpro^H$ are shown in the bottom panel. 

\begin{figure}[th]
	\hspace*{-0.05\textwidth}
	\includegraphics[width=1\textwidth]{./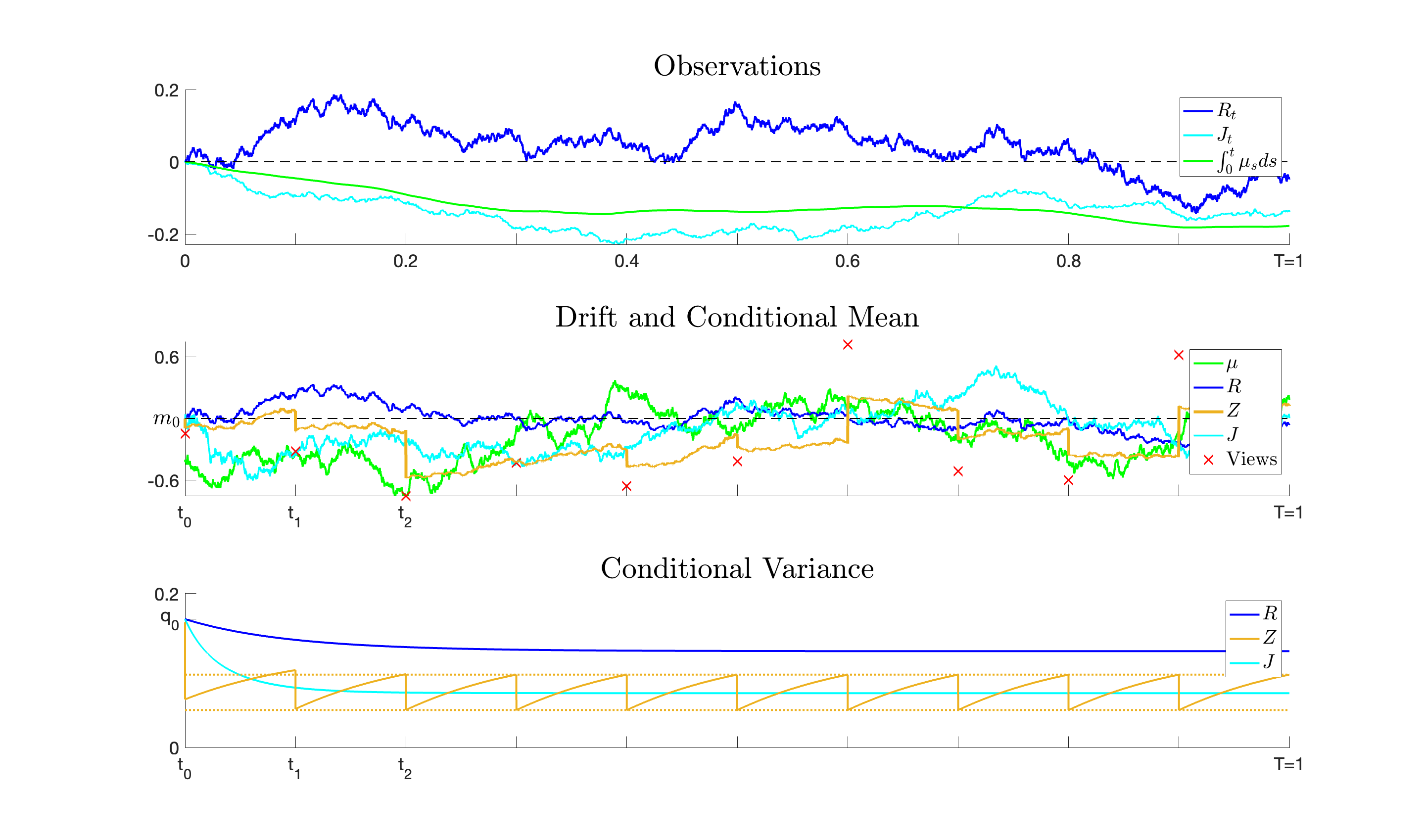} 
	\centering \caption{\label{fig_num_filter}
		Observation and filter processes. ~~ \newline 
		\begin{tabular}[t]{ll}
			Top: & Diffusion-type observation processes $\R$ and $\HD$. \\
			Middle: & Drift $\drift$, expert views $Z_k$ and conditional mean $\Mpro^H$ for $H=R,\HC,\HD$. \\
			Bottom: &Conditional variance $\Qpro^H$. 
		\end{tabular}		
	}
\end{figure}

Recall that $\Qpro^R$ and $\Qpro^\HD$ as well as $\Qpro^\HC$ for any $n\in\N$ are deterministic. In the bottom panel one sees that for any fixed $t\in[0,T]$, the value of $\Qpro^\HD_t$ as well as the value of $\Qpro^\HC_t$ is less or equal than the value of $\Qpro^R_t$. This shows that additional information by the expert opinions improves the accuracy of the filter estimate. It confirms the underlying theoretical result on the partial ordering of the conditional covariance matrices as stated in Proposition \ref{properties_filter}. 
The updates in the filter of the $\HC$-investor at the information dates of the expert's views    decrease
the conditional variance and lead to a jump of the conditional mean $\Mpro^{\HC}$.
These are typically  jumps towards  the hidden drift $\mu$,  of course this
depends on the actual value of the expert's view. Note that the
drift estimate $\Mpro^R$ of the $R$-investor is quite poor and
mostly fluctuates just around the mean-reversion level $\revlevel$.
However, the expert opinions  visibly improve the drift estimate.

For increasing $t$ the conditional variances $\Qpro_t^R$ and $\Qpro_t^\HD$ approach a finite value. An associated convergence result for $t\to \infty$ has been proven in Proposition~4.6 of Gabih et al.~\cite{Gabih et al (2014)} for markets with $d=1$ stock and generalized in Theorem~4.1 of Sass et al.~\cite{Sass et al (2017)} for markets with an arbitrary number of stocks. For the $\HC$-investor  we observe an almost periodic behavior of the conditional variance $(\Qpro^{\HC}_t)_{t\geq 0}$. The asymptotic behavior  for $t\to \infty$ and the derivation of  upper and lower bounds have been studied in detail in Gabih et al.~\cite[Prop.~4.6]{Gabih et al (2014)} for $d=1$ and in  Sass et al.~\cite[Sec.~4.2]{Sass et al (2017)} for the general case.  These bounds are shown as dashed lines in the bottom panel.

\medskip
\begin{figure}[h!]
	\hspace*{-0.05\textwidth}
	\includegraphics[width=0.56\textwidth,height=0.5\textwidth]{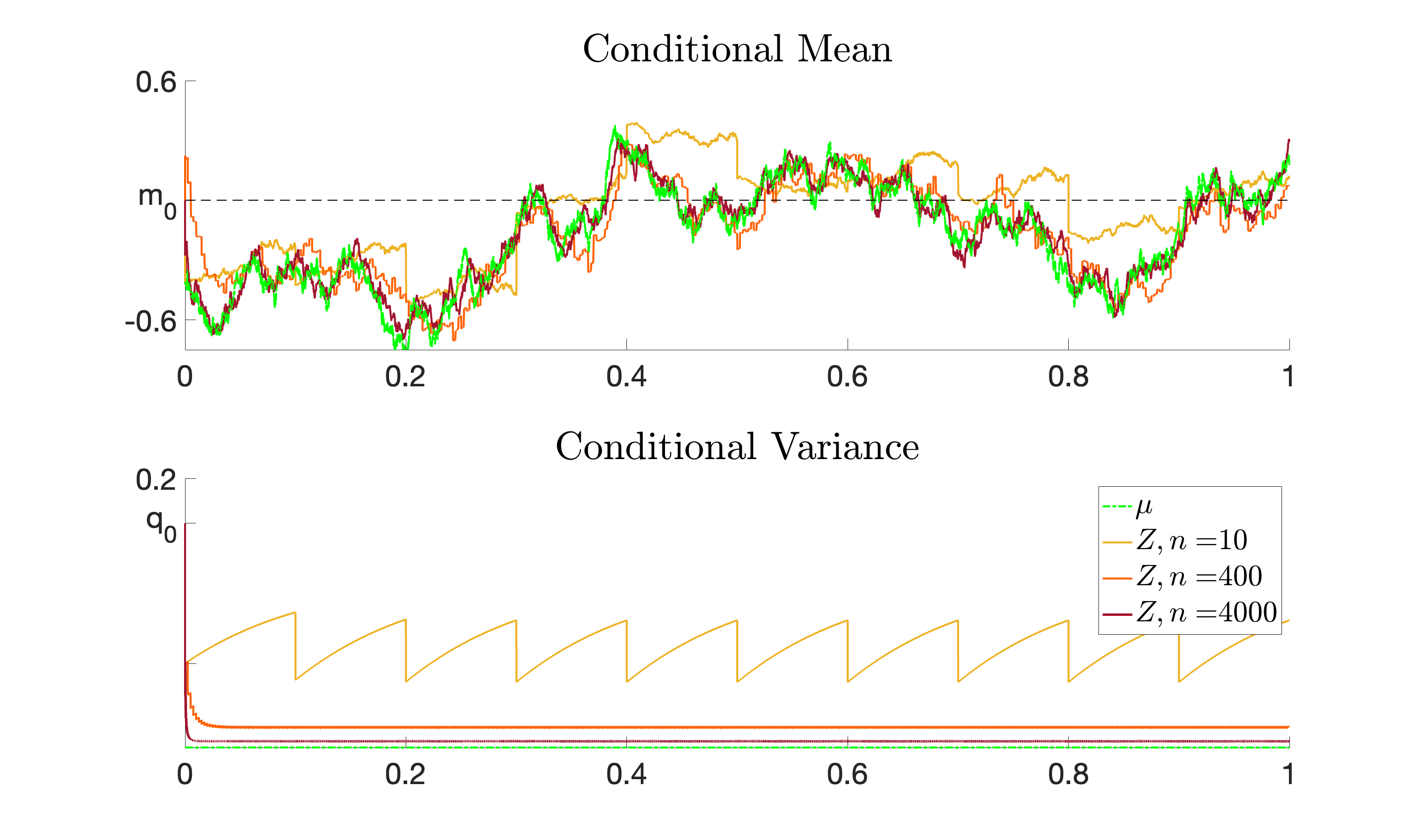} %
	\hspace*{-0.06\textwidth}
	\includegraphics[width=0.56\textwidth,height=0.5\textwidth]{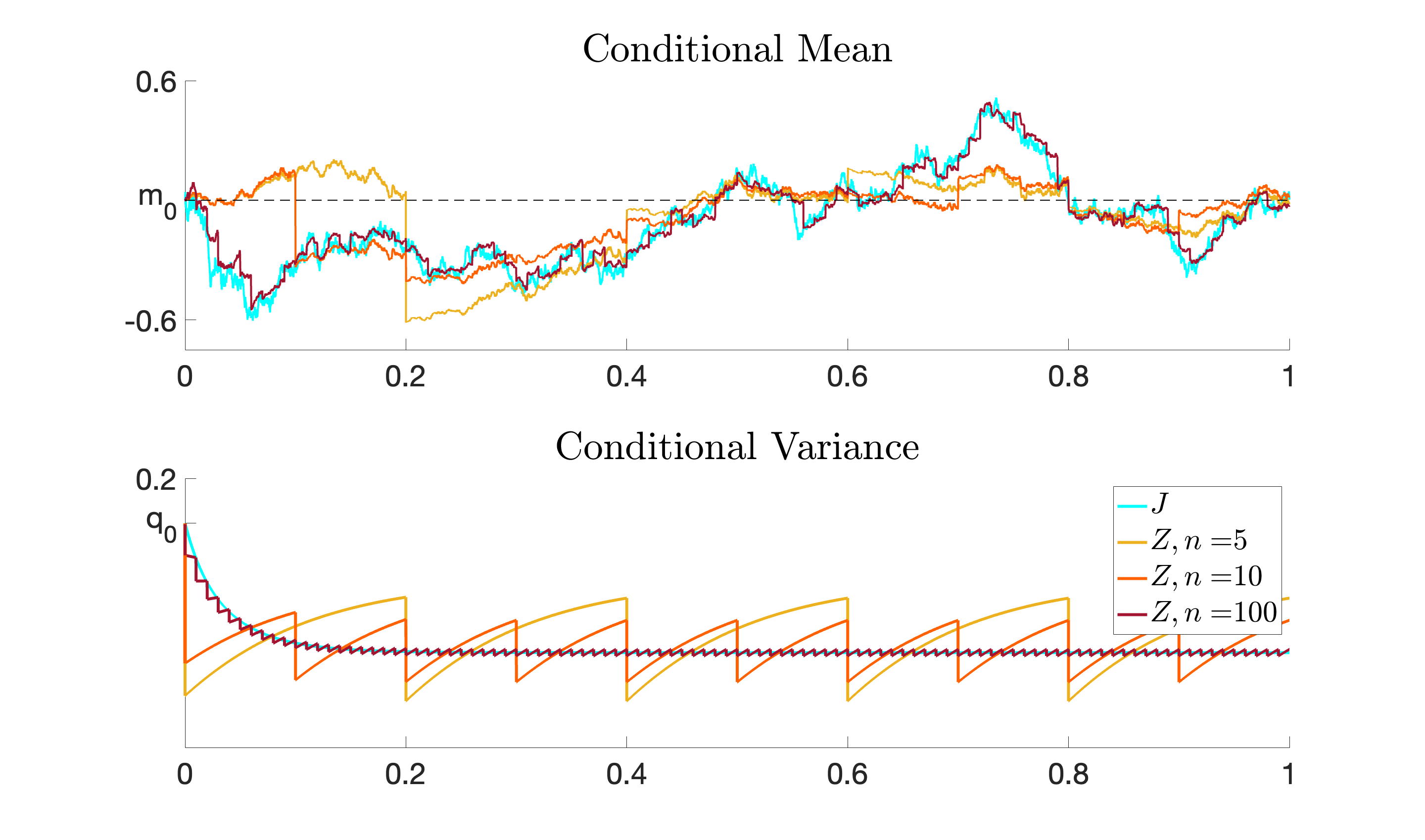} 
	
	\centering \caption{\label{fig_num_filter_to_F}
		Asymptotic behavior of the filter for $n\to \infty$  \newline
		Left: ~ Constant expert's variance $\varianceexp=0.4$, convergence to full information    \newline  
		Right: Linearly growing expert's variance $ \varianceexp^{(n)}=\frac{n}{T}\volexp^2$, diffusion approximation   \newline  
	}
\end{figure}
Next we perform some experiments illustrating the theoretical results from Subsec.~\ref{propertiesfiltersection} on the  asymptotic filter behavior for increasing number $n$ of expert opinions. We distinguish two cases. First, the expert's variance $\varianceexp$ stays constant leading to convergence to full information, i.e.,  mean square convergence of $\Mpro^{Z,n}$ to  $\mu$ and $\Qpro^{Z,n}\to 0$ on $(0,T]$, see Theorem \ref{theorem_Konv_Fn}. Second, that variance grows linearly with $n$ leading to convergence to the filter processes of the $Z$-invester to those of the $J$-investor, see Theorem \ref{theorem_conv_diffusion_appr}. 
For that experiment the expert's views are generated as in \eqref{eq:diff_appr_Z}, i.e., the Gaussian random variables $\varepsilon_k$ in \eqref{Expertenmeinungen_fest}  are linked	with the Brownian motion $W^{\HD}$ from \eqref{continuous-expert} driving the continuous-time expert opinion process $J$.

Fig.~\ref{fig_num_filter_to_F} shows on the left side for the experiment with constant variance $\varianceexp$ the conditional mean $\Mpro^{Z,n}$ and the drift $\mu$ (top) and the conditional variance $\Qpro^{Z,n}$ (bottom) for $n=10,400,4000$. It can be nicely seen that for increasing $n$ the conditional variance tends to zero while the conditional mean approaches the  drift process  for any $t\in(0,T]$. In the limit for $n\to \infty$ the $Z$-investor has full information about the drift process.
The panels on the right side show the results for the experiment with linearly growing variance for which we consider the cases $n=5,10,100$.
It can be seen that the both filter processes $\Mpro^{Z,n}$ and  $\Qpro^{Z,n}$ approach the corresponding processes of the $J$-investor for any $t\in[0,T]$ in accordance with Theorem \ref{theorem_conv_diffusion_appr}. Contrary to the first experiment we observe that this convergence is much faster.

Note that for the chosen parameters from Table \ref{parameter} we have for $n=10$ expert opinions that $\varianceexp=\varianceexp^{(n)}=\frac{n}{T}\volexp^2=0.4$, i.e., the same expert's variances for both experiments. This yields for $n=10$ identical conditional variances as it can be seen  in the two bottom panels. However, the paths of the conditional mean  are different since the expert's views $Z_k$ in the experiment with linearly growing expert's variance  are linked to the Brownian motion $W^{\HD}$, see \eqref{eq:diff_appr_Z}, whereas in the left panels they are not.

\subsection{Value function}
\label{Power_numeric_Kapitel}
In this subsection we present for the case of power utility solutions to the control problem \eqref{Wertfkt_H}. In particular we analyze the value functions $V^H(t,m)$ and the associated optimal decision rules $\Pi^H(t,m)$  for the different information regimes $H$. They are given in Theorems  \ref{kandidate_full},  \ref{Kandidate_diffusion} and  \ref{backward_recursion_II} for $H=F$, $R,J$ and $Z$, respectively. We recall relation  \eqref{Value_Orig_RiskSens} saying that  the solution of the original problem of maximizing expected power  utility  can be obtained from $V^H$ by the relation 
$\mathcal{V}_0^H(x_0)=\frac{x_0^\theta}{\theta} \valuefkt^{H}(0,m_0)$. For $H=F$ the relation also holds true if the initial value $m_0$ of the conditional mean is replaced by the initial value $\drift_0$ of the drift.

\begin{figure}[h]
	\hspace*{-0.1\textwidth}
	\includegraphics[width=1.1\textwidth,height=0.6\textwidth]{./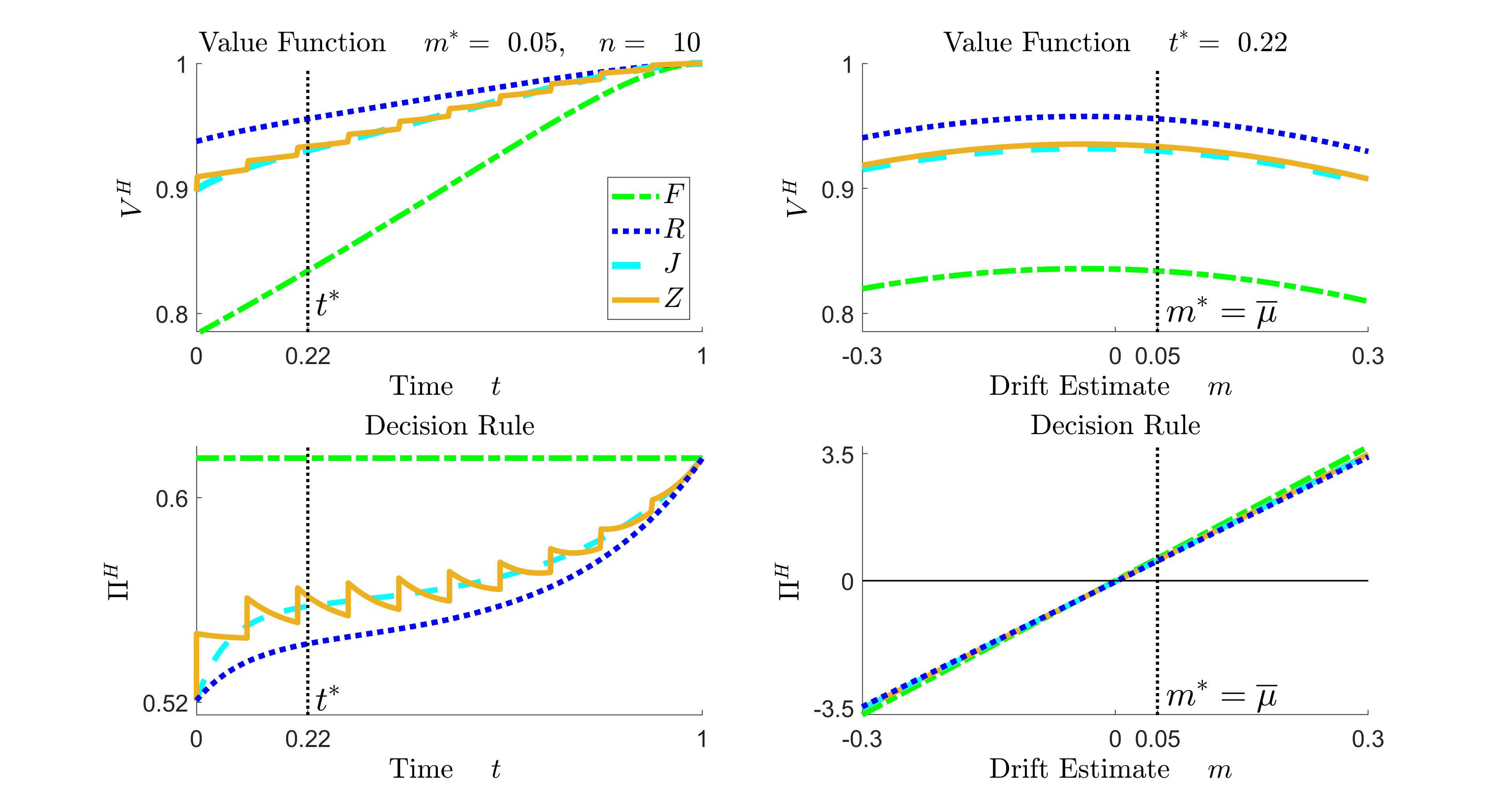}
	
	\centering \caption{\label{value_bild}				
		Value functions and optimal decision rules.
		\newline
		Top: ~~~~ Value functions  $\valuefkt^{H}(t,\zustand)$ for $H=F,\HR,\HD,\HC~ (\nExperten=10)$ depending on $t/\zustand$ (left/right).
		\newline
		Bottom: Optimal decision rule ${\Pi}^H(t,m)$ depending on $t/\zustand$ (left/right).
	}
\end{figure}

Figure \ref{value_bild} shows in the upper part the value functions $V^H(t,m)$ for $H=R,Z,J,F$ plotted against time $t$ (left) for  a fixed value for the  drift estimate  $m= m^{\ast}=0.05$ and plotted against  $m$ (right) for fixed time $t=t^\ast=0.22$. The lower panels show the corresponding decision rules $\Pi^H(t,m)$ for the partially informed investors.

For the value functions we can observe that they  increase with  time $t$ and reach the value $1$ at terminal time $T$  which follows from the definition of the performance criterion in \eqref{Zielfkt_H}. The value function of the $Z$-investor exhibits jumps at the information dates. The upper right panel illustrates that the value functions are exponentials  of a quadratic function of the  drift estimate $m$.
For almost all $(t,m)$ the value function of the fully informed investor  $V^F(t,m)$  is smaller than  those of the partially observed investors.  We recall relation \eqref{Value_Orig_RiskSens} saying that 	$\mathcal{V}_0^H(x_0)=\theta^{-1}x_0^\theta \valuefkt^{H}(0,m_0)$ and that  we work with a negative parameter  of the utility function ($\theta=-0.3$). Hence, order relations for the maximized expected utilities $\E\left[\utility_{\theta}(X_T^{\pi})\mid \mathcal{F}^H_0\right]$ represented by  $\mathcal{V}_0^H$ are reversed to relations for value functions $V^H$. Further,  the value functions of the $Z$- and $J$-investor with access to additional information from the expert opinions   are smaller than the value function of  $R$-investor observing returns only.  We note,  that these relations do not hold in general, except for $t=0$. 

The lower plots show the optimal decision rules $\Pi^H(t,m)$ which are given in \eqref{optimal_str_F}, \eqref{opti_stra_power} and \eqref{opti_stra_power_C_det}. They are all of the form,  see  Remark \ref{rem:myopic} and \eqref{opti_stra_general},
$$ \Pi^{H}(t,\zustand)=\Pi^{\HF}(t,\zustand)+
\frac{1}{1-\theta}\Sigma_{R}^{-1}\Qpro_t^{H}\big(2\,A^H(t)m+B^H(t)\big) ~~ \text{with} ~~
\Pi^{F}(m)=\frac{1}{1-\theta}\Sigma_{\HR}^{-1}\zustand.$$
Here,  $\Pi^F$ constitutes the   myopic decision rule  whereas the correction term $\Pi^H - \Pi^F$ describes  the   drift risk of the partially informed $H$-investor.
All decision rules  are linear in the  drift estimate $m$ as it can be seen from the lower right panel. 
For  drift estimates $m$   much larger (smaller) than the mean reversion level of the drift $\revlevel$ the investors holds a long (short) position in the stock which are smaller (in absolute terms) for the fully informed investor than for partially informed ones.
The figures further show that the drift risk   decreases over time (in absolute terms) and vanishes at terminal time $T$. For $H=Z,J$  it  is smaller than for $H=R$ indicating that more information about the hidden drift leads to  decision rules closer to the  myopic decision rule. This effect is also supported by the observation that the decision rule of the $Z$-investor exhibits jumps at the information dates towards the  myopic decision rule. The arrival of an additional information improves the filter estimate of the hidden drift and decreases the correction term.

We refer to Kondakji
\cite[Sec.~8.3]{Kondkaji (2019)} for results for a positive parameter $\theta$, i.e., a relative risk aversion $1-\theta$ larger than for log-utility. Contrary to   the present case with $\theta=-0.3$ in which  the control problem  \eqref{Wertfkt_H} is a  minimization problem we face a maximization problem for   $\theta>0 $. There are similar results but the monotonicity properties w.r.t.~time $t$ and the ordering of the value function and optimal decision rules for the different information regimes $H$ are reversed.

\subsection{High-Frequency Experts}
\label{Numerik_Konvergenz_Wertfunktion}		
In this subsection we want to study the asymptotic behavior of the value functions and optimal decision rules of the $Z$-investor for growing number $n$ of expert opinions. For the case of log-utility it is known 
that the convergence results for the filters as given in  Theorems \ref{theorem_Konv_Fn} and \ref{theorem_conv_diffusion_appr}  carry over directly to the convergence of value functions. The proof is straightforward and relies on representations of the value function as in \eqref{Log_Utility_Value} in terms of an integral functional of the (deterministic) conditional variance $\Qpro^Z$.

However, in case of power utility that approach can no longer be adopted since
the performance criterion in \eqref{Zielfkt_H}  and consequently the value functions $V^H(t,m)$ are now  given in terms of  expectations of the exponential of  quite involved integral functionals of the  filter processes $\Mstate^H$ 
	under the equivalent measure $\overline{\P}^H$ introduced in Subsec.~\ref{PowerUtility}. Hence, the value functions $V^H(t,m)$ depend on the complete filter distribution, not only on its second-order moments. Further, for power utility the optimal strategies do not depend only on the current drift estimate but contain correction terms depending on the distribution of future drift estimates.
A formal and rigorous proof of convergence of value functions is ongoing work and deferred to a forthcoming publication. It is based  on the $\mathrm{L}^p$-convergence of conditional mean processes for arbitrary $p\ge 2$ as  it can be deduced from  Theorems \ref{theorem_Konv_Fn} and \ref{theorem_conv_diffusion_appr}.

Our numerical results presented below provide a strong support of the convergence of value functions also for power utility. 
As in Subsec.~\ref{propertiesfiltersection} we consider two different asymptotic regimes which are obtained if the expert's variance $\varianceexp$ is  either fixed or grows linear in $n$. In order to emphasize the dependence of the value function and optimal decision rule of the $Z$-investor on $n$ we use the notation ${\valuefkt}^{\HC,n}$  and  ${\Pi}^{\HC,n}$.
\begin{figure}[h]
	\hspace*{-0.1\textwidth}
	\includegraphics[width=1.15\textwidth]{./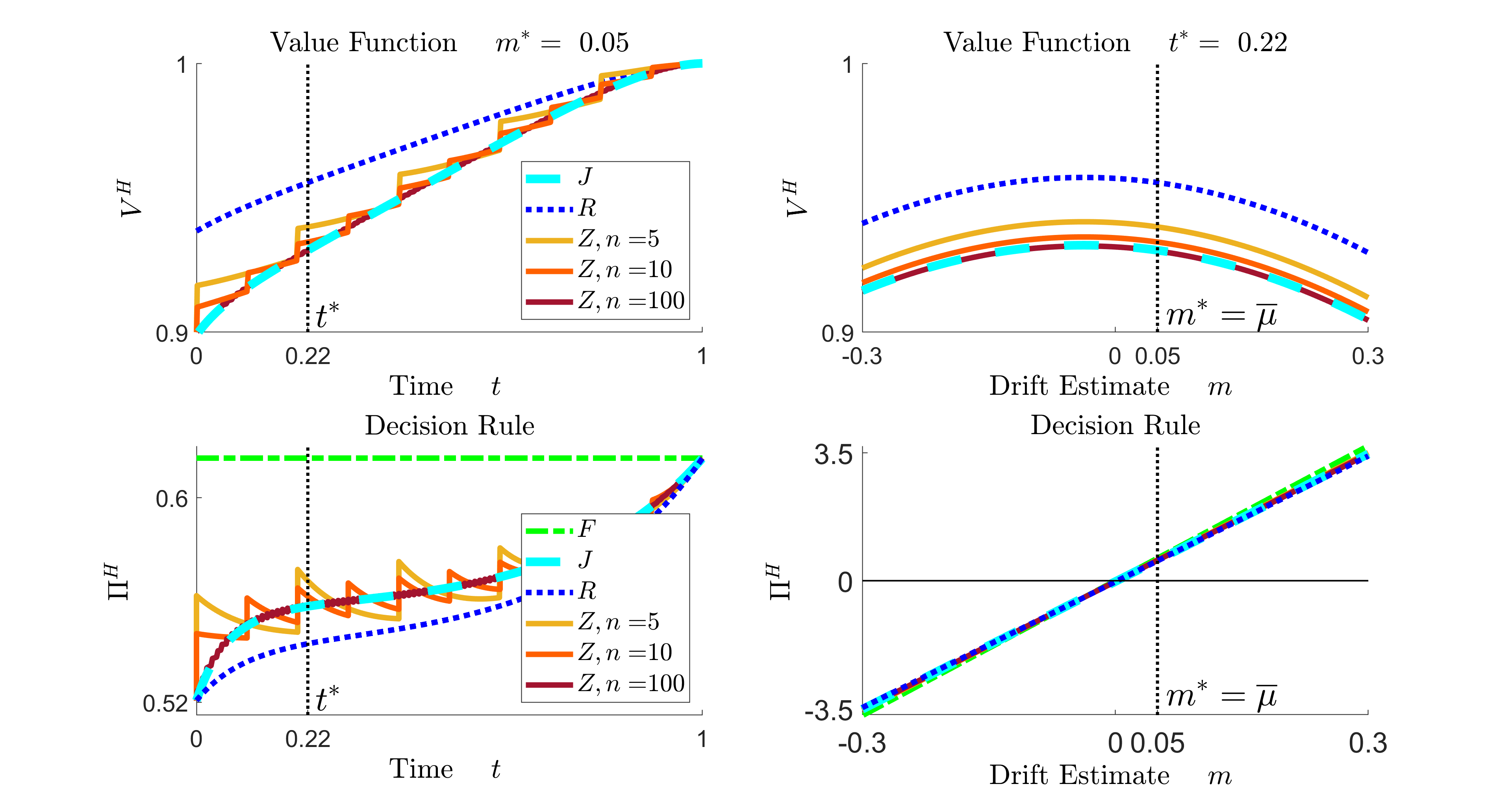}%
	
	\centering \caption{\label{positive_Konvergenz_CnD}
		Asymptotic behavior of value functions and optimal decisions rules for growing $\nExperten$  and $~ \varianceexp^{(n)}=\frac{\nExperten}{T}\volexp^2$.
		\newline
		Top:~~~~~ Value functions $\valuefkt^{H}(t,\zustand)$ depending on $t/\zustand$ (left/right) for $H=\HR,\HD,\HC$
		\newline
		Bottom: Optimal decision rules ${\Pi}^H(t,m)$ for $H=F,R,J,Z$ depending on $t/\zustand$ (left/right).
	}
\end{figure}

Figure \ref{positive_Konvergenz_CnD} presents results of experiments for linearly growing variance $ \varianceexp^{(n)}=\frac{\nExperten}{T}\volexp^2$  for which we have convergence of the filter processes $\Mpro^{Z,n}, \Qpro^{Z,n}$ to the diffusion limit given by filter processes of the $J$-investor observing a continuous-time expert opinion process. The top panels show  the value function ${\valuefkt}^{\HC,n}(t,\zustand)$  while the bottom panels present  the optimal decision rule ${\Pi}^{\HC,n}(t,\zustand)$ of the $Z$-investor observing  $\nExperten=5,10,100$ expert  opinions. For comparison we also show the results for the $R$- and  $J$-investor. For increasing $n$ both ${\valuefkt}^{\HC,n}$ and  ${\Pi}^{\HC,n}$ quickly approach the corresponding quantities of the $J$-investor. This shows that for the chosen parameters quite accurate diffusion approximations of solutions to the control problem for the $Z$-investor  are available already for moderate numbers $n$ of expert opinions. Since the latter require less computational effort this is very helpful for deriving approximations not only for the value functions but also for related quantities such as efficiencies and prices of expert opinions  introduced in Subsec.~\ref{MonetaryValue} and considered in the next subsection. 

\begin{figure}[h!]
	\hspace*{-0.1\textwidth}
	\includegraphics[width=1.15\textwidth]{./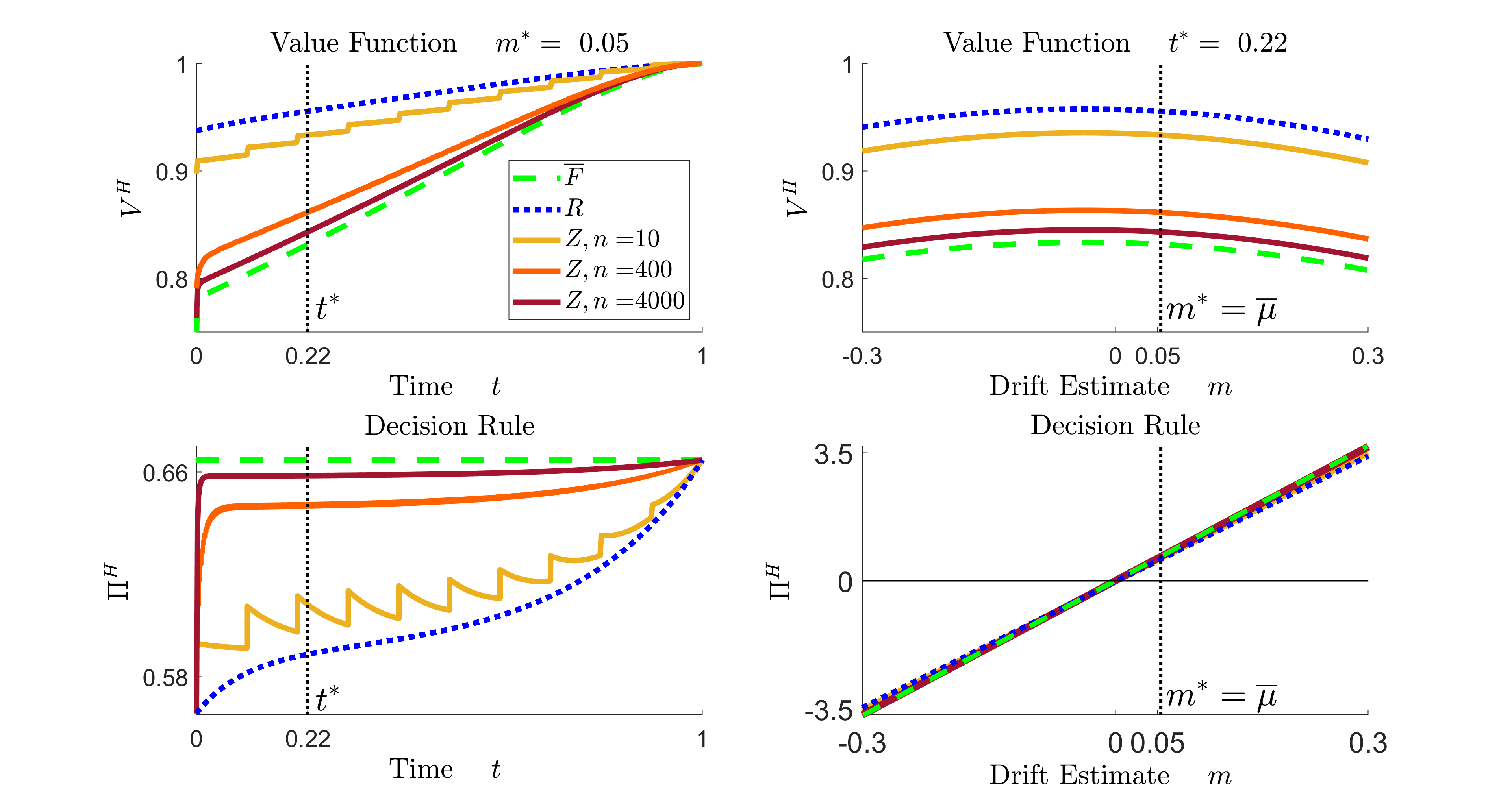}
	
	\centering \caption{\label{positive_Konvergenz_CnF}
		Asymptotic behavior of value functions and optimal decision rules for growing $\nExperten$ and  $~\Gamma=0.4$ (fixed).
		\newline
		Top:~~~~ Value function $\overline{\valuefkt}^{\HF}(t,m,\Qpro_t^{{\HC},{4000}})$ and ${\valuefkt}^H(t,m)$ depending on $t/ \zustand$ (left/right) for  $H=\HR,\HC$.
		\newline
		Bottom:  Optimal decision rules ${\Pi}^H(t,m)$ for $H=F,R,Z$ depending on $t/ \zustand$ (left/rights).
	}
\end{figure}

Fig.~\ref{positive_Konvergenz_CnF} shows results of the experiment with fixed  variance $\varianceexp=0.4$ for which we have convergence to full information, i.e.,  {mean-square convergence of $\Mpro^{Z,n}$ to $\drift$} and  $\Qpro^{Z,n}\to 0$ on $(0,T]$.  As in Fig.~\ref{positive_Konvergenz_CnD} we plot ${\valuefkt}^{Z,n}$ and ${\Pi}^{Z,n}$ against time $t$ and  drift estimate $m$,  but now for  $\nExperten=10,400,4000$.  We expect that 
$\vert {\valuefkt}^{Z,n}(t,\zustand)-\overline{\valuefkt}^{\HF}(t,\zustand,\Qpro_t^{Z,n})\vert$ converges to zero where $\overline{\valuefkt}^{\HF}$ is the conditional expectation of the  value function  $V^F(t,\mu_t)$  of the fully informed investor given $\mathcal{F}_t^{Z,n}$.  That function is introduced in  \eqref{mitteled_VF} and  Lemma \ref{F_mittel} provides a closed-form expression.  The upper panels show for comparison $\overline{\valuefkt}^{\HF}(t,\zustand,\Qpro_t^{Z,4000})$  and also the value function of the $R$-investor while the bottom panels also show the decision rules of the $R$- and $F$-investor.  The latter is independent of time $t$ and defines the  myopic decision rule. We observe that for increasing $n$ the value function and the optimal decision rule  of the $Z$-investor  approach $\overline{\valuefkt}^{\HF}$ and the  myopic decision rule, respectively. However, compared to the case of linearly growing expert's variance (see Fig.~\ref{positive_Konvergenz_CnD}) the convergence is much slower.		
This was already observed in Subsec.~\ref{Num_Filter} for the convergence of filter processes.

We note again that for the chosen parameters  we have for $n=10$ expert opinions that $\varianceexp=\varianceexp^{(n)}=\frac{n}{T}\volexp^2=0.4$.  This yields that for  $n=10$ the  value function and decision rule  for the experiment with linear growing expert's variance $\varianceexp^{(n)}$ coincide with those for the experiment with constant variance $\varianceexp$.

\subsection{Monetary Value of Information}
\label{Monetary-Value_Num}
We conclude this section with some results of experiments illustrating the concepts of  efficiency and price of expert opinions introduced in Subsec.~\ref{MonetaryValue} for the description of the monetary value of information. 

\paragraph{Efficency}
Recall that we followed an utility indifference approach and considered  the initial capital $x_0^{H/\HF}$ which the fully informed $\HF$-investor needs to
obtain the same maximized expected utility at time $T$ as the
partially informed $H$-investor who started at time $0$ with wealth $x_0^H>0$. That wealth is given in Eq.~\eqref{monetaer_Wert} as the solution of the equation  
$\mathcal V_0^{H}(x_0^H)=\E \big[ \mathcal V_0^{\HF}\big(\text{$x$}_0^{H/ \HF}\big)\mid \mathcal F_0^{H}\big]$ for $H=\HR,\HC, \HD$.
The difference $ x_0^H-x_0^{H/ \HF}>0$ describes the loss of information for the partially  informed $H$-investor relative to the $\HF-$investor measured in
monetary units. The ratio $\varepsilon^{H}={x_0^{H/ \HF}}/{ x_0^H} \in (0,1]$
introduced in \eqref{effekt_formel} is a measure for the efficiency of the $H$-investor. We refer to Lemma  \ref{monetary_value_RJ} and \ref{monetary_value_Z} where we give explicit expressions for the above quantities for $H=\HR,\HD$ and $H=\HC$, respectively. 

In Fig.~\ref{fig:efficiency} we compare the efficiencies of the $Z$-investor for increasing $n$ and   parameter of the utility function $\theta=\pm0.3$. In the left panel the expert's variance is kept constant and equal to $\varianceexp=0.4$. Then the $\HC$-investor asymptotically for $n\to \infty$ has full information about the hidden drift. The figure shows that the $\HC$-investor's efficiency increases with $n$ starting with the efficiency of the $R$-investor (blue) and approaching $1$ which is the efficiency of the fully informed investor (green). Note that the investment horizon is $T=1$ year such that arrival of the expert opinions once per year,  month,  week,  day, hour or
minute corresponds to $\nExperten=1, 12, 52, 365, 8.760$ or  $525.600$, respectively. Comparing the efficiencies for different parameters $\theta$ it can be seen that an investor with the positive parameter $\theta=0.3$, i.e.,   less risk averse than the log-utility investor ($\theta=0$),  achieves smaller efficiencies than an investor with the negative parameter $\theta=-0.3$. Note that the latter is   more  risk averse than the log-utility investor. Additional experiments have shown that the efficiency increases with increasing risk aversion $1-\theta$.

\begin{figure}[h!]
	\hspace*{-0.05\textwidth}
	\includegraphics[width=0.56\textwidth,height=0.35\textwidth]{./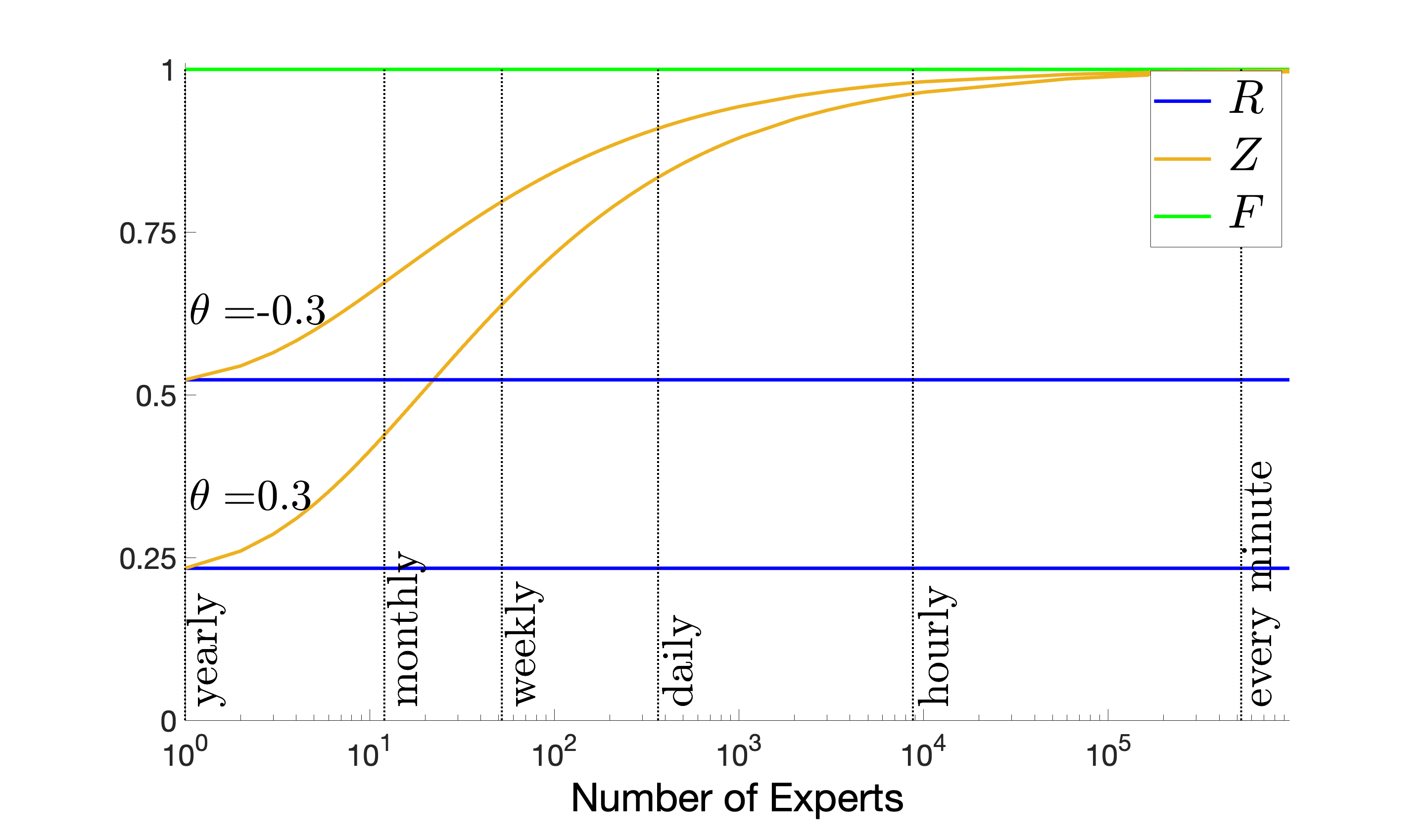}
	\hspace*{-0.06\textwidth}
	\includegraphics[width=0.56\textwidth,height=0.35\textwidth]{./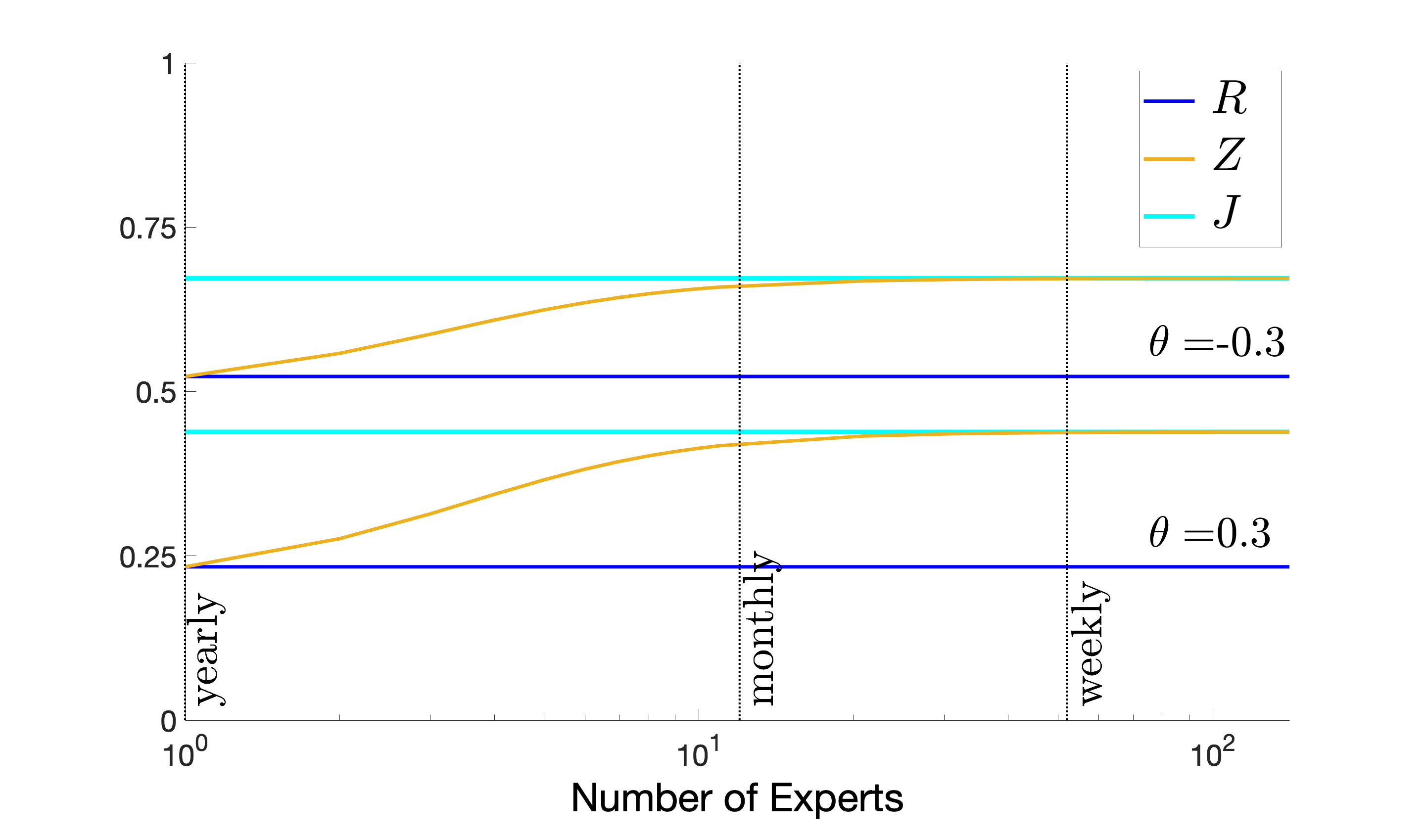}

	\centering \caption{\label{fig:efficiency}
		Efficiency of the $Z$-investor for increasing $n$ and   power utility function with $\theta=\pm0.3$:
		\newline
		Left: ~ Expert's variance $\Gamma= 0.4$ fixed;
		\newline
		Right: Expert's variance $\Gamma^{(n)}=\frac{\nExperten}{T}\volexp^2$ linearly growing.		
	}
\end{figure} 

In the right panel in Fig.~\ref{fig:efficiency} we show results of experiments in which the  expert's variance $\varianceexp$ grows linearly with $n$. In that setting we  expect convergence to the diffusion limit represented by the $\HD$-investor. Here, the $\HC$-investor's efficiency again increases with $n$ starting with the efficiency of the $R$-investor (blue) but now approaches the efficiency  $\varepsilon^{\HD}$ of the $\HD$-investor (light blue) which is less than $1$. As already observed for the value functions in Subsec.~\ref{Numerik_Konvergenz_Wertfunktion} that convergence is much faster than the convergence to full information for fixed $\varianceexp$. The diffusion limit  $\varepsilon^{\HD}$ provides quite accurate approximations for  $\varepsilon^{\HC,n}$ already for $n\approx 50$, i.e., weekly expert's views.

\smallskip

\paragraph{Price of the experts}
\begin{figure}[h]
	\hspace*{-0.05\textwidth}
	
	\includegraphics[width=0.56\textwidth,height=0.35\textwidth]{./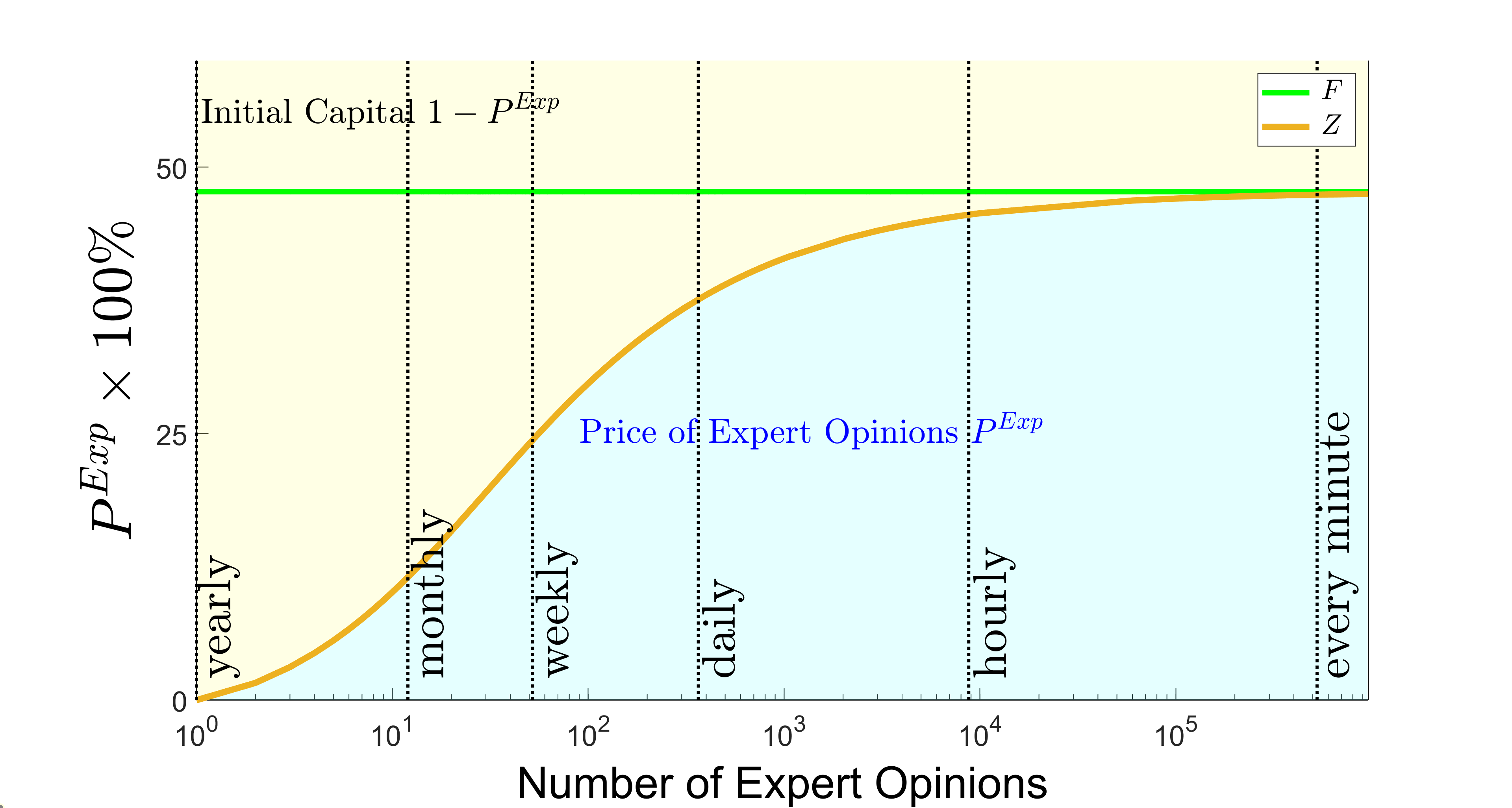}
	\hspace*{-0.06\textwidth}
	\includegraphics[width=0.56\textwidth,height=0.35\textwidth]{./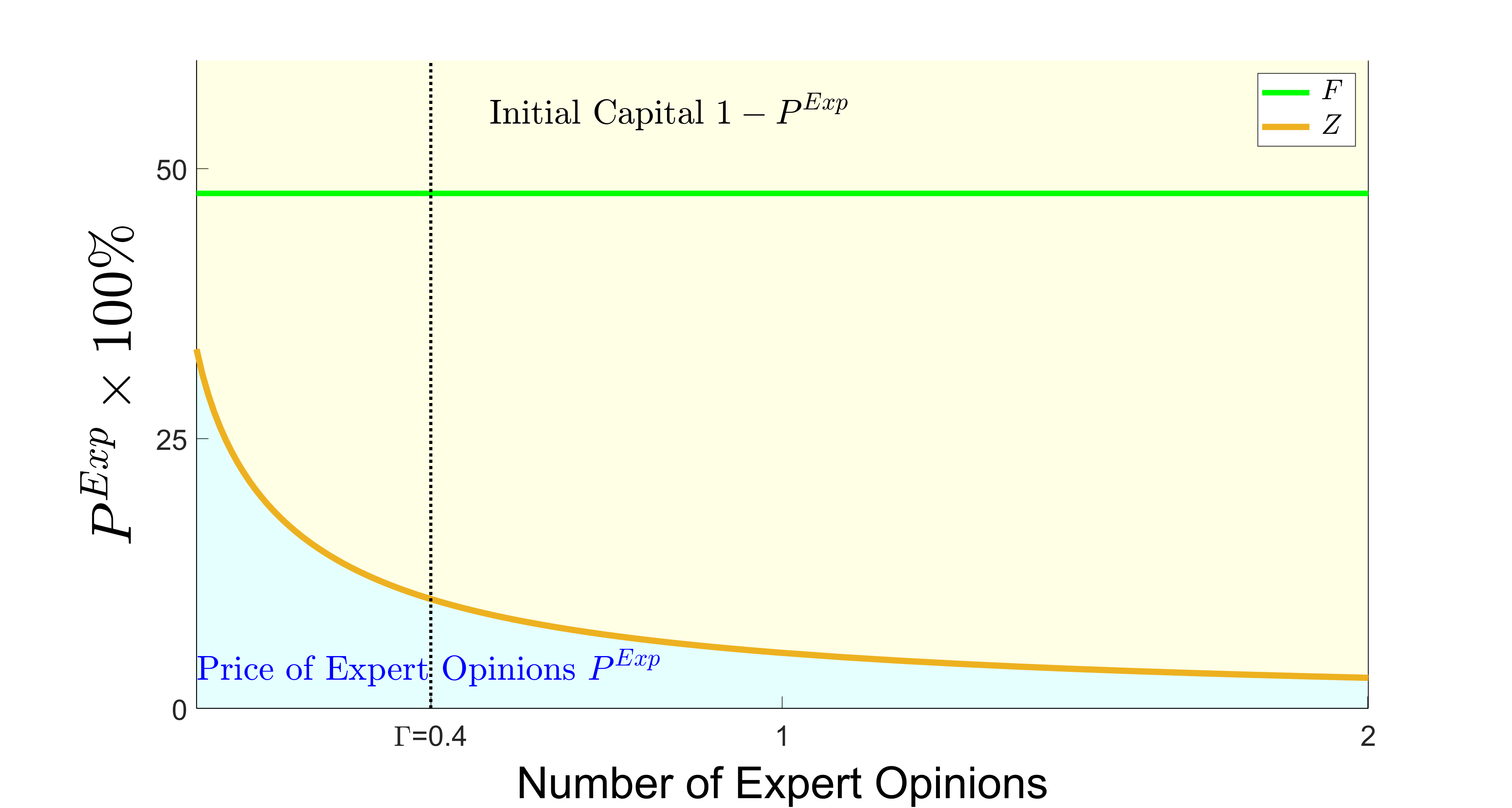}
	
	\centering \caption{\label{fig:price_experts}
		Price of the experts for  power utility with $\theta=-0.3$: 
		\newline
		Left: ~ Increasing number $n$  of expert opinions and expert's variance $\Gamma=0.4$ fixed; 
		\newline
		Right: Increasing expert's variance $\Gamma$ and $n=10$ fixed.				
	}
\end{figure}
In Subsec.~\ref{MonetaryValue} we also used the utility indifference approach to derive a measure for the 
monetary value of the additional information delivered by the experts. The idea was to equate 
the maximum expected utilities  of an  $\HR$-investor who only observes returns
of the $H$-investor for $H=\HC,\HD$. The latter  combines
return observations with information from the experts.  Given
the $\HR$-investor is equipped with initial capital
$x_0^{\HR}>0$ one computes  the initial capital $x_0^{\HR/H}\le x_0^{\HR}$
for the $H$-investor which leads to the same maximum expected
utility, we refer to Eq.~\eqref{initial_cap_H}.	
Then the  $R$-investor could put aside from its initial capital $x_0^{\HR}$  the amount $P_{Exp}^{H}:= x_0^{\HR}-x_0^{\HR/H}\ge 0$ to buy the
information from the expert. The remaining capital  $x_0^{\HR/H}$
is invested in  an $H-$optimal portfolio and providing  the same
expected utility of terminal wealth as the $\HR$-optimal portfolio
starting with initial capital $ x_0^{\HR}$. We refer to Lemma  \ref{monetary_value_RJ} and \ref{monetary_value_Z} where we give explicit expressions for $P_{Exp}^{H}$ for $H=\HD$ and $H=\HC$, respectively.

Fig.~\ref{fig:price_experts} shows the above decomposition of the initial capital of the $R$-investor for $x_0^{\HR}=1$. In the left panel we fix the expert's variance $\varianceexp=0.4$ and plot $P_{Exp}^{\HC,n}$ against $n$. As expected that price increases with the number of expert opinions but for $n\to \infty$, i.e., in the full information limit, the price reaches  a saturation level which is given by $x_0^{\HR}-x_0^{\HR/F}=1-\varepsilon^{\HR}$. 

The right panel shows results for fixed  $n=10$ but growing variance $\varianceexp$. Then the expert's views provide less and less information about the hidden drift leading to a decreasing price $P_{Exp}^{\HC}$ approaching zero for $\varianceexp\to \infty$, i.e., for fully non-informative expert's views. On the other hand in the limiting case for $\varianceexp\to 0 $ at each of the $n=10$ information dates the $\HC$-investor has full information about the drift process.  Note that full information is not available for all for all $t\in(0,T]$ but only at finitely many information dates  and thus  $P_{Exp}^{\HC}$ is for $\Gamma\to 0$ moving towards but not reaching the full information limit $1-\varepsilon^{\HR}$.

\bmhead{Acknowledgments}{ \small The authors thank Dorothee Westphal and Jörn Sass  (TU Kaiserslautern) and  Benjamin Auer (Friedrich Schiller University Jena) for valuable discussions that improved this paper. 



\let\oldbibliography\thebibliography
\renewcommand{\thebibliography}[1]{%
	\oldbibliography{#1}%
	\setlength{\itemsep}{.15ex plus .05ex}%
}
\bibliographystyle{amsplain}

\end{document}